\documentclass[runningheads]{llncs}

\usepackage{xspace}
\usepackage{flushend}

\usepackage{enumitem}
\usepackage{xcolor}
\usepackage{graphicx,epstopdf}
\usepackage{hyperref}

\usepackage{tabularx,booktabs}
\usepackage{multirow}

\usepackage{subcaption}

\usepackage{algorithm}
\usepackage[noend]{algpseudocode}
\usepackage{amsmath}
\usepackage{amssymb}

\usepackage{thmtools}
\usepackage{thm-restate}

\makeatletter
\newcommand*{\rom}[1]{\expandafter\@slowromancap\romannumeral #1@}
\makeatother

\usepackage{setspace}
\usepackage{enumitem}

\usepackage{tikz}
\usetikzlibrary{arrows,automata,positioning,shapes}
\usetikzlibrary{decorations.pathreplacing,calc}


\newcommand{\biportion}{\ensuremath{G[L\cup R]}\xspace}

\newcommand{\MBs}{\mathcal{M_B}}
\newcommand{\numMBs}{M_B}

\newcommand{\OCTMIB}{\texttt{OCT-MIB}\xspace}
\newcommand{\iOCTMIB}{\texttt{OCT-MIB-\rom{2}}\xspace}
\newcommand{\imbea}{\texttt{iMBEA}\xspace}
\newcommand{\nonlex}{\texttt{Enum-MIB}\xspace}

\newcommand{\MICA}{\texttt{MICA}\xspace}
\newcommand{\OCTMICA}{\texttt{OCT-MICA}\xspace}

\newcommand{\moddias}{\texttt{LexMIB}\xspace}

\newcommand{\MM}{\texttt{MakeInd\-Maximal}\xspace}

\newcommand{\MMNI}{\texttt{MakeMaximal}\xspace}
\newcommand{\consensus}{\texttt{Consensus}\xspace}

\newcommand{\AT}{\texttt{AddTo}\xspace}


\newcommand{\AMBpfull}{\textsc{Maximal Induced Biclique Enumeration}\xspace}

\newcommand{\MBpfull}{\textsc{Maximal Biclique Enumeration}\xspace}


\newcommand{\algoref}[1]{\hyperref[#1]{Algorithm~\ref*{#1}}}

\usepackage{xspace}
\usepackage{xargs}
\usepackage{xifthen}
\usepackage{framed}

\newenvironment{tightcenter}
 {\parskip=0pt\par\nopagebreak\centering}
 {\par\noindent\ignorespacesafterend}

\usepackage{ctable}
\newlength{\RoundedBoxWidth}
\newsavebox{\GrayRoundedBox}
\newenvironment{GrayBox}[1]%
{\setlength{\RoundedBoxWidth}{\linewidth-4.5ex}
\def\boxheading{#1}
\begin{lrbox}{\GrayRoundedBox}
\begin{minipage}{\RoundedBoxWidth}%
}{%
\end{minipage}
\end{lrbox}%
\begin{tightcenter}%
\begin{tikzpicture}%
\node(Text)[draw=black!20,fill=white,rounded corners,%
inner sep=2ex,text width=\RoundedBoxWidth]%
{\usebox{\GrayRoundedBox}};
\coordinate(x) at (current bounding box.north west);
\node [draw=white,rectangle,inner sep=3pt,anchor=north west,fill=white]
at ($(x)+(10.5pt,.75em)$) {\boxheading};
\end{tikzpicture}
\end{tightcenter}\vspace{0pt}%
\ignorespacesafterend
}

\algnewcommand\And{\textbf{and}}

\usepackage{graphicx}
%

\begin{document}
\title{Faster Biclique Mining in Near-Bipartite Graphs\thanks{This work was supported by the Gordon \& Betty Moore Foundation's Data-Driven Discovery Initiative under Grant GBMF4560 to Blair D. Sullivan.}}
\author{Blair D. Sullivan \and
Andrew van der Poel \and
Trey Woodlief}
\institute{North Carolina State University, Raleigh NC, 27607, USA\\
\email{\{blair\_sullivan, ajvande4, adwoodli\}@ncsu.edu}}
\maketitle     
\begin{abstract}

Identifying dense bipartite subgraphs is a common graph data mining task. Many applications focus on the enumeration of all maximal bicliques (MBs), though sometimes the stricter variant of maximal induced bicliques (MIBs) is of interest.
Recent work of Kloster et al. introduced a MIB-enumeration approach designed for ``near-bipartite" graphs, where the runtime is parameterized by the size $k$ of an odd cycle transversal (OCT), a vertex set whose deletion results in a bipartite graph.
Their algorithm was shown to outperform the previously best known algorithm even when $k$ was logarithmic in $|V|$. In this paper, we  introduce two new algorithms optimized for near-bipartite graphs - one which enumerates MIBs in time $O(M_I |V| |E| k)$, and another based on the approach of Alexe et al. which enumerates MBs in time $O(M_B |V| |E| k)$, where $M_I$ and $M_B$ denote the number of MIBs and MBs in the graph, respectively.
 We implement all of our algorithms in open-source C++ code and experimentally verify that the OCT-based approaches are faster in practice than the previously existing algorithms on graphs with a wide variety of sizes, densities, and OCT decompositions.\looseness-1

\keywords{bicliques \and odd cycle transversal \and bipartite \and enumeration algorithms \and parameterized complexity}
\end{abstract}

\section{Introduction}

Bicliques (complete bipartite graphs) naturally arise in many data mining applications, including detecting cyber communities~\cite{KUMAR}, data compression~\cite{AGARWAL}, epidemiology~\cite{MUSHLIN}, artificial intelligence~\cite{WILLE}, and gene co-expression analysis~\cite{KAYTOUE,KAYTOUETWO}. In many
settings, the bicliques of interest are \emph{maximal} (not contained in any larger biclique) and/or
\emph{induced} (each side of the bipartition is independent in the host graph), and there
is a large body of literature giving algorithms for enumerating all such subgraphs~\cite{ALEXE,DIAS,EPPSTEIN,LI,MAKINO,MUSHLIN,SANDERSON,ZHANG}. Many of these approaches make strong structural assumptions on the host graph; the case when the host graph is bipartite has been particularly
well-studied, and the \imbea algorithm of Zhang et al. has been empirically established to be state-of-the-art~\cite{ZHANG}. Until recently, the only known non-trivial algorithm for enumerating maximal induced bicliques (MIBs) in general graphs was that of Dias et al. which did so in lexicographic order~\cite{DIAS}. In~\cite{KLOSTER}, Kloster et al. presented a new algorithm for enumerating MIBs in general graphs, \OCTMIB, which extended ideas from \imbea to work on non-bipartite graphs by using an \emph{odd cycle transversal} (OCT set): a set of nodes $O$ such that $G[V\setminus O]$ is bipartite.
This yielded an algorithm with runtime $O(M_I n m n_O^2 3^{n_O/3})$ where $n_O = |O|$, $M_I$ is the number of MIBs in $G=(V,E)$, and $n$ and $m$ denote $|V|$ and $|E|$, respectively. The $3^{n_O/3}$ term arises from \OCTMIB's dependence on the number of maximal independent sets (MISs) in $O$.
In this paper, we give new algorithms for enumerating both MIBs and maximal, not necessarily induced bicliques (MBs) in general graphs.
We first present \iOCTMIB which again leverages odd cycle transversals to enumerate MIBs in time $O(M_I n m n_O)$. In contrast to \OCTMIB, the worst-case runtime of \iOCTMIB is not dependent on the number of MISs in $O$, making it better than \OCTMIB when $n_O \in \omega(1)$. We also give a second algorithm for MIB-enumeration, \nonlex, which has runtime $O(M_I n m)$. \nonlex is essentially a modified version of the algorithm of Dias et al.~\cite{DIAS}, which achieves a faster runtime by dropping the lexicographic output requirement.

In the setting considering non-induced bicliques, the state-of-the-art approach is \MICA of Alexe et al.~\cite{ALEXE}. \MICA employs a consensus mechanism to iteratively find maximal bicliques by combining them together, resulting in an $O(M_B n^3)$ algorithm, where $M_B$ is the number of MBs. We introduce a new algorithm \OCTMICA which leverages odd cycle transversals and runs in $O(M_B(n^2 n_O + m n))$ time.

Since all graphs have OCT sets (although they can be size $O(n)$, as in cliques),
\OCTMIB, \iOCTMIB, and \OCTMICA can all be run in the general case; their correctness does not require minimality or
optimality of the OCT set.
Further, we implement \iOCTMIB, \nonlex and \OCTMICA in open source C++ code, and evaluate their performance on a suite of synthetic graphs with known OCT decompositions. Our experiments show that \OCTMICA and \iOCTMIB are the dominant algorithms for their respective problems in many settings. Their efficiencies allow us to run on larger graphs than in~\cite{KLOSTER}.

We begin with preliminaries and a brief discussion of related work in Section~\ref{sec:prelims}, then describe each of our three new algorithms and provide proofs of their correctness and runtimes in Section~\ref{sec:algs}. We highlight several implementation details in Section~\ref{sec:implementation}, before presenting our experimental evaluation in Section~\ref{sec:experiments}.

\section{Preliminaries}\label{sec:prelims}
\subsection{Related work} \label{sec:related-work}

The complexity of finding bicliques is well-studied, beginning with the results of Garey and Johnson~\cite{GAREY} which establish that in bipartite graphs, finding the largest balanced biclique is NP-hard but the largest biclique can be found in polynomial time.
Particularly relevant to the mining setting, Kuznetsov showed that enumerating MBs in a bipartite graph is \#P-complete~\cite{KUZNETSOV}. Finding the biclique with the largest number of edges was shown to be NP-complete in general graphs~\cite{YANNAKAKIS}, but the case of bipartite graphs remained open for many years. Several variants (including the weighted version) were proven NP-complete in~\cite{DAWANDE} and in 2000, Peeters finally resolved the problem, proving the edge maximization variant is NP-complete in bipartite graphs~\cite{PEETERS}.

For the problem of enumerating MIBs, the best known algorithm in general graphs is due to Dias et al.~\cite{DIAS}; in the non-induced setting, approaches include a consensus algorithm \MICA~\cite{ALEXE}, an efficient algorithm for small arboricity~\cite{EPPSTEIN}, and a general framework for enumerating maximal cliques and bicliques~\cite{GELY}, with \MICA the most efficient among them, running in $O(M_B n^3)$.
We note that, as described, the method in~\cite{DIAS} may fail to enumerate all MIBs; a modified, correct version was given in~\cite{KLOSTER}.

There has also been significant work on enumerating MIBs in bipartite graphs.
We note that since all bicliques in a bipartite graph are necessarily induced, non-induced solvers for general graphs (such as \MICA) can be applied, and have been quite competitive.
The best known algorithm however, is due to Zhang et al.~\cite{ZHANG} and directly exploits the bipartite structure.
Other approaches in bipartite graphs include frequent closed itemset mining~\cite{LI} and transformations to the maximal clique problem~\cite{MAKINO}; faster algorithms are known when a lower bound on the size of bicliques to be enumerated is assumed~\cite{MUSHLIN,SANDERSON}.

Kloster et al.~\cite{KLOSTER} extended techniques for bipartite graphs to the general setting using
odd cycle transversals, a form of ``near-bipartiteness'' which arises naturally in many applications~\cite{GULPINAR,PANCONESI,SCHROOK}. This work resulted in \OCTMIB, an algorithm for enumerating MIBs in a general graph, parameterized by the size of a given OCT set.
Although finding a minimum size
OCT set is NP-hard, the problem of deciding if an OCT set with size $k$ exists is fixed parameter tractable (FPT) with algorithms in~\cite{LOKSHTANOV} and~\cite{IWATA} running in times $O(3^k k m n)$ and $O(4^k n)$, respectively. We note non-optimal OCT sets only affect the runtime (not correctness) of our algorithms,
allowing us to use heuristic solutions. Recent implementations~\cite{GOODRICH} of a heuristic ensemble alongside algorithms from ~\cite{AKIBA,HUFFNER} alleviate concerns about finding an OCT decomposition creating a barrier to usability.
\subsection{Notation and terminology}

Let $G = (V,E)$ be a graph; we set $n = |V|$ and $m = |E|$. We define $N(v)$ to be the neighborhood of $v \in V$ and write $\overline{N}(v)$ for $v$'s non-neighbors.
An independent set $T \subseteq V(G)$ is a \emph{maximal independent set} (MIS) if $T$ is not contained in any other independent set of $G$.  Unless otherwise noted, we assume without loss of generality that $G$ is connected.

A biclique $A \times B$ in a graph $G = (V,E)$ consists of non-empty disjoint sets $A, B \subset V$ such that every vertex of $A$ is neighbors with every vertex of $B$.
We say a biclique $A \times B$ is \emph{induced} if both $A$ and $B$ are independent sets in $G$.
A \emph{maximal biclique} (MB) in $G$ is a biclique not properly contained in any other; a \emph{maximal induced biclique} (MIB) is analogous among induced bicliques. We use $M_B$ and $M_I$ to denote the number of MBs and MIBs in $G$, respectively.
If $O$ is an OCT set in $G$, we denote the corresponding OCT decomposition of $G$ by $G[L,R,O]$, where the induced subgraph $G[L \cup R]$ is bipartite. We write
$n_L, n_R,$ and $n_O$ for $|L|, |R|,$ and $|O|$, respectively.

\section{Algorithms}\label{sec:algs}
In this section we provide three novel algorithms, two of which of solve \AMBpfull (\nonlex and \iOCTMIB) and the other of which solves \MBpfull (\OCTMICA). Both \nonlex and \iOCTMIB follow the same general framework, which we now describe.

\subsection{MIB Algorithm Framework}\label{sec:mibfk}
The MIB-enumeration algorithms both use two subroutines, \MM and \AT.
\MM takes in $(C, S)$, where $C$ is an induced biclique and $S \subseteq V$, and either returns a MIB $C^+$ where $C \subseteq C^+$, $C^+ \subseteq C \cup S$, $C \neq \emptyset$, or returns $\emptyset$. If it returns $\emptyset$ and $C \neq \emptyset$ then there is another MIB $D$ which contains $C$ and $v \in (V \setminus S) \setminus C$.
\AT takes in $(C, v)$ where $C=C_1 \times C_2$ is an induced biclique and $v \in V \setminus (C_1 \cup C_2)$, and returns the induced biclique where $v$ is added to $C_1$, $N(v)$ is removed from $C_1$, and $\overline{N}(v)$ is removed from $C_2$ if $C_2 \cap {N}(v) \neq \emptyset$; otherwise,  $\emptyset$ is returned. Both \MM and \AT operate in $O(m)$ time.
We defer algorithmic details and proofs of the complexity and correctness for these routines to the Appendix.

The MIB-enumeration framework (shown in Algorithm~\ref{pc:MIB-frwk}) begins by finding a seed set of MIBs $\mathcal{C}_S$. At a high level, it operates by attempting to add vertices from the designated set $I_S$ to previously found MIBs to make them maximal. We utilize a dictionary $\mathcal{D}$ to track which MIBs have already been found and a queue $\mathcal{Q}$ to store bicliques which have not yet been explored.  We now prove two technical lemmas used to show the correctness of this framework.

\begin{algorithm}[t]
\begin{algorithmic}[1]
	\State Input: {$G=(V,E)$, $I_S$}

	\State $\mathcal{C}_S =$ \texttt{FindSeedSet}$(G)$	\Comment{set of initial MIBs}
	\State Add each $C \in \mathcal{C}_S$ to $\mathcal{D}$ and $\mathcal{Q}$
	\While{$\mathcal{Q}$ is not empty}
		\State $X \times Y \leftarrow ~$\texttt{pop}$(Q)$\label{alg1:xy}
		\For{$j \in I_S \setminus (X \cup Y)$}\label{alg1:forj}
			\State $C_1 = \AT(X \times Y, j)$
			\State $C_1' = \MM(C_1, I_S)$
			\If {$C_1'$ is not in $\mathcal{D}$}
				\State Add $C_1'$ to $\mathcal{D}$ and $\mathcal{Q}$
			\EndIf
			\State $C_2 = \AT(Y \times X, j)$
			\State $C_2' = \MM(C_2, I_S)$
			\If {$C_2'$ is not in $\mathcal{D}$}
				\State Add $C_2'$ to $\mathcal{D}$ and $\mathcal{Q}$
			\EndIf
		\EndFor
	\EndWhile
	\State\Return{$\mathcal{D}$}
\end{algorithmic}
\caption{MIB-enumeration algorithm framework} \label{pc:MIB-frwk}
\end{algorithm}

\begin{lemma}\label{lem1}
Let $X \times Y$ be a MIB in graph $G$ which contains a non-empty subset of $R \times S$, another MIB in $G$. Running {\AT} with parameters $X \times Y$ and $v \in R \setminus (X \cup Y)$ returns a biclique which contains $R \cap X$, $S \cap Y$, and $v$ if $Y \cap N(v) \neq \emptyset$.
\end{lemma}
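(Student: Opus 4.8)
The plan is to unwind the definition of \AT on the pair $(X\times Y, v)$ and then verify the three claimed memberships one at a time, in each case appealing to exactly one structural property of the MIB $R\times S$.

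First I would record what \AT actually outputs. Writing $C = X\times Y$ so that $C_1 = X$ and $C_2 = Y$, the routine forms the first side $(X\setminus N(v))\cup\{v\}$ and the second side $Y\setminus\overline{N}(v) = Y\cap N(v)$, returning their product precisely when $Y\cap N(v)\neq\emptyset$. Since that nonemptiness is exactly our hypothesis, \AT returns the induced biclique $\big((X\setminus N(v))\cup\{v\}\big)\times\big(Y\cap N(v)\big)$, and the membership of $v$ in its first side is immediate from construction.

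Next I would handle $R\cap X$. The key observation is that $R$ is an independent set, because $R\times S$ is an induced biclique; since $v\in R$, no vertex of $R$ is a neighbor of $v$, i.e.\ $R\cap N(v) = \emptyset$. Hence deleting $N(v)$ from $X$ removes nothing in $R\cap X$, so $R\cap X\subseteq X\setminus N(v)$, which lies in the first side of the output. Symmetrically, for $S\cap Y$ I would use that $R\times S$ is a biclique: every vertex of $R$ is adjacent to every vertex of $S$, and $v\in R$ gives $S\subseteq N(v)$. Thus $S\cap Y\subseteq Y\cap N(v)$, the second side of the output. Combining these three facts establishes that the returned biclique contains $R\cap X$, $S\cap Y$, and $v$.

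The only real subtlety, and thus the step I would be most careful about, is translating the informal description of \AT (``remove $\overline{N}(v)$ from $C_2$'') into the set equality $Y\setminus\overline{N}(v) = Y\cap N(v)$, and then correctly pairing each structural fact with the correct side: independence of $R$ controls survival of $R\cap X$ on the first side, while completeness between $R$ and $S$ controls survival of $S\cap Y$ on the second side. The hypothesis that $X\times Y$ overlaps $R\times S$ is not needed for the set containments themselves; it merely guarantees the conclusion is non-vacuous (and, when $S\cap Y\neq\emptyset$, it in fact forces $Y\cap N(v)\neq\emptyset$ automatically).
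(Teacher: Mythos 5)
Your proof is correct and takes essentially the same approach as the paper's: independence of $R$ (because $R\times S$ is induced) ensures $R\cap X$ survives the removal of $N(v)$ from the first side, while completeness between $R$ and $S$ gives $S\subseteq N(v)$, so $S\cap Y$ survives on the second side, with $v$ present by construction. Your version merely makes explicit the set identities implicit in the paper's one-line argument (e.g.\ $Y\setminus\overline{N}(v)=Y\cap N(v)$), and your closing remark that the overlap hypothesis is only needed for non-vacuity is a harmless extra observation.
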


\begin{proof}
By construction, $v$ must be independent from $R$ and completely connected to $S$. Thus, none of $R \cap X$ will be removed from $X$ and all of $S \cap Y$ will remain in $Y$, as required. Therefore, as long as $Y \cap {N}(v) \neq \emptyset$, the desired biclique is returned.
\end{proof}

\begin{lemma}\label{lem2}
In Algorithm~\ref{pc:MIB-frwk}, if there exists a MIB $A' \times B'$ in $\mathcal{D}$ such that $A \setminus I_S \subseteq A'$, $B \setminus I_S \subseteq B'$ and $(A\cup B) \cap (A' \cup B') \neq \emptyset$, for each MIB $A \times B$ in $G$, then all MIBs in $G$ are included in $\mathcal{D}$.
\end{lemma}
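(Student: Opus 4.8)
The plan is to fix an arbitrary MIB $A \times B$ of $G$ and show directly that $A \times B \in \mathcal{D}$, using the hypothesis only to supply a starting ``friend.'' For a MIB $A' \times B' \in \mathcal{D}$ satisfying the three stated conditions with respect to $A \times B$, define its \emph{gap} to be $|(A \cup B)\setminus(A'\cup B')|$; by hypothesis the set of such friends is non-empty, so I may choose one, $A'\times B'$, of minimum gap $t$. The claim then reduces to showing $t = 0$: once $A\cup B \subseteq A'\cup B'$, the containment conditions $A\setminus I_S \subseteq A'$, $B\setminus I_S\subseteq B'$ together with the side-orientation argument below give $A\subseteq A'$ and $B\subseteq B'$, whence $A\times B \subseteq A'\times B'$ as induced bicliques and maximality of $A\times B$ forces $A'\times B' = A\times B \in \mathcal{D}$.

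To show $t=0$ I argue by contradiction, producing a friend of strictly smaller gap. Suppose $t>0$ and pick $j \in (A\cup B)\setminus(A'\cup B')$. Because $A\setminus I_S\subseteq A'$ and $B\setminus I_S\subseteq B'$, any such $j$ must lie in $I_S$; say $j \in A$ (the case $j\in B$ is symmetric and is handled by the \AT$(Y \times X, \cdot)$ branch of Algorithm~\ref{pc:MIB-frwk}). Since every element inserted into $\mathcal{D}$ is also enqueued and the while-loop runs until $\mathcal{Q}$ empties, $A'\times B'$ was eventually popped and its for-loop ranged over $I_S\setminus(A'\cup B')\ni j$. At that iteration the algorithm computed \AT$(A'\times B', j)$; by Lemma~\ref{lem1} (applied with $R\times S = A\times B$) this returns a biclique containing $A\cap A'$, $B\cap B'$, and $j$, after which \MM extends it to a MIB $C'$ with $C' \supseteq (A\cap A')\cup(B\cap B')\cup\{j\}$ on the correct sides, and $C'$ was added to $\mathcal{D}$. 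Thus $C'$ is again a friend of $A\times B$, but its gap is at most $|[(A\cup B)\setminus(A'\cup B')]\setminus\{j\}| = t-1$, contradicting the minimality of $t$.

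The main obstacle is the orientation bookkeeping hidden in the two appeals above: I must rule out ``crossed'' vertices, i.e.\ show $A\cap B' = \emptyset$ and $B\cap A' = \emptyset$, so that $(A\cup B)\cap(A'\cup B') = (A\cap A')\cup(B\cap B')$ and the gap genuinely drops by one. This should follow from the independence of the two sides: if $u\in A\cap B'$ while some $w\in A\setminus I_S\subseteq A'$ exists, then $u,w\in A$ are non-adjacent, yet $u\in B'$, $w\in A'$ are adjacent, a contradiction (and symmetrically using $B\setminus I_S\subseteq B'$, and for $B\cap A'$). The one place this reasoning breaks is the degenerate case $A\cup B\subseteq I_S$, where neither containment supplies a witness $w$; I expect to dispatch this separately, either by observing that such fully-$I_S$ MIBs are captured directly through the seed phase and the guaranteed overlap vertex, or by noting they do not arise for the $I_S$ used in \iOCTMIB and \nonlex. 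I also anticipate a minor verification that the hypothesis $Y\cap N(v)\neq\emptyset$ of Lemma~\ref{lem1} holds here---since $j\in A$ is adjacent to all of $B$, and $B\cap B'$ is non-empty in the non-degenerate case---so that \AT does not return $\emptyset$.
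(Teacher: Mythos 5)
There is a genuine gap, and it sits exactly where the paper's proof does its real work: you assume \MM always ``extends [the output of \AT] to a MIB $C'$ \dots added to $\mathcal{D}$.'' But by its specification, \MM$(C, I_S)$ may return $\emptyset$, namely whenever the biclique can be extended by a vertex \emph{outside} $I_S$ --- a case that genuinely occurs in \iOCTMIB, where $I_S = O$. When \MM returns $\emptyset$, the iteration you examine adds nothing to $\mathcal{D}$, and your minimum-gap descent stalls: you have no friend of gap $t-1$ and no contradiction. This failure cannot be patched within your setup, because you fixed an \emph{arbitrary} MIB $A \times B$; the only thing the $\emptyset$ return guarantees is the existence of some MIB $D$ containing $C$ and a vertex of $V \setminus I_S$, and you have no way to conclude $D \in \mathcal{D}$. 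The paper's proof is built precisely to handle this: it chooses the counterexample $A \times B$ extremally, maximizing $|(A \cup B) \setminus I_S|$ over all MIBs not in $\mathcal{D}$. Then the witness $D$ contains $(A \setminus I_S) \cup (B \setminus I_S)$ plus a vertex of $V \setminus I_S$, hence has strictly more non-$I_S$ vertices than any counterexample, hence is already in $\mathcal{D}$ and serves as the next friend in the descent. Your inner loop (pop the friend, apply Lemma~\ref{lem1} at the iteration $j = v$, make maximal, gap drops by one) otherwise matches the paper's; it is the missing outer extremal choice that is the key idea.

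Two secondary points. Your worry about crossed vertices is legitimate (the paper is terse here), and your resolution is essentially right: if $u \in A \cap B'$ and $w \in A \setminus I_S \subseteq A'$ existed, then $u, w \in A$ are non-adjacent while $u \in B'$, $w \in A'$ are adjacent --- and one can push this further (using independence of $A'$ and $B'$) to show a crossed intersection forces $A \cup B \subseteq I_S$, in which case the containment conditions are vacuous and you may simply relabel the friend's sides to restore alignment. However, your proposed dispatch of that degenerate case is wrong on both counts: for \nonlex one has $I_S = V$, so \emph{every} MIB satisfies $A \cup B \subseteq I_S$, and for \iOCTMIB MIBs entirely inside $O$ do arise (the paper's correctness argument for \iOCTMIB treats them explicitly); moreover, appealing to the seed phase is out of scope here --- Lemma~\ref{lem2} is a statement about the framework given only the friend hypothesis, and it is Corollary~\ref{cor1} that connects seeds to that hypothesis. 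The relabeling fix above is what you want; the seed-set escape hatch is not available inside this lemma.
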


\begin{proof}
Assume not.  Let $A \times B$ be a MIB in $G$ which is not in $\mathcal{D}$ with $|(A \cup B) \setminus I_S|$ maximum. Let $A' \times B'$ be the MIB in $\mathcal{D}$ such that $A \setminus I_S \subseteq A'$, $B \setminus I_S \subseteq B'$ and $(A\cup B) \cap (A' \cup B') \neq \emptyset$ and let $v \in ((A \cup B) \setminus (A' \cup B')) \subseteq I_S$. Without loss of generality assume $B \cap B' \neq \emptyset$ and $v \in A$.

Consider the iteration of Algorithm~\ref{pc:MIB-frwk} when $X \times Y = A' \times B'$ and $j = v$ (lines~\ref{alg1:xy}-\ref{alg1:forj}).  By Lemma~\ref{lem1}, one of the calls to \AT returns an induced biclique $C$ which contains $A \setminus I_S$, $B \setminus I_S$, and $v$. Both sides of $C$ are non-empty (since $B \cap B' \neq \emptyset$ and $v \in A$).
If $C = A \times B$ we obtain a contradiction, as \MM$(C, I_S)$ would return $C$, resulting in its addition to $\mathcal{D}$. Otherwise, either \MM returns $\emptyset$ or a biclique $C' = A' \times B'$ which is added to $\mathcal{D}$. Since both sides of $C$ are nonempty, if \MM returns $\emptyset$, there exists a MIB in $G$ containing $C$ and $x \in (V \setminus I_S) \setminus C$ . Let $A' \times B'$ be such a MIB; since it has more vertices in $V \setminus I_S$ than $C$, it must be in $\mathcal{D}$, and we set $C' = A' \times B'$. In either case, $C \subseteq (A' \cup B')$, $|(A \cup B) \setminus (A' \cup B')| < |(A \cup B) \setminus (X \cup Y)|$. We can repeat this argument for the new $A' \times B'$, noting that $(A \cup B) \cap (A' \cup B')$ will include vertices on both sides. Thus, the argument still holds without any assumption on the non-empty side of the intersection and $|(A \cup B) \setminus (A' \cup B')|$ will strictly decrease; when it reaches 0, $A' \times B'  = A \times B$, a contradiction.
\end{proof}

Note that as \MM only returns MIBs, this framework will only include MIBs in $\mathcal{D}$. Together with Lemma~\ref{lem2}, this yields the following corollary.\looseness-1

\begin{corollary}\label{cor1}
If for every MIB $A \times B \in G$ there is a MIB $A' \times B' \in \mathcal{C}_S$ such that $A \setminus I_S \subseteq A'$, $B \setminus I_S \subseteq B'$ and $(A\cup B) \cap (A' \cup B') \neq \emptyset$, then upon completion of Algorithm~\ref{pc:MIB-frwk}, $\mathcal{D}$ will contain exactly the MIBs in $G$.
\end{corollary}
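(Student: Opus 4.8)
The plan is to establish the claim as two set inclusions: that every element of $\mathcal{D}$ is a MIB of $G$, and conversely that every MIB of $G$ lies in $\mathcal{D}$ upon termination. The first inclusion is immediate from the observation already noted before the statement. The set $\mathcal{D}$ is only ever populated by the seed bicliques of $\mathcal{C}_S$ (inserted at line~3 of Algorithm~\ref{pc:MIB-frwk}) and by the outputs of \MM (inserted in the body of the while loop); since the seeds are MIBs by construction and \MM returns only MIBs by its specification, nothing spurious ever enters $\mathcal{D}$.

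For the reverse inclusion I would appeal directly to Lemma~\ref{lem2}, which already does the heavy lifting: it shows that whenever $\mathcal{D}$ contains, for each MIB $A \times B$ of $G$, some MIB $A' \times B'$ with $A \setminus I_S \subseteq A'$, $B \setminus I_S \subseteq B'$, and $(A \cup B) \cap (A' \cup B') \neq \emptyset$, then all MIBs of $G$ end up in $\mathcal{D}$. The only gap to close is that the corollary's hypothesis is phrased in terms of the seed set $\mathcal{C}_S$ rather than $\mathcal{D}$. The bridge is the initialization step at line~3, where every $C \in \mathcal{C}_S$ is inserted into $\mathcal{D}$, combined with the fact that $\mathcal{D}$ is monotone: elements are only ever added, never removed. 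Consequently $\mathcal{C}_S \subseteq \mathcal{D}$ throughout the run, so the witness $A' \times B' \in \mathcal{C}_S$ guaranteed by the corollary's hypothesis is also a witness in $\mathcal{D}$, and the hypothesis of Lemma~\ref{lem2} is satisfied.

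With both inclusions in hand, $\mathcal{D}$ contains exactly the MIBs of $G$. I expect this argument to be essentially mechanical, since the substantive work is entirely carried by Lemma~\ref{lem2} together with the \MM specification; the only point requiring care is to explicitly invoke the containment $\mathcal{C}_S \subseteq \mathcal{D}$ and the monotonicity of $\mathcal{D}$ in order to transfer the seed-level hypothesis into the form Lemma~\ref{lem2} consumes. There is no real obstacle here, as the corollary is precisely the packaging of Lemma~\ref{lem2} with the trivial seeding and non-spuriousness guarantees.
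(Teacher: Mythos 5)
Your proposal is correct and takes essentially the same route as the paper, which obtains the corollary by noting that \MM only ever returns MIBs (so $\mathcal{D}$ contains nothing spurious) and then invoking Lemma~\ref{lem2} for completeness. Your explicit bridge --- that $\mathcal{C}_S \subseteq \mathcal{D}$ from the initialization step onward, so the seed-level witnesses satisfy Lemma~\ref{lem2}'s hypothesis --- merely spells out what the paper leaves implicit.
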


Recall that \AT and \MM each run in $O(m)$ time. Combining this with the fact that each MIB in $G$ is popped at most once from $\mathcal{Q}$ we have:

\begin{corollary}\label{cor2}
The time complexity of this framework is $O(M_I m n + INIT)$, where $INIT$ is the time needed by \texttt{FindSeedSet} to compute $\mathcal{C}_S$.
\end{corollary}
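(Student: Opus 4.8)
The plan is to bound separately the number of iterations of the outer \textbf{while} loop and the cost of a single iteration, then combine these with the cost $INIT$ of the initial call to \texttt{FindSeedSet}.

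First I would argue that the \textbf{while} loop executes $O(M_I)$ times. The key observation is that every object placed in $\mathcal{Q}$ is either a seed biclique from $\mathcal{C}_S$ (a MIB) or the output of a call to \MM; by the specification of \MM, each such output is either a MIB or $\emptyset$. Hence only MIBs (and possibly the single object $\emptyset$) are ever enqueued. Moreover, a biclique is added to $\mathcal{Q}$ only when it is simultaneously inserted into $\mathcal{D}$, and only if it is not already present in $\mathcal{D}$; consequently each distinct MIB is enqueued, and therefore popped, at most once. Since $G$ contains exactly $M_I$ MIBs, the loop performs at most $M_I + 1$ pops, giving $O(M_I)$ iterations.

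Next I would bound the work done per iteration. After popping $X \times Y$, the \textbf{for} loop ranges over $j \in I_S \setminus (X \cup Y)$, which has at most $|I_S| \le n$ elements. Each pass makes two calls to \AT and two to \MM, each costing $O(m)$ by assumption, together with a constant number of membership tests and insertions into $\mathcal{D}$. Representing a biclique by its vertex set, these dictionary operations cost $O(n)$, which is absorbed into $O(m)$ since $G$ is connected and hence $m \ge n-1$. Thus each pass of the \textbf{for} loop is $O(m)$ and a single iteration of the \textbf{while} loop costs $O(nm)$.

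Multiplying the two bounds yields $O(M_I n m)$ for the main loop. Adding the time to build the seed set---namely $INIT$ for \texttt{FindSeedSet} plus $O(|\mathcal{C}_S| \cdot n) \subseteq O(M_I n)$ to insert the seeds into $\mathcal{D}$ and $\mathcal{Q}$, which is absorbed---gives the claimed total of $O(M_I n m + INIT)$. The step I expect to require the most care is the counting argument for the number of pops: one must confirm that nothing other than MIBs (save the benign $\emptyset$) can enter the queue, and that the dictionary genuinely prevents any MIB from being enqueued twice, so that the per-MIB work is charged exactly once. The per-iteration cost is then routine given the $O(m)$ guarantees already established for \AT and \MM.
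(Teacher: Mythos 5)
Your proof is correct and follows essentially the same route as the paper: each distinct MIB is popped from $\mathcal{Q}$ at most once (giving $O(M_I)$ iterations), each iteration runs the inner loop over at most $|I_S| \le n$ vertices with $O(m)$ work per pass from \AT and \MM, and the seed-set cost contributes the additive $INIT$ term. Your explicit handling of the dictionary membership costs and the possible enqueueing of $\emptyset$ fills in details the paper leaves implicit, but the argument is the same.
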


\subsection{\nonlex}\label{sec:nonlex}
We now present \nonlex, which follows the MIB-enumeration framework. To form $\mathcal{C}_S$, for each vertex $v \in V$ we run \MM$(\{v\} \times \{x\}, V)$ where $x \in N(v)$ and add it to $\mathcal{C}_S$. We also let $I_S = V$. To show the correctness of this approach, we note that $V \setminus V = \emptyset$ and any MIB contains the empty set. 
Thus all that remains to show is that for each MIB there is a MIB in $\mathcal{C}_S$ with which it has a non-empty intersection. As every $v\in V$ is in some MIB in $\mathcal{C}_S$, this condition is met. Thus, via Corollary~\ref{cor1}, \nonlex will find all MIBs.
There may be $O(n)$ duplicates in $\mathcal{C}_S$ which can be removed in $O(n)$ time per duplicate.
 As \MM runs in $O(m)$ time, by Corollary~\ref{cor2}, the time complexity of \nonlex is $O(M_I m n )$.
We note that \nonlex is essentially a simplified version of the \moddias algorithm from~\cite{KLOSTER} which does not guarantee lexicographic order on output.

\subsection{\iOCTMIB}\label{sec:ioctmib}
Next we describe \iOCTMIB, an algorithm for enumerating all MIBs in a graph with a given OCT decomposition $G[L,R,O]$. \iOCTMIB also makes use of the MIB-enumeration framework described in Section~\ref{sec:mibfk}. In the calls to
\MM we let $I_S = O$. To form $\mathcal{C}_S$, we begin by running \imbea~\cite{ZHANG} to find the set $\mathcal{C}_B$ of MIBs in $G[L \cup R]$. For each $C_B \in \mathcal{C}_B$ we run \MM on $(C_B, O)$. This creates a set $X_B$ of MIBs in $G$. 

Then for each node $o \in O$, we find the set of MISs in $N(o)$. This can be done in $O(mn)$ time per MIS using the algorithm of Tsukiyama et al.~\cite{TSUKIYAMA}. For each MIS $I_{o}$ found, run \MM on
the induced biclique $\{o\} \times I_{o}$. Let the multiset of all MIBs produced by this process be denoted $X_Q$. Note that a MIB may be in $X_Q$ up to $O(n_O)$ times (once per $o \in O$, stemming from an MIS in $N(o)$), but we can remove duplicates from $X_Q$ in $O(n)$ per MIB, forming $X_M'$. We then let $\mathcal{C}_S = X_B \cup X_M'$.
Thus, \texttt{FindSeedSet} runs in $O(mnn_O)$ per unique MIB found, and by Corollary~\ref{cor2}, the total time complexity of \iOCTMIB is $O(M_Imnn_O)$.

To show the correctness of \iOCTMIB, we must show that for every MIB in $G$, we include a MIB in $\mathcal{C}_S$ which includes all of its non-OCT nodes and a node in the MIB if the MIB is completely contained in $O$.
If an entire MIB $C$ is contained in $O$, then any MIB containing $\{o\} \times I_o$ for $o \in C$ suffices. If a MIB has non-OCT nodes on both sides, then there must be a MIB in $X_B$ which contains these non-OCT nodes because there is a MIB in $G[L \cup R]$ containing them. If a MIB has all of its non-OCT nodes on one side, then there is an OCT node $o$ which is neighbors with all of the non-OCT nodes, which thus must be contained in an MIS in $N(o)$. Thus, by Corollary~\ref{cor1}, we find all of the MIBs in $G$.\looseness-1

\subsection{\OCTMICA}\label{sec:octmica}
\OCTMICA is an algorithm for enumerating the maximal bicliques (MBs) in a general graph with a given OCT decomposition $G[L,R,O]$. We adapt the approach of \MICA~\cite{ALEXE}, which relies on a
seed set of bicliques which ``cover'' the graph. Specifically, we restrict \MICA's coverage requirement for the
seed set to only the OCT set and leverage \imbea~\cite{ZHANG} to enumerate the MBs entirely within  \biportion.  This reduces the runtime from $O(n^3\numMBs)$ to $O(n^2n_O\numMBs)$.

\OCTMICA begins by running \imbea (line \ref{octmica:imbea} in Algorithm \ref{algOCTMICA}) to get $\MBs'$, the MBs in \biportion,  in time $O(nm'\numMBs')$, where $m'$ is the number of edges in \biportion and $\numMBs' = |\MBs'|$.
Using \MMNI, we convert elements of $\MBs'$ to be maximal with respect to $G$ (lines \ref{octmica:mmniloopimbea}-\ref{octmica:mmniimbea}). \MMNI runs in $O(m)$ time and its algorithmic details are deferred to the Appendix.
\OCTMICA then initializes its seed set of size $O(n_O)$ consisting of bicliques from the stars of the OCT set (lines \ref{octmica:octstarloop}-\ref{octmica:addstar}), and adds these to the working set $C$ of all identified MBs (line \ref{octmica:workingsetinit}). Similar to \MICA, the remainder of the algorithm builds new bicliques by combining (via \consensus, see Appendix) pairs of elements from the seed set $C_O$ and previously identified MBs $C$ (lines \ref{octmica:initfound}-\ref{octmica:insert}), until no new bicliques are generated.
This runs in time $O(n^2n_O\numMBs)$.

\begin{algorithm}[t!]
\caption{\OCTMICA}
\label{algOCTMICA}
\begin{algorithmic}[1]
\Procedure{Enumerate}{$G = (L,R,O)$} \label{octmica:fncall}
	\State$\MBs'=\Call{BipartiteSolve}{L,R}$\Comment{Implementation of \imbea, $O(m'n\numMBs')$} \label{octmica:imbea}
	\For{$B\in\MBs'$}\Comment{$O(\numMBs')$} \label{octmica:mmniloopimbea}
		\State $B=\Call{\MMNI}{B}$\Comment{Extend in place, $O(m)$} \label{octmica:mmniimbea}
	\EndFor
	\State$C_0=\{\}$ \label{octmica:initC_0}
	\For{$v$ in $O$} \Comment{Initialize Bicliques from stars, $O(n_O)$} \label{octmica:octstarloop}
		\State$B=\Call{\MMNI}{v\times N(v)}$\Comment{$O(m)$} \label{octmica:mmnistar}
		\State$C_0$\Call{.add}{$B$} \label{octmica:addstar}
	\EndFor
	\State$C=\MBs'\cup C_0$. \label{octmica:workingsetinit}%
	\State\Call{sort}{$C$}\Comment{$O(\numMBs'\log(\numMBs'))$} \label{octmica:initsort}%
	\State$found=true$ \label{octmica:initfound}
	\While{$found$} \label{octmica:foundloop}
		\State$found=false$ \label{octmica:foundfalse}
		\For{$B_1$ in $C_0$} \Comment{$O(n_O)$} \label{octmica:loopC_0}
			\For{$B_2$ in $C$} \Comment{$O(\numMBs)$} \label{octmica:loopC}
				\For{$B_3$ in \Call{\consensus}{$B_1, B_2$}} \label{octmica:loopconsensus}
					\State$B_4=\Call{\MMNI}{B_3}$\Comment{$O(m)$} \label{octmica:loopmmni}
					\If{$B_4$ not in $C$}\Comment{$O(n\log(\numMBs))$} \label{octmica:absorbcheck}
						\State$found=true$ \label{octmica:foundtrue}
						\State$C$\Call{.InsertInSortedOrder}{$B_4$} \label{octmica:insert}
					\EndIf
				\EndFor
			\EndFor
		\EndFor
	\EndWhile
	\State\Return{$C$} \label{octmica:return}
\EndProcedure
\end{algorithmic}
\end{algorithm}

\begin{lemma}
\OCTMICA returns exactly $\MBs$, the set of maximal bicliques in $G$.
\end{lemma}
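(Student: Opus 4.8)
The plan is to prove the two inclusions $C \subseteq \MBs$ (soundness) and $\MBs \subseteq C$ (completeness) separately, where $C$ is the set returned on line~\ref{octmica:return}. Soundness is immediate: every biclique placed in $C$ is either a member of $\MBs'$ after the in-place extension on line~\ref{octmica:mmniimbea}, an element of $C_0$ built on line~\ref{octmica:mmnistar}, or an output $B_4$ of \MMNI on line~\ref{octmica:loopmmni}; in all three cases the biclique is a return value of \MMNI, which by its specification (deferred to the Appendix) is a maximal biclique of $G$. Hence every member of $C$ lies in $\MBs$. I would also observe that the duplicate check on line~\ref{octmica:absorbcheck}, together with the finiteness of $\MBs$, forces the \textbf{while} loop to terminate.

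Completeness is the substance of the lemma, and I would adapt the consensus-closure argument of Alexe et al.~\cite{ALEXE}. First I would dispatch the MBs contained entirely in the bipartite portion: if $B^* = X^* \times Y^* \in \MBs$ with $X^* \cup Y^* \subseteq L \cup R$, then maximality in $G$ forces maximality in the subgraph \biportion (no vertex, OCT or otherwise, extends $B^*$), so $B^* \in \MBs'$ is found by \imbea on line~\ref{octmica:imbea}; since $B^*$ is already maximal in $G$, the call to \MMNI returns it unchanged and it enters $C$ on line~\ref{octmica:workingsetinit}. For the remaining MBs I would induct on the number of OCT vertices $\kappa(B^*) = |(X^* \cup Y^*) \cap O|$, the base case $\kappa = 0$ being exactly the bipartite case just handled. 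For $\kappa \geq 1$ I would select an OCT vertex $o \in (X^* \cup Y^*)\cap O$, whose maximalized star $S_o$ sits in $C_0$, identify an already-generated biclique $B_2 \in C$ of strictly smaller OCT-count carrying the bipartite core and the remaining OCT vertices of $B^*$, and argue that some biclique returned by \consensus$(S_o, B_2)$ extends under \MMNI to exactly $B^*$. The inductive hypothesis places $B_2$ in $C$ before the relevant iteration of the loop on lines~\ref{octmica:loopC_0}-\ref{octmica:loopC}, and the duplicate check then inserts $B^*$.

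The main obstacle is justifying the two deviations from \MICA that give \OCTMICA its speedup: the reduced seed set and the restricted pairing, where one operand is always drawn from the OCT stars $C_0$ rather than from all of $C$. I would first verify the coverage condition underlying Alexe et al.'s argument, namely that every edge of $G$ lies in some seed biclique (bipartite edges inside a member of $\MBs'$, OCT-incident edges inside the corresponding star of $C_0$), so that the bipartite backbone never needs to be reassembled from smaller pieces and only the incorporation of OCT vertices remains. The delicate point is then showing that this incorporation can always be carried out with an OCT star as one consensus operand: that for the chosen $B^*$ and $o$ the correct partner $B_2$ exists in $C$, and that \consensus$(S_o, B_2)$ genuinely produces a sub-biclique of $B^*$ whose maximalization is $B^*$. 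Pinning down the inductive measure so that $B_2$ provably has smaller OCT-count, guaranteeing that maximalization does not silently reintroduce the removed OCT vertex, is where I expect the real care to be required, and it is the step I would write out in full detail.
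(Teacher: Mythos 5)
Your soundness argument, termination remark, and bipartite base case all match what the paper does (mostly implicitly), and you have correctly located the crux: justifying the restricted pairing in which one \consensus operand is always an OCT star. But your inductive scheme contains a genuine gap, and it is exactly the step you defer to ``write out in full detail.'' You induct over \emph{maximal} bicliques and require a partner $B_2 \in C$ that is itself a maximal biclique of \emph{strictly smaller OCT-count} containing $(X^*\setminus\{o\})\times Y^*$. Such a $B_2$ need not exist. For instance, if the common neighborhood of $Y^*$ is exactly $X^*$ and the common neighborhood of $X^*\setminus\{o\}$ is exactly $Y^*$, then every maximal biclique of $G$ containing $(X^*\setminus\{o\})\times Y^*$ contains $o$ again (indeed the only such maximal biclique may be $B^*$ itself), so no admissible absorber has smaller OCT-count and your induction cannot take its first step. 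The phenomenon you flag --- maximalization silently reintroducing the removed OCT vertex --- is not a delicate point to be handled with care; it is unavoidable in general, so the inductive measure as you have set it up does not decrease.

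The paper escapes this by proving a stronger statement with a different induction variable: \emph{every} biclique $B^* = X^*\times Y^*$ of $G$, maximal or not, is absorbed by some member of the final set $C$, by induction on the number of OCT vertices of $B^*$ itself. In the step, $B' = \{v\}\times Y^*$ is absorbed by the star-seed $B_1 \in C_0$, while $B'' = (X^*\setminus\{v\})\times Y^*$ is a (generally non-maximal) biclique with strictly fewer OCT vertices, so the inductive hypothesis yields an absorber $B_2 \in C$ with \emph{no constraint at all} on $B_2$'s own OCT-count --- $B_2$ may even be $B^*$. Since $B^*$ is the consensus of $B'$ and $B''$ of the form (union of $X$-sides) $\times$ (intersection of $Y$-sides), the corresponding candidate of \consensus$(B_1,B_2)$ absorbs $B^*$, \MMNI preserves absorption, and for a maximal $B^*$ absorption by a maximal element of $C$ forces equality. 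Replacing your induction over maximal bicliques by this induction over all bicliques is the missing idea. Your edge-coverage observation is true but not sufficient on its own: the seed of \OCTMICA does not satisfy the star-absorption hypothesis of Theorem 3 of Alexe et al., and it is precisely this strengthened induction, not edge coverage, that substitutes for it.
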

\begin{proof}
Running \imbea and \MMNI ensures all maximal bicliques from \biportion were found and added to $C$.
Thus, we restrict our attention to maximal bicliques with at least one node from $O$, and proceed
similarly to the proof of Theorem 3 in \cite{ALEXE}. We say that a biclique $B_1 = X_1 \times Y_1$ \emph{absorbs} a biclique $B_2 = X_2 \times Y_2$ if $X_2\subseteq X_1$ and $Y_2\subseteq Y_1$ or $Y_2\subseteq X_1$ and $X_2\subseteq Y_1$.

We show that every biclique $B^* = X^* \times Y^*$ in $G$ is absorbed by some biclique in $C$ by induction on $k$, the number of OCT vertices in $B^*$. In the base case ($k=0$), $B^*$ is contained in \biportion and is absorbed by a biclique in $\MBs'\subseteq C$. We now consider $k \geq 1$; without loss of generality, assume $X^*$ contains some OCT vertex $v$. Then $B' = \{v\}\times Y^*$ is
absorbed by some biclique $B_1= X_{1} \times Y_{1}, v \in X_{1}, Y^* \subseteq Y_{1}$, where $B_1\in C_0$ is formed from the star centered on $v$. Further, $B''=(X^* \setminus \{v\}) \times Y^*$ has fewer vertices from OCT than $B^*$, so by induction it is absorbed by some biclique $B_2=X_{2} \times Y_{2}, (X \setminus \{v\})\subseteq X_{2}, Y^*\subseteq Y_{2}$, where $B_2\in C$. Now $B^*$ is a consensus of $B'$ and $B''$, and will be absorbed by the corresponding consensus of $B_1$ and $B_2$, guaranteeing absorption by a biclique in $C$.\looseness-1
\end{proof}

\begin{lemma}
The runtime of \OCTMICA after \imbea is $O(n^2n_O\numMBs)$.
\end{lemma}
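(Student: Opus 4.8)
The plan is to charge the work performed after the \imbea call (i.e., everything except line~\ref{octmica:imbea}) to four phases and show that the \consensus loop dominates at $O(n^2n_O\numMBs)$. First I would dispose of the three cheap phases. Converting each biclique of $\MBs'$ to be maximal in $G$ calls \MMNI once per element, at cost $O(\numMBs' m) = O(\numMBs m)$; building the star seeds $C_0$ is one \MMNI per OCT vertex, costing $O(n_O m)$; and the initial sort of $C$ is $O(\numMBs'\log\numMBs')$. Since $n_O \le n$ and $m = O(n^2)$, each of these is $O(n^2 n_O \numMBs)$ or smaller, so none affects the stated bound.

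The crux is the \textbf{while} loop (lines~\ref{octmica:foundloop}--\ref{octmica:insert}). I would first argue that the total number of \consensus invocations over the entire execution is $O(n_O\numMBs)$, rather than the larger count that a naive reading of the repeated passes suggests. The key observation is that \consensus$(B_1,B_2)$ is a deterministic function of its two arguments, so reprocessing a pair already combined produces only bicliques that are present in $C$ and therefore fail the membership test on line~\ref{octmica:absorbcheck} and are discarded. Hence it suffices to process each pair $(B_1,B_2)\in C_0\times C$ once: $C_0$ holds at most $n_O$ seeds, and because every element inserted into $C$ is first made maximal by \MMNI and checked for membership, $C$ only ever contains distinct maximal bicliques of $G$, giving $|C|\le\numMBs$ and thus $O(n_O\numMBs)$ productive pairs. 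Concretely, one maintains a frontier of unprocessed pairs (or marks processed ones), so the $found$ flag merely drives discovery of pairs involving newly inserted members of $C$.

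With the pair count fixed, I would bound the per-pair work. Each call to \consensus runs in $O(n)$ time and returns $O(1)$ candidate bicliques (its details are deferred to the Appendix); for each candidate, \MMNI costs $O(m)$ and the membership test, together with any sorted insertion into $C$, costs $O(n\log\numMBs)$. Using $m=O(n^2)$ and $\log\numMBs=O(n)$, the work per pair is $O(m + n\log\numMBs) = O(n^2)$. Multiplying by the $O(n_O\numMBs)$ pairs gives $O(n^2 n_O\numMBs)$; the $O(\numMBs)$ insertions contribute only $O(n^2\numMBs)$ more, which is absorbed, and this term also dominates the three cheap phases, yielding the claimed bound.

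The main obstacle is the middle step: justifying that the repeated-until-stable \textbf{while} loop does not inflate the pair count by the number of passes. I expect the cleanest route is exactly the determinism-of-\consensus argument above, formalized by charging each consensus to the unique (seed, current-MB) pair it combines and noting that such a pair is productive at most once; once this accounting is in place, the remaining arithmetic is routine.
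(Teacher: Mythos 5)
Your proposal is correct and arrives at the bound via the same decomposition the paper uses: dispose of the initialization phases ($O(m\numMBs')$ for making the \imbea output maximal in $G$, $O(mn_O)$ for the star seeds, $O(\numMBs'\log\numMBs')$ for the sort), then show the consensus stage dominates with per-pair cost $O(m+n\log\numMBs)=O(n^2)$, using $\numMBs\le 2^n$. The genuine difference is the middle step you flag as the crux. The paper's proof simply asserts that the loops over $C_0$ and $C$ execute $O(n_O)$ and $O(\numMBs)$ times and treats the enclosing \textbf{while} loop as contributing nothing; read literally, Algorithm~\ref{algOCTMICA} rescans all of $C_0\times C$ on every pass, and since a pass need only insert one new biclique to set $found$, there can be $\Theta(\numMBs)$ passes, so the verbatim pseudocode supports only $O(n^2n_O\numMBs^2)$. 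Your amortization --- \consensus and \MMNI are deterministic and $C$ only grows, so a reprocessed pair regenerates only bicliques already in $C$ that fail the membership test at line~\ref{octmica:absorbcheck}, hence each (seed, MB) pair need be processed once, giving $O(n_O\numMBs)$ consensus calls --- supplies exactly the accounting the paper leaves implicit, and it matches how consensus algorithms are actually implemented in Alexe et al. One caveat: your frontier/marking scheme is a (benign) refinement of the pseudocode rather than an analysis of it verbatim, since under the literal code even \emph{unproductive} rescans cost $O(n^2)$ each; your argument is really that the algorithm with standard once-per-pair bookkeeping meets the stated bound, which is the charitable reading of the paper's proof as well. Your remaining step $O(\numMBs' m)=O(\numMBs m)$ tacitly uses $\numMBs'=O(\numMBs)$; this does hold (restricting an extended MB of $G$ back to $L\cup R$ recovers the bipartite MB it came from, so the extension map is injective), but it deserves the one-line justification, since the paper also keeps $\numMBs'$ and $\numMBs$ as separate parameters.
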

\begin{proof}
	We begin by noting that $\numMBs \leq 2^n$, so $\log(\numMBs)$ is $O(n)$.

	Finding the bicliques in $\MBs'$ requires time $O(m'n\numMBs')$ for \imbea (line \ref{octmica:imbea}); making them maximal (lines \ref{octmica:mmniloopimbea}-\ref{octmica:mmniimbea}) is $O(m\numMBs')$. The bicliques generated by the OCT stars (lines \ref{octmica:octstarloop}-\ref{octmica:addstar}) can be found in $O(mn_O)$. Sorting the initial set $C$ (line \ref{octmica:initsort}) incurs an additional $O(\numMBs'\log(\numMBs'))$. Since $\log(\numMBs)$ is $O(n)$, the total runtime for our initialization (lines \ref{octmica:imbea}-\ref{octmica:initsort}) is $O(mn\numMBs'+mn_O)$.

 	The consensus-building stage of \OCTMICA contains nested loops over $C_0$ (line \ref{octmica:loopC_0}) and $C$ (line \ref{octmica:loopC}), which execute at most $O(n_O)$ and  $O(\numMBs)$ times, respectively. The \consensus operation (line \ref{octmica:loopconsensus}) executes in $O(n)$, and produces a constant number of candidate bicliques to check. Each execution of the inner loop incurs a cost of
	$O(m)$ for \MMNI (line \ref{octmica:loopmmni}) and $O(n\log(\numMBs))$ to insert the new MB in sorted order (lines \ref{octmica:absorbcheck}-\ref{octmica:insert}). We note that the runtime of \consensus is dominated by the cost of the loop.
 	Thus, the total runtime of consensus-building is {$O(n_O \numMBs n \log(\numMBs))$}, or $O(n^2n_O\numMBs)$.
\end{proof}

This analysis leads to an overall runtime of $O(m'n\numMBs'+n^2n_O\numMBs)$, as desired. We note that for $n_O\in\Theta(n)$, \OCTMICA's runtime degenerates to the $O(n^3\numMBs)$ of \MICA. Additionally, the stronger results for incremental polynomial time described for \MICA in \cite{ALEXE}  still apply; the proofs are similar and are omitted for space. For bipartite graphs ($n_O = 0$), \OCTMICA is effectively \imbea, which was empirically shown to be more efficient than \MICA on bipartite graphs~\cite{ZHANG}.

\section{Implementation} \label{sec:implementation}
In this section we describe several relevant implementation details and design decisions.

\subsection{Algorithm Framework}

We always (re-label and) store vertices as $\{0, 1, \ldots n\}$ and maintain internal dictionaries as needed to recover original labels -- e.g. when taking subgraphs. This allows us to leverage native data types and structures; vertices are stored as \texttt{size\_t}.

For efficiency in subroutines, we utilize two representations of $G$. One representation is as adjacency lists, stored as sorted vectors (to improve union and intersection relative to dictionaries or unsorted vectors). This representation is essential in the performance of \consensus in \MICA / \OCTMICA and \MM and \AT in \OCTMIB or \iOCTMIB. We also store the graph as a dictionary of dictionaries which is more amenable to taking subgraphs (as when finding MISs in \OCTMIB, \iOCTMIB). Deleting a node requires time $O(N(v))$ as compared to $O(N(v)\Delta(G))$, where $\Delta(G)$ is the maximum degree, in the adjacency list representation.

\subsection{\MICA}
The public implementation of \MICA used in~\cite{ZHANG} is available at \cite{IASTATE}. However, this implementation is only suitable for bipartite graphs as it makes certain efficiency improvements in storage, etc. which assume bipartite input. As such, we implemented \MICA from scratch in the same framework as \OCTMIB and \OCTMICA, etc., using the data structures discussed above. This is incompatible with the technique
described in~\cite{ALEXE} for storing only one side of each biclique (since in the non-induced case, maximality completely determines the other side). We note this could improve efficiency of both \MICA and \OCTMICA in a future version of our software, and should not significantly affect their relative performance as analyzed in this work.

\section{Experiments} \label{sec:experiments}
\subsection{Data and experimental setup}

\begin{figure*}[t!]
    \includegraphics [width=0.5\textwidth,trim=0cm .25cm .25cm .25cm,clip]{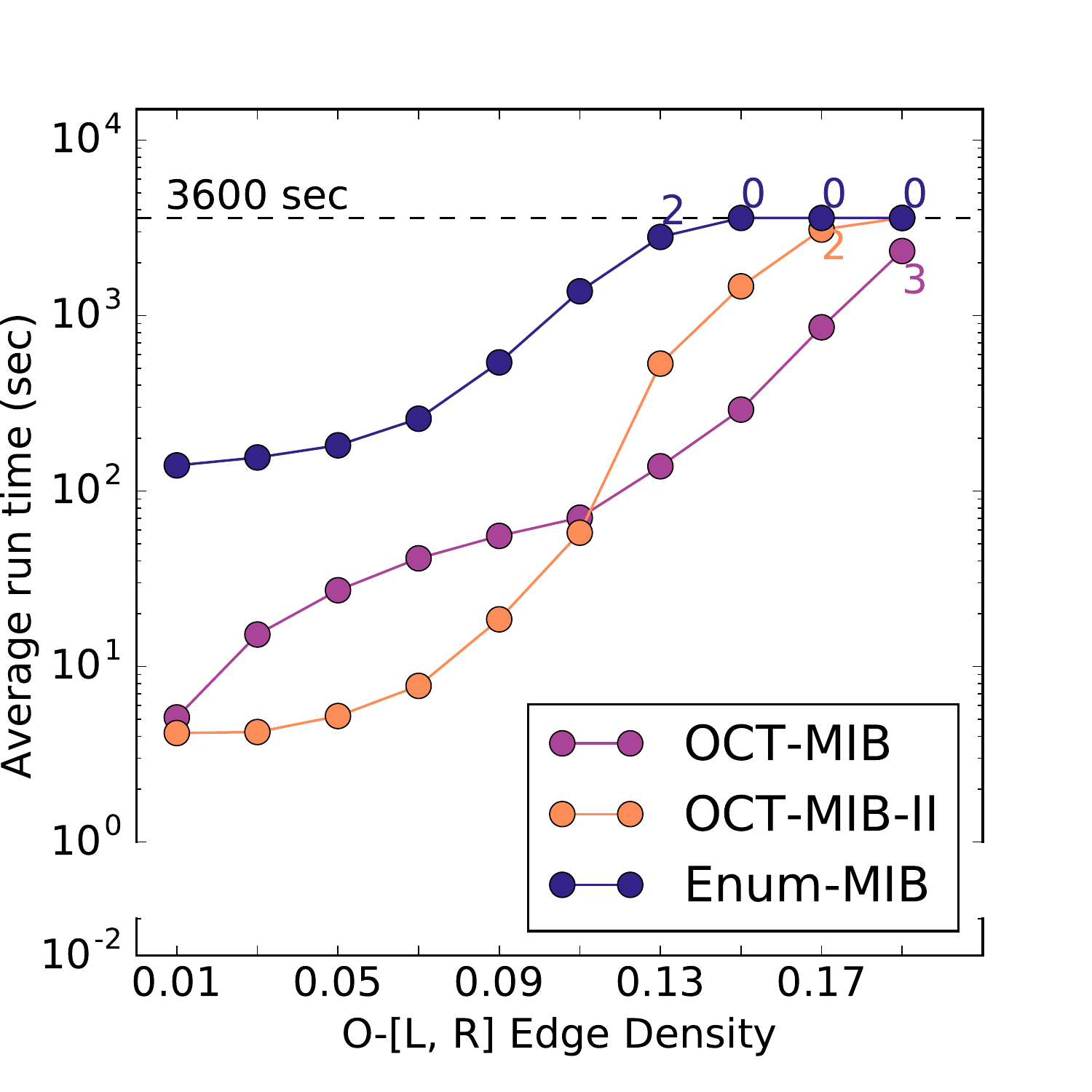}%
\hspace{0.01cm}
    \includegraphics [width=0.5\textwidth,trim=0cm .25cm .25cm .25cm,clip]{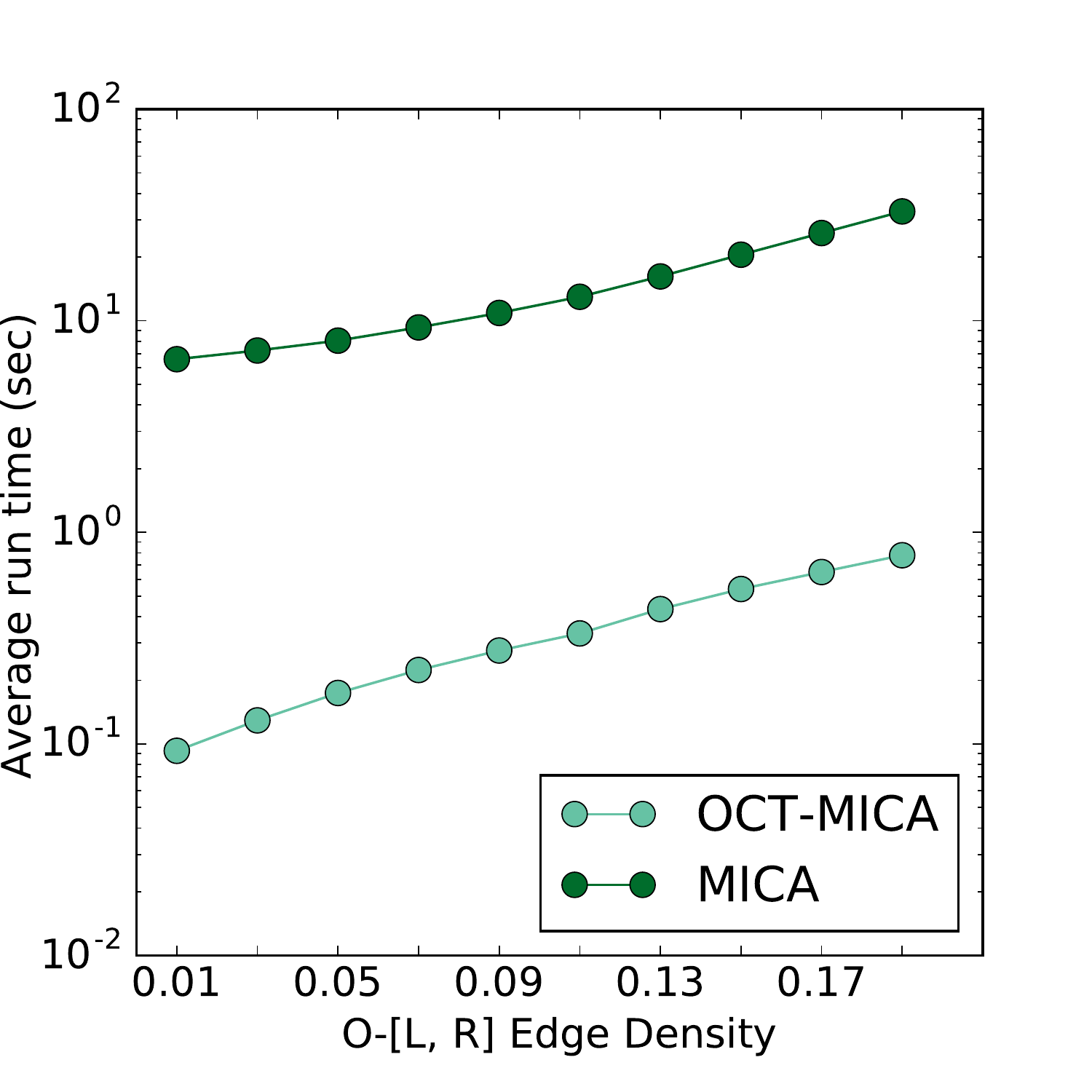}%
    \caption{\label{fig:orig}
	Runtimes of the MIB-enumerating (left) and MB-enumerating (right) algorithms on graphs where $n_B = 1000$, $n_L/n_R = 10$, and $n_O = 10$. The expected edge density between $O$ and $L\cup R$ was varied; all other densities were $0.05$.
    }
\end{figure*}

We implemented \iOCTMIB, \nonlex, \MICA, and \OCTMICA in C++, and used the implementation of \OCTMIB from~\cite{KLOSTER}. All code is open source under a BSD 3-clause license and publicly available as part of \textsf{MI-Bicliques} at~\cite{HORTON}.

\paragraph{Data} For convenience, throughout this section, we assume $n_L \geq n_R$ and let $n_B = n_L + n_R$. Our synthetic data was generated using a modified version of the random graph generator of Zhang et al.~\cite{ZHANG} that augments random bipartite graphs to have OCT sets of known size. The generator allows a user to specify the sizes of $L$, $R$, and $O$ ($n_L$, $n_R$, and $n_O$), the expected edge densities between $L$ and $R$, $O$ and $L \cup R$, and within $O$, and the coefficient of variation ($cv$; the standard deviation divided by the mean) of the expected number of neighbors in $L$ over $R$ and in $L \cup R$ over $O$. The generator is seeded for replicability. We use the na\"ive OCT decomposition $[L,R,O]$ returned by the generator for our algorithm evaluation, but the techniques mentioned in Section~\ref{sec:prelims} could also be used to find alternative OCT sets. Unless otherwise specified, the following default parameters are used: expected edge density $\bar{d} = 5\%$, $cv=0.5$, $n_B = 1000$ and $n_L/n_R = 1/10$; additionally, the edge density between $O$ and $L \cup R$ is the same as that between $L$ and $R$.

To add the edges between $L$ and $R$, the edge density and $cv$ values are used to assign vertex degrees to $R$, and then neighbors are selected from $L$ uniformly at random; this was implemented in the generator of~\cite{ZHANG}. Edges are added between $O$ and $L \cup R$ via the same process, only with the corresponding edge density and $cv$ values.  Finally, we add edges within $O$ with an Erd\H{o}s-R\'enyi process based on expected density (no $cv$ value is used here).

In most experiments we limit $n_O$ to be $O(3\log_3n_B)$, and use a timeout of one hour (3600s). Unless otherwise noted we run each parameter setting with five seeds and plot the average over these instances, using the time-out value as the runtime for instances that don't finish. If not all instances used for a plot point finished, we annotate it with the number of instances that did not time out.\looseness-1

We began by running our algorithms on the same corpus of graphs as in~\cite{KLOSTER} (see ~\ref{sec:original}). As the new algorithms finished considerably faster than those in~\cite{KLOSTER}, we were able to scale up both $n_B$ and $n_O$ to create new sets of experiments, discussed in~\ref{sec:larger}.
We also ran our algorithms on computational biology graphs from~\cite{WERNICKE}, which have been shown to be near-bipartite; these results are in~\ref{sec:huffner}.\looseness-1

\paragraph{Hardware}
All experiments were run on identical hardware; each server had four Intel Xeon E5-2623 v3 CPUs (3.00GHz)
and 64GB DDR4 memory. The servers ran Fedora 27 with Linux kernel 4.16.7-200.fc27.x86\_64.
The C/C++ codes were compiled using gcc/g++ 7.3.1 with optimization flag -O3.

\subsection{Initial Benchmarking}\label{sec:original}

We begin by evaluating our algorithms on the corpus of graphs used in~\cite{KLOSTER}. This dataset was designed to independently test the effect of each parameter (the expected densities in various regions of the graph, the $cv$ values, $n_O$, $n_B$, and $n_L/n_R$) on the algorithms' runtime.
We observe that \iOCTMIB and \OCTMICA are generally the best algorithms for their respective problems, and include comprehensive plots of all experiments in the Appendix.

For \AMBpfull, we observe that in general, \iOCTMIB outperforms \OCTMIB and \nonlex. This is the case when the varying parameter is the density within $O$, the $cv$ between $L$ and $R$, the size of the OCT set $n_O$, and the ratio between $L$ and $R$, amongst other settings.
In these ``near-bipartite" synthetic graphs, \nonlex unsurprisingly is slowest on most instances.
When $n_B= 1000$ and $n_O = 3\log_3(n_B)$, \nonlex outperforms \OCTMIB when the density within $O$ increases above 0.05. This is likely due to the adverse effect of the number of MISs in the OCT set on \OCTMIB. The most interesting observation occurs when varying the edge density between $O$ and $L \cup R$ (left panel of Figure~\ref{fig:orig}). In the $n_O = 10$ case, \iOCTMIB is the fastest algorithm until the density exceeds 0.11, when \OCTMIB becomes faster. We believe this is likely due to \OCTMIB efficiently pruning away attempted expansions which are guaranteed to fail, while the number of MISs in $O$ does not increase. This behavior is also seen in the case where $n_O = 3\log_3n_B$, though the magnitude of the difference is not as extreme.

In the non-induced setting of \MBpfull, \OCTMICA consistently outperforms \MICA on this corpus, typically by at least an order of magnitude. The more interesting takeaway is that both MB-enumerating algorithms run considerably faster than their MIB-enumerating counterparts (e.g. right panel of Figure~\ref{fig:orig}), mostly because the number of MIBs is often one to two orders of magnitude larger than the number of MBs in these instances.

\begin{figure*}[t!]
    \includegraphics [width=0.5\textwidth]{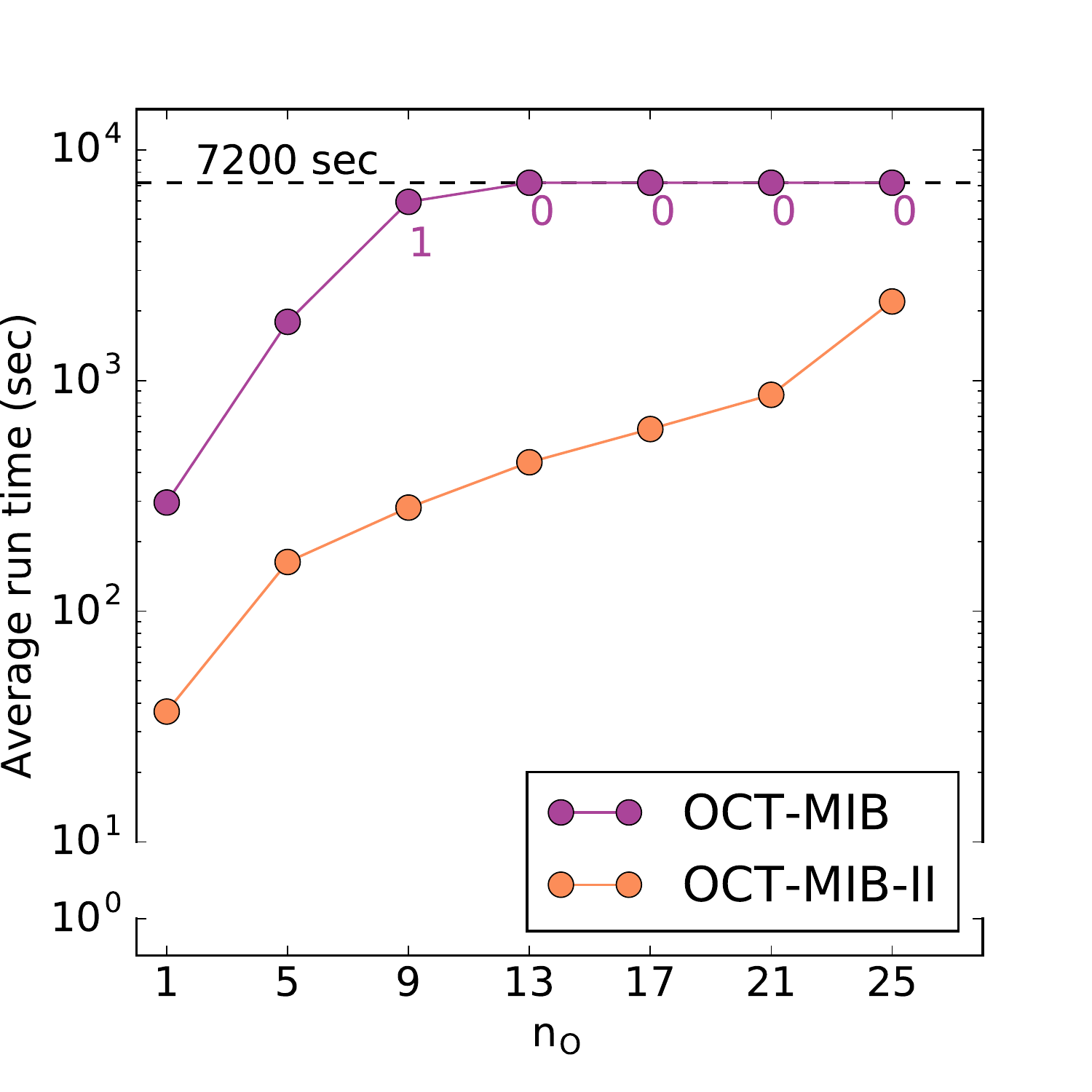}%
    \includegraphics [width=0.5\textwidth]{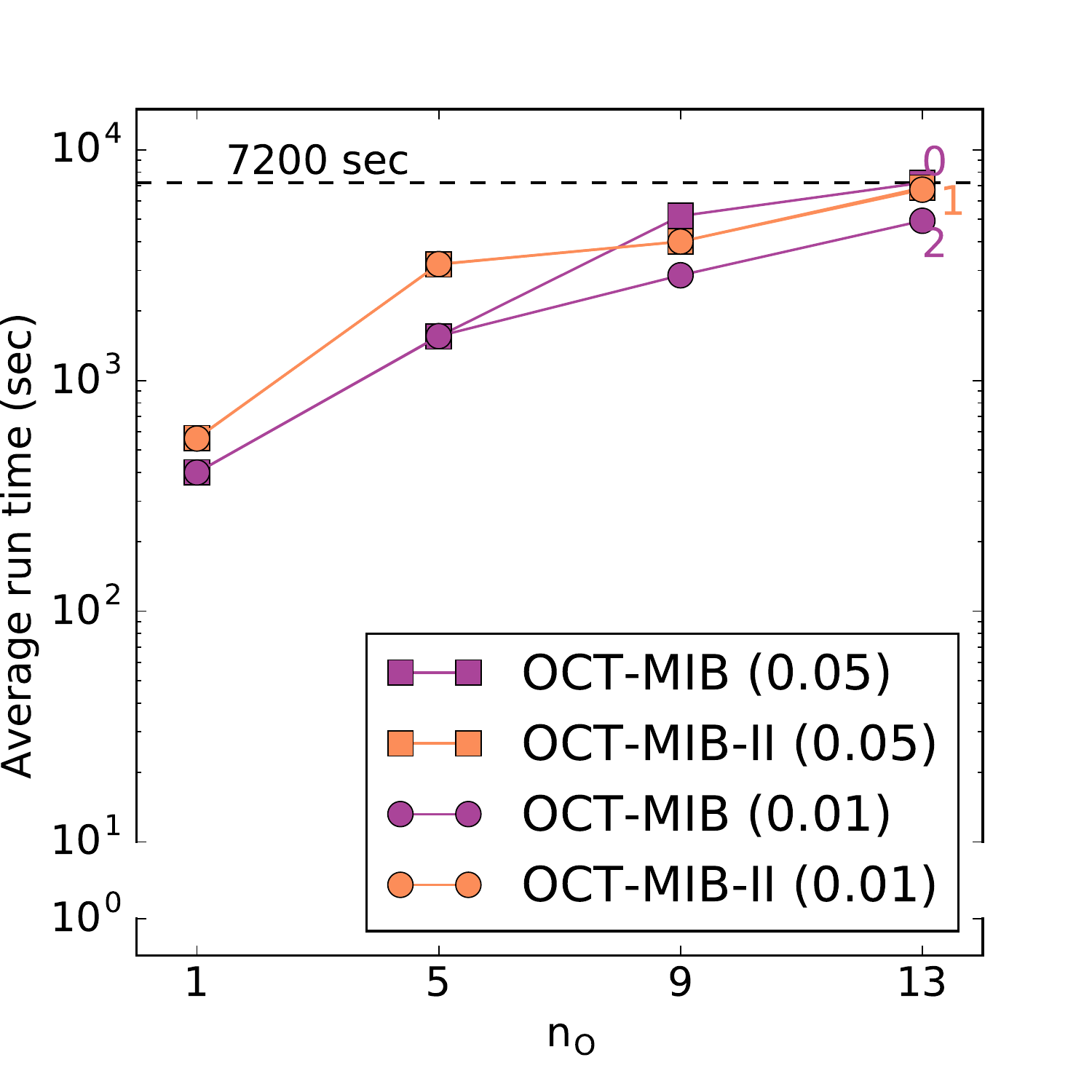}%
    \caption{\label{fig:MIB-10000}
Runtimes of the OCT-based MIB-enumeration algorithms on graphs where $n_B = 10000$ and $n_O$ varies.
In the left panel, $n_L = 9901, n_R = 99$ ($n_L/n_R \approx 100$) and the expected edge density is $0.03$.
In the right panel, $n_L = 9091, n_R = 909$ ($n_L/n_R \approx 10$) and the expected edge density (excluding within $O$) is $0.01$; the marker-type denotes the expected edge density within $O$ (see legend).
For these larger instances we used 3 seeds and a 7200s timeout.
    }
\end{figure*}

\subsection{Larger Graphs}\label{sec:larger}

Given the much faster runtimes achieved in Section~\ref{sec:original} we created a new corpus of larger synthetic graphs. For \AMBpfull, we scaled up $n_B$ to 10,000 and varied $n_O$ in two settings, increasing the timeout to 7200 seconds.
When the expected density was 0.03 and $n_L/n_R = 100$, \iOCTMIB outperformed \OCTMIB for all values of $n_O$ by at least an order of magnitude and finished on all instances, whereas \OCTMIB timed out on all instances with $n_O \geq 13$ (left panel of Figure~\ref{fig:MIB-10000}). However, when the expected density was 0.01 and $n_L/n_R = 9$, \OCTMIB was faster (right panel of Figure~\ref{fig:MIB-10000}). We speculated that this was due to the sparsity of $O$, allowing for a speed-up due to the efficient pruning of \OCTMIB similar to what was seen in Section~\ref{sec:original}. To test this theory, we increased expected edge density within $O$ to 0.05 while leaving the other parameters the same (right panel of Figure~\ref{fig:MIB-10000}), and observed that once $n_O \geq 9$, \iOCTMIB outperforms \OCTMIB, confirming our hypothesis.\looseness-1

For \MBpfull, we also designed a new experiment where $n_B = 10000$ and $n_O$ was scaled up to 1000 (left panel of Figure~\ref{fig:MB-large}). \OCTMICA finished on all instances, whereas \MICA finished on none when $n_O$ was 1000. We also tested how large we could scale the expected density between $L$ and $R$ (right panel of Figure~\ref{fig:MB-large}). When $n_B = 100$, \OCTMICA finished on all instances with density at most 0.4, while \MICA finished on two of five when density is 0.4. Neither algorithm finished in less than the timeout of an hour when the density was 0.5 or greater, exhausting the hardware's memory in many cases. Thus \OCTMICA is able to scale to graphs with considerably larger OCT sets and higher density than both \MICA and the MIB-enumerating algorithms.

We additionally created graphs with $n_O > 3\log_3n_B$, which was not done in~\cite{KLOSTER}, and ran the algorithms for both MIBs and MBs (Figure~\ref{fig:varyingoct}). These graphs had $n_B$ values up to 4000 and for each value of $n_B$, we used three values of $n_O$; $10, 3\log_3n_B,$ and $\sqrt{n_B}$.
The results were most interesting for the MIB-enumerating algorithms (Figure~\ref{fig:varyingoct} top). \OCTMIB performed the worst of the three algorithms when $n_O = \sqrt{n_B}$, but outperformed \nonlex in the other settings. This verifies the analysis from~\cite{KLOSTER} on the range in which \OCTMIB is most effective. In general, \iOCTMIB once again was the fastest algorithm and did best when $n_O$ was smaller. The impact of $n_O$ on \iOCTMIB and \nonlex appeared comparable.
In the MB-enumeration case, \OCTMICA consistently outperforms \MICA, and there is a distinguishable difference in the runtime based on the value of $n_O$ (Figure~\ref{fig:varyingoct} bottom). The value of $n_O$ has far less effect on \MICA, which does not finish on any graphs with $n_B = 4000$.

\begin{figure*}[t!]
    \includegraphics [width=0.5\textwidth]{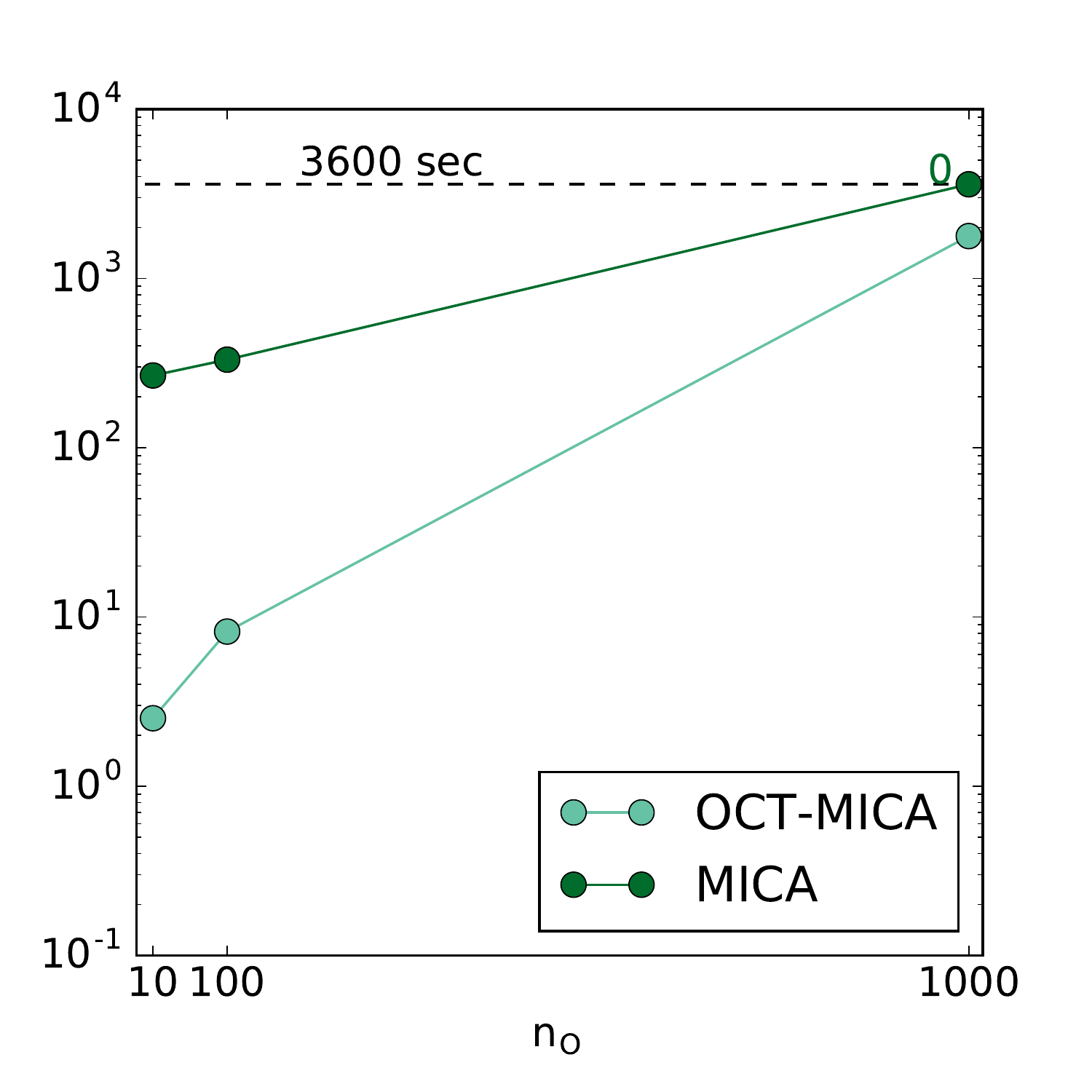}%
    \includegraphics [width=0.5\textwidth]{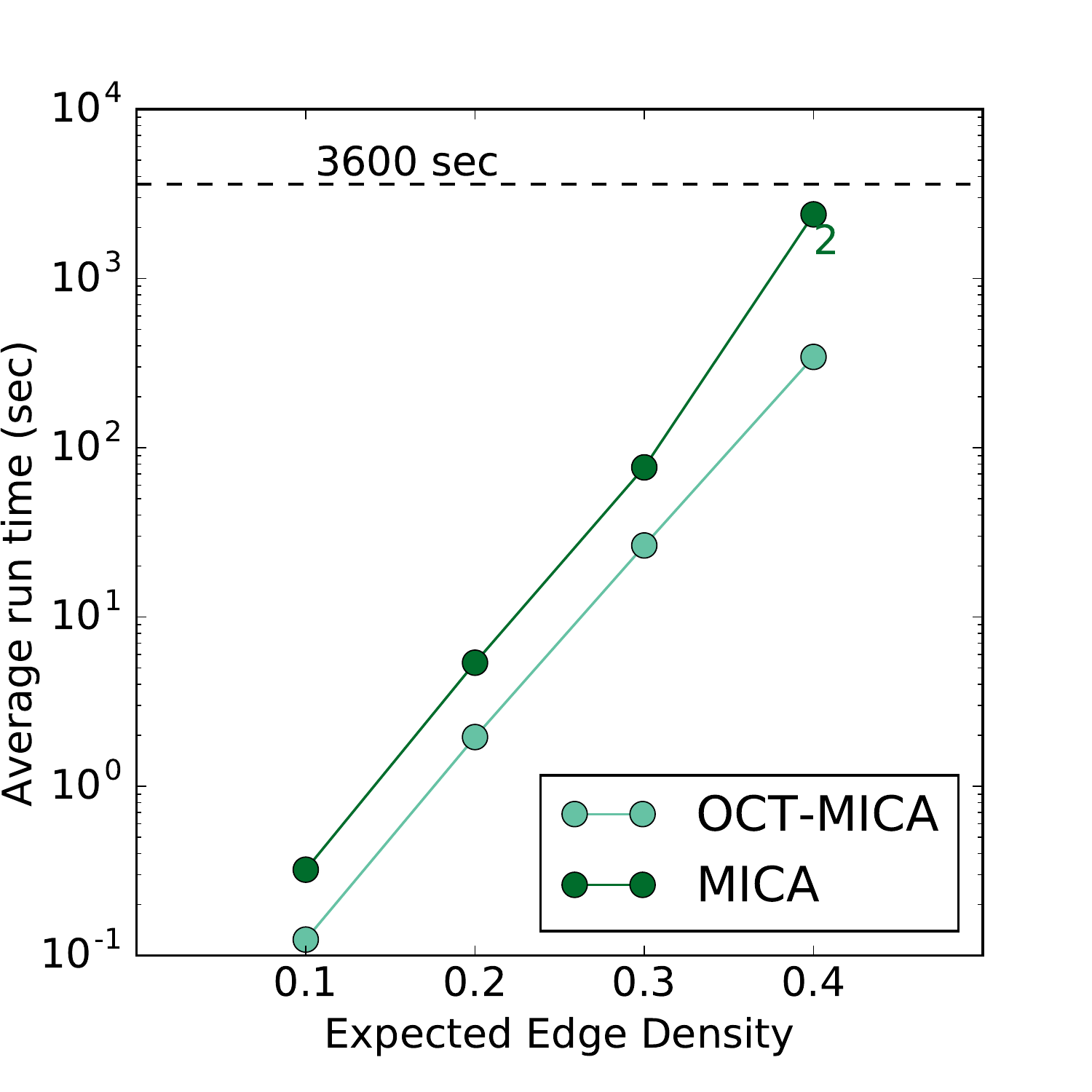}%
    \caption{\label{fig:MB-large}
   	Runtimes of the MB-enumerating algorithms on graphs with larger $n_B$ and expected edge density. In the left panel, $n_L = 9091, n_R = 909$ ($n_L/n_R \approx 10$), the expected edge density is $0.05$, and $n_O$ varied. In the right panel, $n_L = 91, n_R = 9$ ($n_L/n_R \approx 10$),  $n_O = 50$, and the expected edge density varied.\looseness-1
}
\end{figure*}

\begin{figure*}[t!]
\captionsetup[subfigure]{labelformat=empty}

\centering

\begin{subfigure}[b]{1\textwidth}
   \includegraphics[width=1\textwidth, trim=0cm .25cm .25cm .75cm,clip]{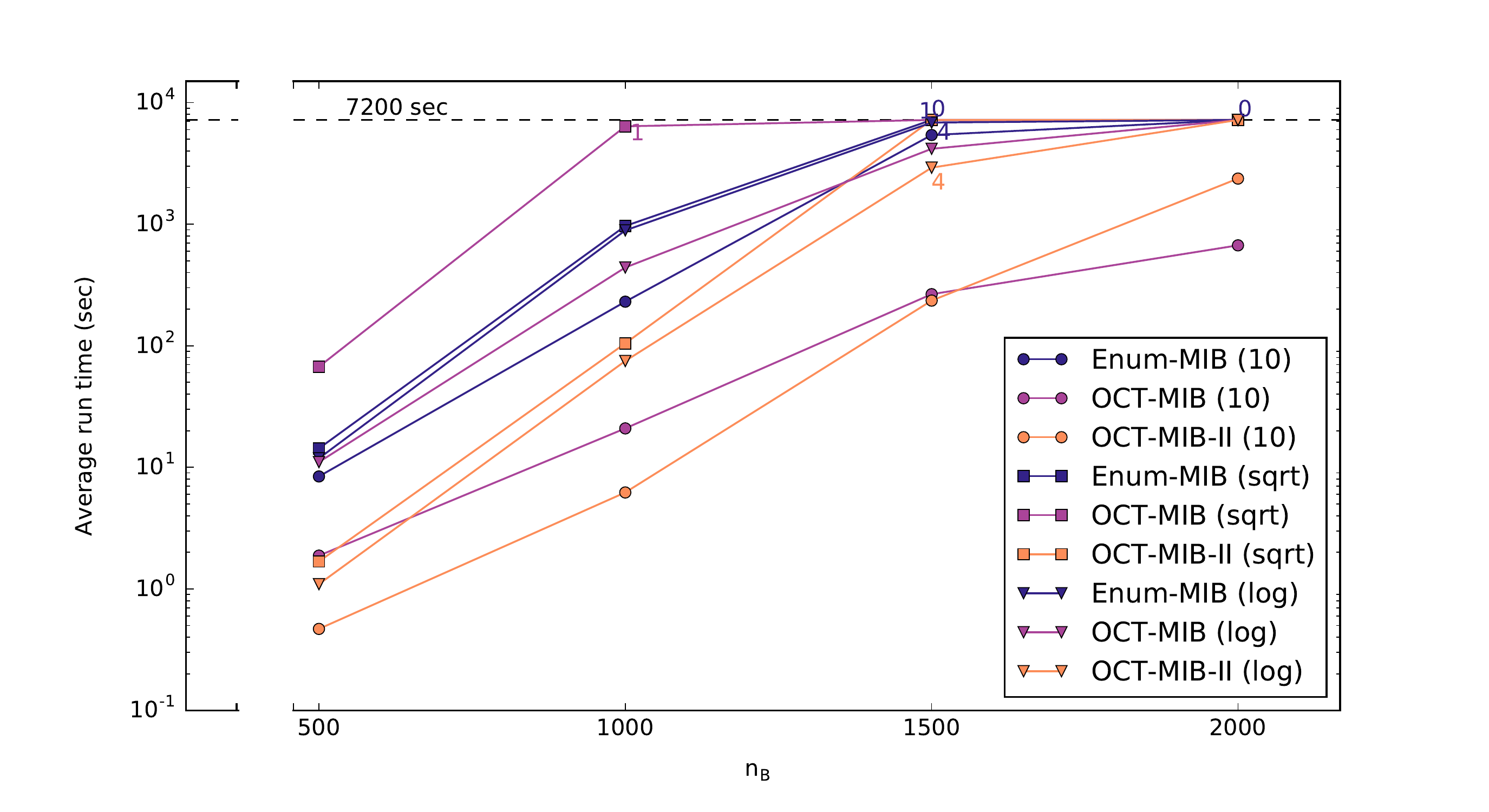}
   \caption{}

\end{subfigure}\\[-5ex]

\begin{subfigure}[b]{1\textwidth}
   \includegraphics[width=1\textwidth, trim=0cm .25cm .25cm .25cm,clip]{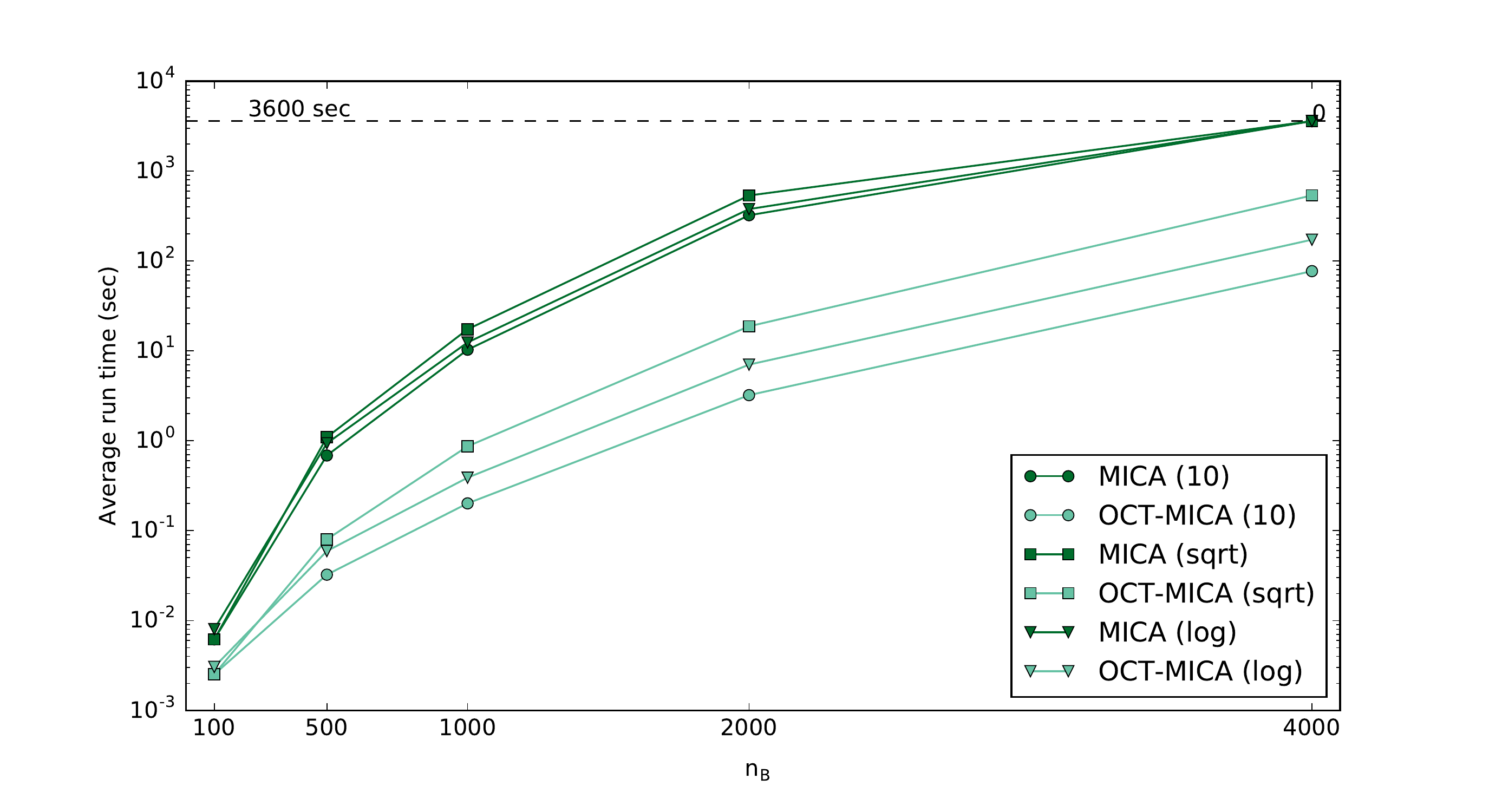}
   \caption{}

\end{subfigure}\\[-4ex]

    \caption{\label{fig:varyingoct}
   Runtimes of the MIB-enumerating (top) and MB-enumerating (bottom) algorithms on graphs where $n_L/n_R = 9$ and all expected edge densities are $0.05$. $n_B$ is varied (x-axis) and  the marker-type denotes the value of $n_O \in \{10, \sqrt{n_B}, 3\log_3(n_B)\}$ (see legend). The time-out value is set to 7200s for the MIB-enumerating algorithms and 3600s for the MB-enumerating algorithms.
    }
\end{figure*}

\subsection{Computational Biology Data}\label{sec:huffner}

Finally, we tested performance on real-world data using the graphs from~\cite{WERNICKE}, which come from computational biology. These graphs have previously been exhibited to have small OCT sets~\cite{HUFFNER}, and we used the implementation from~\cite{GOODRICH} of H{\"u}ffner's iterative compression algorithm~\cite{HUFFNER} to find the OCT decompositions. Computing the OCT decomposition for each graph ran in less than ten seconds, and often in less than one second. As can be seen in Table~\ref{tab:WH-sub}, \iOCTMIB performs the best of the MIB-enumerating algorithms and \OCTMICA is faster than \MICA. Full results are in the Appendix.

\begin{table}

{\small
    \begin{tabularx}{\textwidth}{@{\extracolsep{\fill}} cccc|cccc|ccc}
        \toprule
        $G$ & $n_B$ & $m$ & $n_O$ & $|M_I|$ & \iOCTMIB & \OCTMIB & \nonlex & $|M_B|$ & \OCTMICA & \MICA \\
        \midrule
\textsf{aa-24} & 258 & 1108 & 21 & 3890 & 2.108 & 9.140 & 14.167 & 1334 & 0.237 & 2.477 \\
\textsf{aa-30} & 39 & 71 & 4 & 56 & 0.002 & 0.007 & 0.006 & 36 & 0.002 & 0.007 \\
\textsf{aa-41} & 296 & 1620 & 40 & 11705 & 16.519 & 82.439 & 50.205 & 20375 & 9.059 & 47.789 \\
\textsf{aa-50} & 113 & 468 & 18 & 1272 & 0.322 & 0.778 & 1.098 & 1074 & 0.132 & 0.612 \\
\textsf{j-20} & 241 & 640 & 1 & 274 & 0.013 & 0.065 & 0.484 & 228 & 0.009 & 0.188 \\
\textsf{j-24} & 142 & 387 & 4 & 150 & 0.013 & 0.027 & 0.089 & 104 & 0.007 & 0.025 \\
        \bottomrule
    \end{tabularx}
}
    \caption{\label{tab:WH-sub}
        A sampling of the runtimes of the biclique-enumeration algorithms on the Wernicke-H{\"u}ffner computational biology data~\cite{WERNICKE}.
    }
\end{table}

\section{Conclusion}
We present a suite of new algorithms for enumerating maximal (induced) bicliques in general graphs, two of which are parameterized by the size of an odd cycle transversal. It is particularly noteworthy that the parameterized algorithms empirically outperform the general approaches even when their asymptotic worst-case complexities are worse. This highlights a weakness of standard complexity analysis, as many aspects of an algorithm get ``swept under the rug". 

It is also interesting that even though \AMBpfull and \MBpfull are closely related problems, the MB-enumerating algorithms are often an order of magnitude faster than their MIB-enumerating counterparts. The reason for this can likely be attributed to two causes: the number of MBs is significantly less than the number of MIBs in sparse graphs, and that the stricter structure of MIBs requires more work to ensure. For $S \subseteq V$, there is exactly one MB of the form $S \times T \subseteq V$ in $G$, but there can be many MIBs with this structure.

We implement and benchmark all of the algorithms on a corpus of synthetic and real-world computational biology graphs, and establish that parameterized approaches are often at least an order of magnitude faster than the general approaches. This remains true even when $n_O \in O(\sqrt{n})$. It would be interesting to experimentally evaluate as $n_O$ increases, at what point the standard methods outperform those optimized for near-bipartite graphs. Finally, we note as in ~\cite{KLOSTER}, the current implementations of the algorithms could be improved by replacing the MIS-enumeration algorithm with that of~\cite{TSUKIYAMA}, and the M(I)B-enumeration on bipartite graphs with the implementation used in~\cite{ZHANG}.

%

\appendix
\part*{Appendices}

\section{MIB-Enumeration Framework Subroutines}\label{app:mib}
We now provide algorithmic details and proofs of the complexity and correctness of \MM and \AT.

\subsection{\MM}

Recall that \MM takes in $(C, S)$, where $C$ is an induced biclique and $S \subseteq V$, and either returns a MIB $C^+$ where $C \subseteq C^+$, $C^+ \subseteq C \cup S$, $C \neq \emptyset$, or returns $\emptyset$. If it returns $\emptyset$ and $C \neq \emptyset$ then there is another MIB $D$ which contains $C$ and $v \in (V \setminus S) \setminus C$. We give pseudo-code of \MM in Algorithm~\ref{pc:MM}.

\begin{algorithm}[h!]
\begin{algorithmic}[1]
	\State Input: {$G=(V,E)$, $C=C_1\times C_2$, $S$}

	\State Let $C_S = S \setminus (C_1 \cup C_2)$

	\If {$C == \emptyset$}
		\State\Return{$\emptyset$}
	\EndIf

	\For {$v \in C_S$}
		\If {$|N(v) \cap C_1| == |C_1| \And |N(v) \cap C_2| == 0$}
			\State $C_2 = C_2 \cup \{v\}$
			\State $C_S \setminus \{v\}$
		\EndIf
	\EndFor

	\For {$v \in C_S$}
		\If {$|N(v) \cap C_2| == |C_2| \And |N(v) \cap C_1| == 0$}
			\State $C_1 = C_1 \cup \{v\}$
		\EndIf
	\EndFor

	\State $V_S = V \setminus (S \cup C_1 \cup C_2)$

	\For {$v \in V_S$}
		\If {$|N(v) \cap C_1| == |C_1| \And |N(v) \cap C_2| == 0$}
			\State\Return{$\emptyset$}
		\EndIf
	\EndFor
	\For {$v \in V_S$}
		\If {$|N(v) \cap C_2| == |C_2| \And |N(v) \cap C_1| == 0$}
			\State\Return{$\emptyset$}
		\EndIf
	\EndFor

	\State\Return{$C^+ = C_1\times C_2$}

\end{algorithmic}
\caption{\MM} \label{pc:MM}
\end{algorithm}

\begin{lemma}
\MM returns a MIB $C^+$ where $C \subseteq C^+$, $C^+ \subseteq C \cup S$, $C \neq \emptyset$, or returns $\emptyset$. 
\end{lemma}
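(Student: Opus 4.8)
The plan is to verify three things about any nonempty return value $C^+ = C_1 \times C_2$: that it is an induced biclique, that $C \subseteq C^+ \subseteq C \cup S$, and that it is maximal in $G$. Since the only alternative outcome of \MM is an explicit return of $\emptyset$, establishing these properties suffices for the lemma.

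First I would treat the two extension loops as maintaining a loop invariant. Every vertex added to $C_2$ in the first loop satisfies $|N(v) \cap C_1| = |C_1|$ and $|N(v) \cap C_2| = 0$ measured against the \emph{current} sets, and symmetrically in the second loop; an easy induction then shows $C_1$ and $C_2$ each remain independent and fully joined to one another, so $C^+$ is an induced biclique at every step. Because vertices are only ever added (never removed) and each added vertex is drawn from $C_S \subseteq S$, the containment $C \subseteq C^+ \subseteq C \cup S$ is immediate; and since a nonempty biclique is returned only after passing the initial $C = \emptyset$ test, both sides of $C^+$ are nonempty.

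The heart of the argument is maximality, which I would split according to where a would-be extending vertex lives. For any $u \in V \setminus (S \cup C_1 \cup C_2)$, i.e. $u \in V_S$, the two $V_S$ loops test exactly the conditions under which $u$ could be appended to $C_2$ or to $C_1$ and return $\emptyset$ if either holds; hence reaching the final return certifies that no such $u$ is addable. The delicate case is $u \in S \setminus (C_1 \cup C_2)$, where such a $u$ was necessarily processed in the loops yet not added. Here I rely on monotonicity: in the first loop $C_1$ is frozen and $C_2$ only grows, so if the final $C_2$ were nonadjacent to $u$ while the final $C_1$ were fully adjacent, both tests would already have held when $u$ was processed, forcing $u$ into $C_2$, a contradiction. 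The symmetric observation for the second loop, where $C_2$ is frozen at its post-first-loop value and $C_1$ only grows, rules out appending $u$ to $C_1$. I expect this to be the main obstacle, since one must argue that enlarging $C_1$ in the second loop cannot re-enable a $C_2$-addition (and vice versa); the fact that each side grows monotonically within its own loop, together with $C_1 \supseteq C_1^{\mathrm{orig}}$ and $C_2 \supseteq C_2^{\mathrm{orig}}$, is precisely what makes the greedy two-pass extension saturate all of $S$.

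Combining the two cases, no vertex of $G$ can be added to either side of $C^+$, which, via the standard equivalence between single-vertex extendability and proper containment in a larger induced biclique, means $C^+$ is a MIB. Together with the containment bounds from the invariant this yields the stated conclusion, while whenever the saturation check fails on some $V_S$ vertex the routine instead returns $\emptyset$, covering the remaining alternative.
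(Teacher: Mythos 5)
Your proof is correct and takes essentially the same route as the paper's: containment and inducedness follow from the add-only loop invariant, the $V_S$ loops certify non-extendability by vertices outside $S$, and maximality over $S$ comes from saturation by the two greedy passes. If anything, your monotonicity argument for why a processed-but-unadded vertex of $S$ can never become addable (each test condition only gets harder as $C_1$ and $C_2$ grow) makes explicit a step the paper's terser proof leaves implicit.
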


\begin{proof}
Referring to the pseudo-code in Algorithm~\ref{pc:MM},
it is clear that $C \subseteq C^+$, as no vertices are ever removed from the input biclique $C$. 
Furthermore, the only vertices added to $C^+$ are from $S$, so $C^+ \subseteq C \cup S$ and $C^+$ is the only biclique returned by \MM. 
Note that neither side of $C$ is empty and the only vertices added are independent from the side of the biclique which they are added to, so if we do not return $\emptyset$ the object returned is an induced biclique.
If no node from outside of $S$ can be added to $C^+$, then we will not return $\emptyset$ and thus $C^+$ is maximal.

\end{proof}

\begin{lemma}
If \MM returns $\emptyset$ and $C \neq \emptyset$ then there is another MIB $D$ in $G$ which contains $C$ and $v \in (V \setminus S) \setminus C$.
\end{lemma}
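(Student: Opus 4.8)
The plan is to trace exactly where \MM can emit $\emptyset$ and to extract from that a witness vertex certifying the claimed MIB. Since $C \neq \emptyset$, the early return in the $C == \emptyset$ branch is never taken, so the only way \MM outputs $\emptyset$ is through one of the two final loops ranging over $V_S = V \setminus (S \cup C_1 \cup C_2)$. I would fix the vertex $v \in V_S$ whose guard triggers the return. By that guard, $v$ is (without loss of generality) completely adjacent to the current $C_1$ and nonadjacent to the current $C_2$, i.e.\ $C_1 \subseteq N(v)$ and $C_2 \subseteq \overline{N}(v)$; the other loop is identical with the roles of $C_1$ and $C_2$ interchanged. Because $v \in V_S$ we immediately get $v \notin S$ and $v \notin C_1 \cup C_2$, and since no vertex is ever removed from the input $C$ (established in the preceding lemma, where $C \subseteq C^+$), this yields $v \in (V \setminus S) \setminus C$, precisely the kind of witness the statement demands.

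Next I would verify that $C^+ = C_1 \times C_2$ is a genuine induced biclique at the moment the return fires, and that augmenting it with $v$ preserves this. The first two loops add a vertex to a side only when it is fully adjacent to the opposite side and independent from the side it joins, so the invariant that $C_1$ and $C_2$ are independent sets with complete bipartite adjacency between them is maintained throughout — the same reasoning used in the previous lemma. Placing $v$ on the $C_2$ side produces $C_1 \times (C_2 \cup \{v\})$: the set $C_2 \cup \{v\}$ is independent since $C_2 \subseteq \overline{N}(v)$, and every vertex of $C_1$ is adjacent to $v$ since $C_1 \subseteq N(v)$, so this is again an induced biclique. Crucially it contains $C$, because $C \subseteq C^+ \subseteq C_1 \times (C_2 \cup \{v\})$, together with the new vertex $v$.

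Finally I would invoke the standard fact that every induced biclique extends to a maximal one, taking $D$ to be any MIB with $C_1 \times (C_2 \cup \{v\}) \subseteq D$. Then $C \subseteq D$ and $v \in D \cap \big((V \setminus S) \setminus C\big)$, which is exactly what is required. I expect the only delicate point to be the bookkeeping that distinguishes the \emph{input} biclique $C$ from the \emph{mutated} $C_1, C_2$ at the return point — specifically, confirming that the loop guards are evaluated against the post-loop biclique and that appending $v$ respects independence on the side it joins. Every remaining step is a direct structural verification.
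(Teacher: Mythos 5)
Your proposal is correct and takes essentially the same route as the paper's own proof: both pinpoint the vertex $v \in V_S = V \setminus (S \cup C_1 \cup C_2)$ whose guard triggers the $\emptyset$ return, note that it can be added to the current biclique $C^* = C_1 \times C_2 \supseteq C$, and take $D$ to be any MIB containing $C^*$ and $v$. The paper compresses this into three sentences; your extra steps (checking the loop invariant keeps $C_1 \times C_2$ induced, placing $v$ on the correct side, and extending to a maximal induced biclique) are precisely the bookkeeping the paper leaves implicit.
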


\begin{proof}
Note that $C \subseteq C^* = C_1 \times C_2$ at line 12. As \MM returns $\emptyset$ there must be a vertex $v \in V_S = V \setminus (S \cup C^*)$ which can be added to $C^*$.
Let $D$ be a MIB containing $C^*$ and $v$, thus $D$ suffices to prove the lemma.

\end{proof}

\begin{lemma}

\MM runs in $O(m)$ time.

\end{lemma}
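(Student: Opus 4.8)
The plan is to charge the running time to the vertices of $G$ and their incident edges, showing that the total work is proportional to $\sum_{v\in V}|N(v)| = 2m$. First I would handle the preprocessing: computing $C_S = S\setminus(C_1\cup C_2)$ and $V_S = V\setminus(S\cup C_1\cup C_2)$ can each be done in $O(n)$ time by maintaining boolean indicator arrays recording membership in $C_1$, $C_2$, and $S$ (the relabeling of vertices to $\{0,\dots,n\}$ described in Section~\ref{sec:implementation} makes such arrays immediate). Since we assume $G$ is connected, $n-1\le m$, so this preprocessing is $O(m)$.

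Next I would bound each of the four loops. The crucial observation is that the guard in every loop depends only on the two quantities $|N(v)\cap C_1|$ and $|N(v)\cap C_2|$, and that both can be evaluated by a single scan of $N(v)$: for each neighbor $u\in N(v)$ we test membership of $u$ in $C_1$ and in $C_2$ in $O(1)$ time via the indicator arrays, accumulating the two counts, and then compare against the maintained cardinalities $|C_1|$ and $|C_2|$. Hence a single iteration of any loop costs $O(|N(v)|)$. The in-place updates---adding $v$ to $C_1$ or $C_2$ and deleting it from $C_S$, together with flipping the corresponding indicator bit and incrementing the stored cardinality---are each $O(1)$, and a vertex is added at most once, so these contribute only $O(n)=O(m)$ in aggregate.

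Finally I would assemble the charging argument. The first two loops range over $C_S\subseteq S$ and the last two over $V_S$, and since $C_S$ and $V_S$ are disjoint, every vertex $v$ is examined in at most two loop iterations. Thus the work spent on $v$ across the entire execution is $O(|N(v)|)$, and summing over all vertices gives $\sum_{v\in V}O(|N(v)|)=O(m)$ by the handshake identity. Adding the $O(m)$ preprocessing yields the claimed $O(m)$ bound.

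The main obstacle, and the only place requiring care, is justifying that the intersection-size tests cost $O(|N(v)|)$ rather than more: this is exactly where the choice of data structure matters, since with constant-time membership queries into $C_1$ and $C_2$ one scans only $N(v)$, whereas a naive sorted-vector intersection would incur an extra $O(|C_i|)$ term per vertex that does not telescope to $O(m)$. I would therefore state explicitly that $C_1$, $C_2$, and $S$ are equipped with $O(1)$ membership tests and that their cardinalities are maintained incrementally, after which the degree-charging bound follows directly.
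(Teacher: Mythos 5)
Your proof is correct and follows essentially the same approach as the paper's: both bound each loop iteration by a scan of $N(v)$ with $O(1)$ membership tests (via an $O(n)$-initialized indicator structure) so that each edge is touched a constant number of times, and both invoke connectivity to absorb the $O(n)$ setup into $O(m)$. Your version merely spells out details the paper leaves implicit, such as the incremental maintenance of $|C_1|$, $|C_2|$ and the disjointness of $C_S$ and $V_S$ in the charging argument.
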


\begin{proof}
Note that because $G$ is connected, $n \in O(m)$.
Setting $C_S$ and $V_S$ can be done in $O(n)$ time.
In each for loop, we can scan all of the edges incident to each $v$ in the iterated-over set and keep count of how many nodes from $C_i$ have been seen (checking for inclusion can be done in $O(1)$ time with an $O(n)$ initialization step).
Thus, each edge is scanned at most once per for loop.

\end{proof}

\subsection{\AT}
Recall that \AT takes in $(C, v)$ where $C=C_1 \times C_2$ is an induced biclique and $v \in V \setminus (C_1 \cup C_2)$, and returns the induced biclique where $v$ is added to $C_1$, $N(v)$ is removed from $C_1$, and $\overline{N}(v)$ is removed from $C_2$ if $C_2 \setminus \overline{N}(v) \neq \emptyset$ and $\emptyset$ otherwise.
We give pseudo-code of \AT in Algorithm~\ref{pc:AT}.

\begin{algorithm}[h!]
\begin{algorithmic}[1]
	\State Input: {$G=(V,E)$, $C=C_1\times C_2$, $v \in V \setminus (C_1 \cup C_2)$}

	\State $C_1' = (C_1 \cup \{v\}) \setminus N(v)$
	\State $C_2' = C_2 \cap {N(v)}$

	\If{$C_2' == \emptyset$}
		\State\Return{$\emptyset$}

	\EndIf

	\State\Return{$C_1'\times C_2'$}

\end{algorithmic}
\caption{\AT} \label{pc:AT}
\end{algorithm}

\begin{lemma}
\AT returns the induced biclique where $v$ is added to $C_1$, $N(v)$ is removed from $C_1$, and $\overline{N}(v)$ is removed from $C_2$ if $C_2 \setminus \overline{N}(v) \neq \emptyset$, and $\emptyset$ otherwise.
\end{lemma}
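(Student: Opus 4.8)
The plan is a direct, line-by-line verification that Algorithm~\ref{pc:AT} realizes the set operations named in the statement and that its output is a genuine induced biclique. First I would observe that the two set descriptions coincide: since $\overline{N}(v) = V \setminus N(v)$, we have $C_2 \cap N(v) = C_2 \setminus \overline{N}(v)$, so the assignment $C_2' = C_2 \cap N(v)$ on line~3 removes exactly the non-neighbors $\overline{N}(v)$ from $C_2$, and the emptiness test on line~4 is precisely the condition $C_2 \setminus \overline{N}(v) = \emptyset$ that distinguishes the two return branches. Likewise, $C_1' = (C_1 \cup \{v\}) \setminus N(v)$ on line~2 adds $v$ to $C_1$ and deletes $N(v)$ as required; since $v \notin N(v)$, the vertex $v$ survives this deletion, so $v \in C_1'$.

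It then remains to verify that when $C_2' \neq \emptyset$ the returned object $C_1' \times C_2'$ is in fact an induced biclique. I would check the defining properties in turn. Both sides are nonempty: $C_2' \neq \emptyset$ by the guard on line~4, and $v \in C_1'$ gives $C_1' \neq \emptyset$. For complete connectivity, note $C_1' \subseteq C_1 \cup \{v\}$ and $C_2' \subseteq C_2$; every vertex of $C_1 \cap C_1'$ is adjacent to all of $C_2 \supseteq C_2'$ because $C = C_1 \times C_2$ is a biclique, while $v$ is adjacent to all of $C_2' = C_2 \cap N(v)$ by construction. For independence of each side, $C_2' \subseteq C_2$ inherits independence from the induced biclique $C$, and $C_1' = (C_1 \setminus N(v)) \cup \{v\}$ is independent because $C_1 \setminus N(v)$ is independent (a subset of $C_1$) and contains no neighbor of $v$, so inserting $v$ creates no edge. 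Thus $C_1' \times C_2'$ is an induced biclique, and the other branch returns $\emptyset$ exactly when $C_2 \setminus \overline{N}(v) = \emptyset$, matching the claim.

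I do not expect a genuine obstacle, since the lemma is a correctness statement about a short procedure; the only point requiring care is confirming that the induced (independence) property is preserved when $v$ is inserted into $C_1$, which is precisely why the algorithm strips $N(v)$ from $C_1$ before adding $v$. A brief remark that the line~4 check correctly prevents returning a degenerate object with an empty side then completes the argument.
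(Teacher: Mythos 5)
Your proof is correct and follows essentially the same route as the paper's own argument: a direct line-by-line verification that the assignments in Algorithm~\ref{pc:AT} implement the stated set operations and that the result (when nonempty) is an induced biclique, with independence of $C_1'$ guaranteed because $N(v)$ is stripped before $v$ is inserted. Your write-up is in fact somewhat more explicit than the paper's (e.g., spelling out complete connectivity and the identity $C_2 \cap N(v) = C_2 \setminus \overline{N}(v)$), but there is no substantive difference in approach.
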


\begin{proof}
Referring to the pseudo-code in Algorithm~\ref{pc:AT}, it is clear that $v$ is added to $C_1$ and $N(v)$ is removed from $C_1$. Additionally $v$'s non-neighbors are effectively removed from $C_2$ by intersecting it with $N(v)$. 
If $C_2' = \emptyset$ then  $C_2 \setminus \overline{N}(v) = \emptyset$ and $\emptyset$ is returned. Otherwise $C_1' \neq \emptyset$ since it includes $v$ and thus $C_1' \times C_2'$ is a biclique. $C_1' \times C_2'$ must be an induced biclique as $C_2' \subseteq C_2$, $C_1' \setminus \{v\} \subseteq C_1$, and $C_1 \times C_2$ is an induced biclique and $(N(v) \cap C_1') = \emptyset$ by definition.

\end{proof}

\begin{lemma}

\AT runs in $O(m)$ time.

\end{lemma}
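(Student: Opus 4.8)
The plan is to account for the cost of each line of Algorithm~\ref{pc:AT} and show the total is $O(n)$, then invoke connectivity of $G$ to conclude $O(n) \subseteq O(m)$. First I would recall, exactly as in the complexity proof for \MM, that since $G$ is connected we have $n \in O(m)$; hence it suffices to bound the running time by $O(n)$.

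The algorithm performs only a constant number of set operations on vertex sets. Computing $C_1' = (C_1 \cup \{v\}) \setminus N(v)$ requires adding the single vertex $v$ and then removing the neighborhood $N(v)$; computing $C_2' = C_2 \cap N(v)$ requires one intersection; and the emptiness test on $C_2'$ is immediate. The only point requiring care is arguing that each of these runs in linear rather than quadratic time. I would do this with a mark-and-filter scheme: initialize a boolean array indexed by $V$ in $O(n)$ time, mark the elements of $N(v)$ in $O(|N(v)|) \subseteq O(n)$ time, and then scan $C_1$ (respectively $C_2$), each of size at most $n$, keeping or discarding each element via an $O(1)$ membership lookup. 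This yields both the set difference and the intersection in $O(n)$ time each. Equivalently, storing vertex sets as sorted vectors (as described in the implementation section) lets these operations be carried out by a single linear merge.

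Summing the constant number of $O(n)$ operations gives a total of $O(n)$, and since $n \in O(m)$ this is $O(m)$, as claimed. I do not expect any genuine obstacle here; the sole subtlety is ensuring the union, difference, and intersection are implemented to run in linear rather than naive quadratic time, which the marking argument (or the sorted-vector representation) handles.
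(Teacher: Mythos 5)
Your proof is correct and takes essentially the same approach as the paper: an $O(n)$ preprocessing step enabling constant-time membership tests, followed by a linear scan (you mark $N(v)$ and scan $C_1$ and $C_2$, while the paper scans $v$'s incident edges against marked sets --- the same $O(n + \deg(v))$ bound either way), concluding with connectivity to get $n \in O(m)$.
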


\begin{proof}
Note that because $G$ is connected, $n \in O(m)$.
\AT can be completed by scanning all of $v$'s $O(m)$ incident edges in tandem with an $O(n)$ preprocessing step to allow for constant-time look-ups when checking for inclusion in a set.

\end{proof}

\section{MB-Enumeration Framework Subroutines}\label{app:mb}
We give a detailed description of the \MMNI and \consensus subroutines used in \OCTMICA, along with arguments of their correctness and complexity.
\subsection{\MMNI}
Extending a biclique to be maximal is different in the non-induced case from the induced case, since MBs are completely characterized by one side of the biclique. 
\begin{algorithm}[h!]
\caption{\MMNI} \label{pc:MMNI}
	\begin{algorithmic}[1]
		\State Input: {$G=(V,E)$, $B=X\times Y$}
		
		\State$X^*=\cap_{i\in Y}N(i)$
		\State$Y^*=\cap_{i\in X^*}N(i)$
		\State\Return$B^*=X^* \times Y^*$
	\end{algorithmic}
\end{algorithm}
\begin{lemma}
\MMNI runs in $O(m)$ time.
\end{lemma}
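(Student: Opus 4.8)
The plan is to bound the two neighborhood-intersection steps in Algorithm~\ref{pc:MMNI} separately, using the standard counting trick for computing a common neighborhood rather than a naive pairwise intersection. First I would recall, exactly as in the analyses of \MM and \AT, that since $G$ is connected we have $n \in O(m)$; this lets me absorb any $O(n)$ preprocessing or output-extraction pass into the desired $O(m)$ bound.

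For the first step, computing $X^* = \bigcap_{i \in Y} N(i)$, the idea is to maintain an integer counter $\mathrm{cnt}[v]$ for every $v \in V$, initialized to zero in $O(n)$ time. I would then iterate over each $i \in Y$ and scan its incident edges, incrementing $\mathrm{cnt}[v]$ for each neighbor $v$ of $i$. A vertex $v$ lies in $X^*$ precisely when $\mathrm{cnt}[v] = |Y|$, which can be read off in a single $O(n)$ pass. The crucial point is that the scanning phase performs $\sum_{i \in Y} \deg(i) \le \sum_{v \in V} \deg(v) = 2m$ increments in total, so this step is $O(m)$ regardless of $|Y|$.

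The second step, $Y^* = \bigcap_{i \in X^*} N(i)$, is handled identically after resetting the counters: scan the edges incident to the vertices of $X^*$ and retain those $v$ with $\mathrm{cnt}[v] = |X^*|$. Even though $X^*$ can be large, the same degree-sum argument gives $\sum_{i \in X^*} \deg(i) \le 2m$, so this step is also $O(m)$. Adding the two steps to the $O(n) \subseteq O(m)$ initialization yields the claimed $O(m)$ total.

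The main thing to get right — and the only place a careless implementation would go wrong — is charging the cost to \emph{edge scans} rather than to $|Y|$ or $|X^*|$. A direct pairwise intersection over these sets could be quadratic in their sizes, so the key obstacle is verifying that the counter-based approach touches each incident edge only a constant number of times, which bounds the work by the degree sum $2m$ independent of how large the intermediate sets $Y$ and $X^*$ become.
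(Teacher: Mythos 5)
Your proposal is correct and matches the paper's argument: both use the counter-based edge-scan trick, charging the work of each intersection to the degree sum so that every edge is touched at most a constant number of times, with an $O(n) \subseteq O(m)$ initialization justified by connectedness. Your write-up is in fact slightly more explicit than the paper's (spelling out the membership test $\mathrm{cnt}[v] = |Y|$ and the bound $\sum_{i \in Y} \deg(i) \le 2m$), but it is the same proof.
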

\begin{proof}
In order to form $X^*$, we can scan the edges incident to each $v \in Y$ and keep count of how many nodes from $X^*$ have been seen (checking for inclusion can be done in $O(1)$ time with an $O(n)$ initialization step). The same can be done for $Y^*$, where instead we scan the edges incident to each $v \in X^*$. Thus, each edge is scanned at most twice in \MMNI.

\end{proof}
\subsection{\consensus}
The \MICA section of \OCTMICA relies heavily on the \consensus operation introduced in \cite{ALEXE} for finding new candidate bicliques. For each pair of bicliques, there are four candidate bicliques which form the {\em consensus} of the pair. Note that any of the four candidates may be empty and if so discarded. \consensus runs in $O(n)$ time using standard techniques for set union and intersection.

\begin{algorithm}[h!]
	\caption{\consensus} \label{pc:consensus}
	\begin{algorithmic}[1]
		\State Input: {$G=(V,E)$, $B_\alpha=X_\alpha \times Y_\alpha$, $B_\beta=X_\beta \times Y_\beta$}
		
		\State$B_1=(X_\alpha \cup X_\beta) \times (Y_\alpha \cap Y_\beta)$
		\State$B_2=(X_\alpha \cap X_\beta) \times (Y_\alpha \cup Y_\beta)$
		\State$B_3=(Y_\alpha \cup X_\beta) \times (X_\alpha \cap Y_\beta)$
		\State$B_4=(X_\alpha \cap Y_\beta) \times (Y_\alpha \cup X_\beta)$
		\State $S=\{\}$
		\For{$B_i=X_i\times Y_i\in\{B_1,B_2,B_3,B_4\}$}
			\If{$|X_i|>0 \And |Y_i|>0$}
				\State $S$\Call{.add}{$B_i$}
			\EndIf
		\EndFor
		\State\Return$S$
	\end{algorithmic}
\end{algorithm}

\section{Additional Enumeration Experiments}\label{app:plots}
Here we include figures corresponding to additional experimental results of our initial benchmarking and on the computation biology data from~\cite{WERNICKE} described in sections \ref{sec:original} and \ref{sec:huffner} respectively.

\begin{figure*}[t!]
    \includegraphics [width=0.5\textwidth]{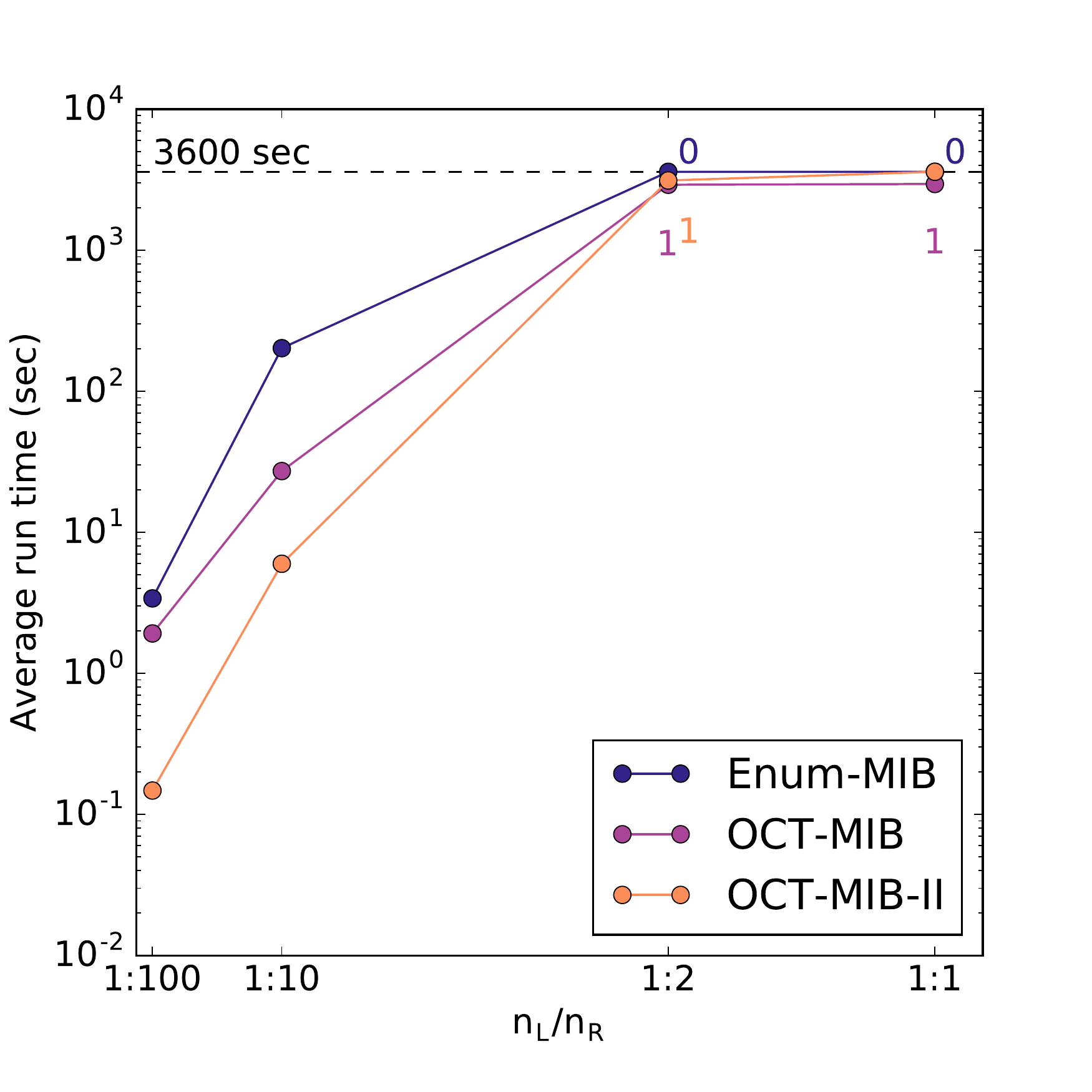}%
\hspace{0.01cm}
    \includegraphics [width=0.5\textwidth]{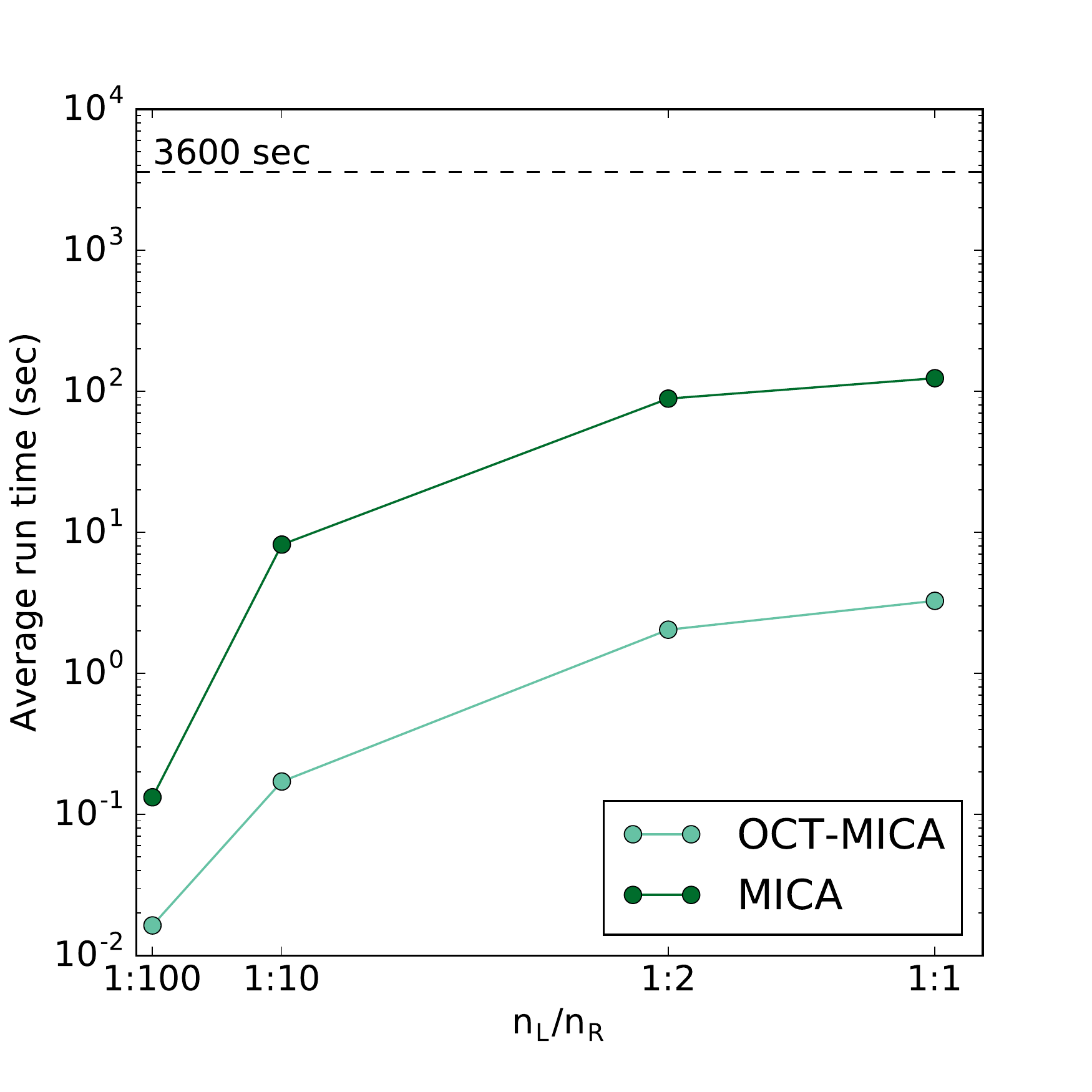}%
    \caption{\label{fig:app1}
	Runtimes of the MIB-enumerating (left) and MB-enumerating (right) algorithms on graphs where $n_B = 1000$ and $n_O = 10$. The ratio $n_L/n_R$ was varied.
    }
\end{figure*}

\begin{figure*}[t!]
    \includegraphics [width=0.5\textwidth]{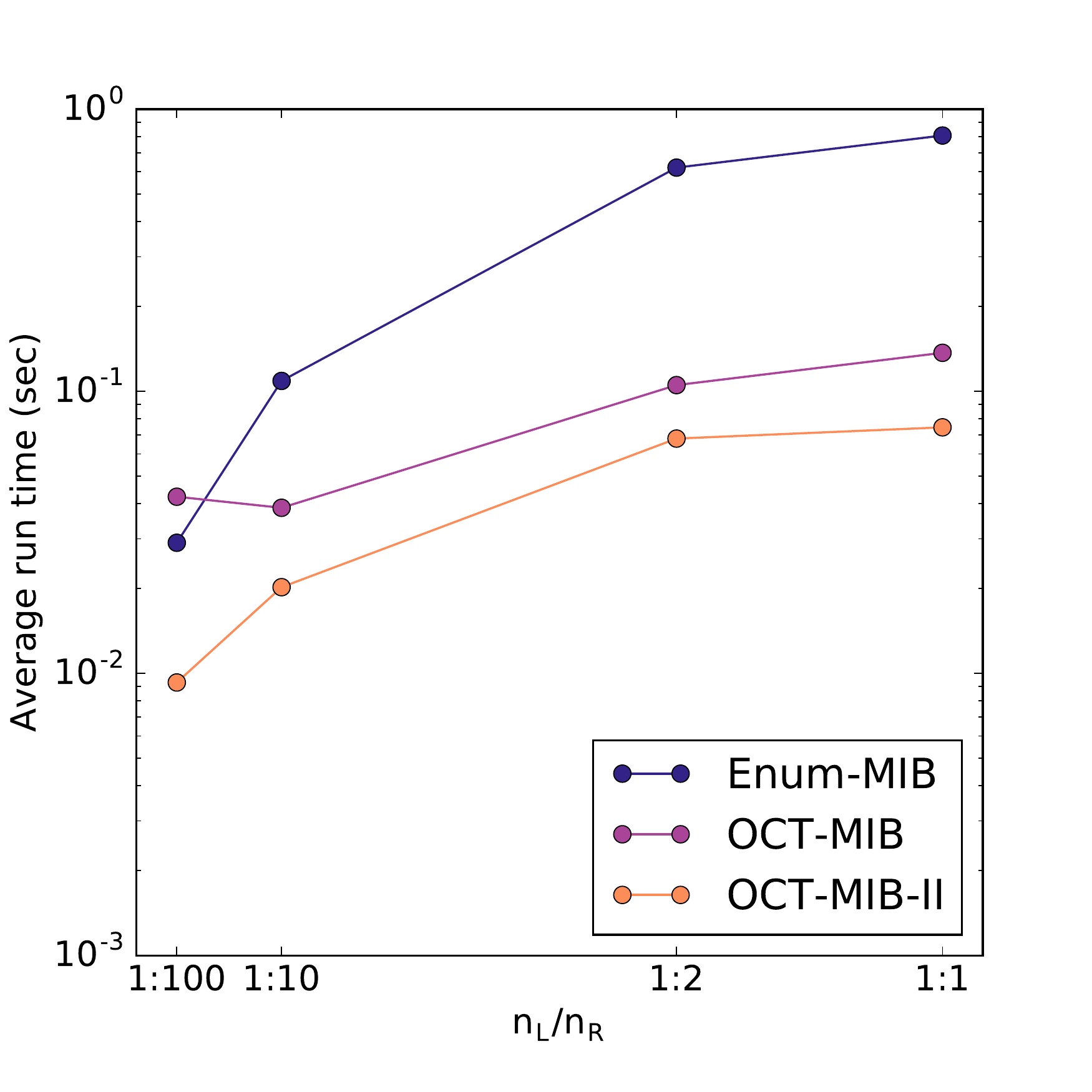}%
\hspace{0.01cm}
    \includegraphics [width=0.5\textwidth]{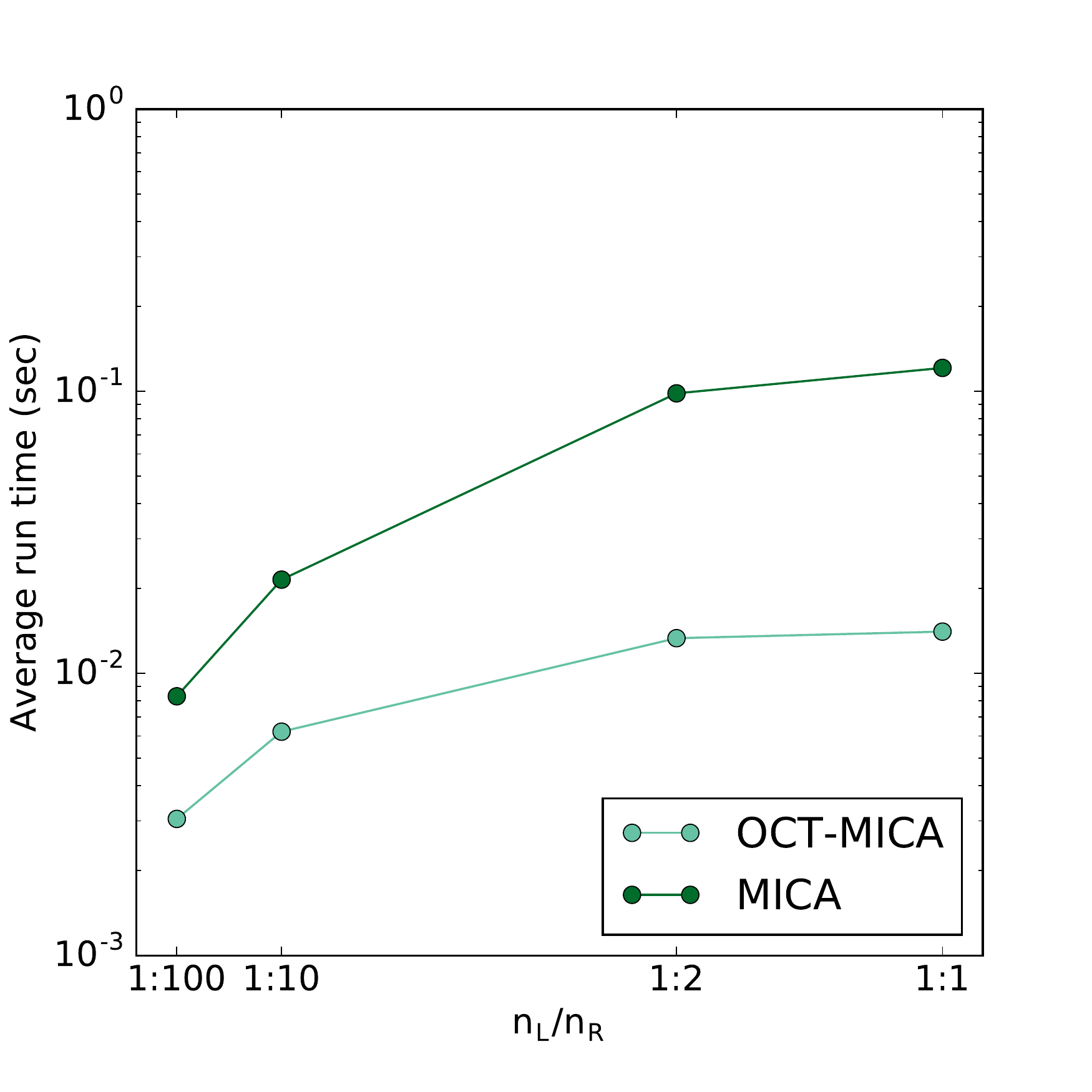}%
    \caption{\label{fig:app2}
	Runtimes of the MIB-enumerating (left) and MB-enumerating (right) algorithms on graphs where $n_B = 200$ and $n_O = 10$. The ratio $n_L/n_R$ was varied.
    }
\end{figure*}

\begin{figure*}[t!]
    \includegraphics [width=0.5\textwidth]{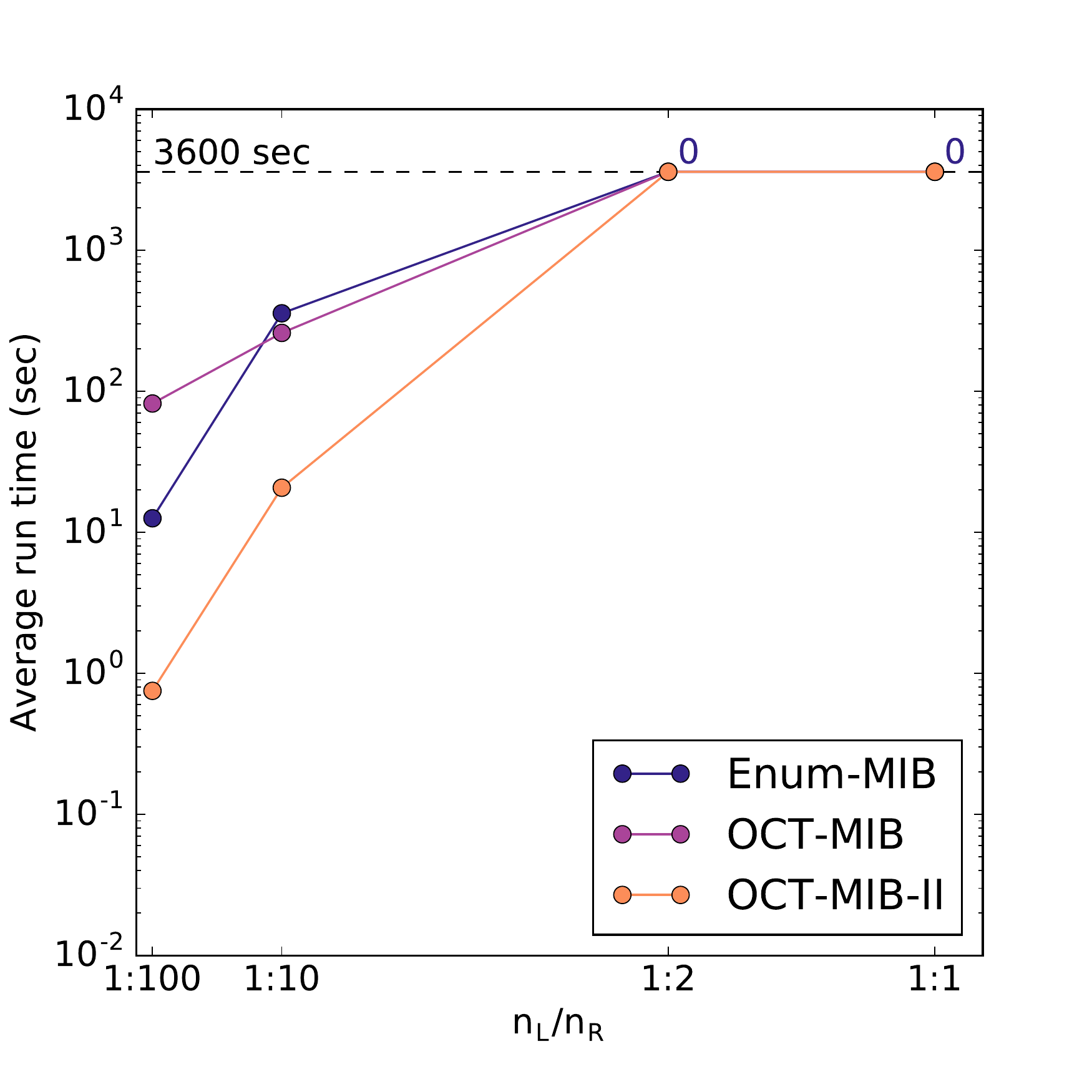}%
\hspace{0.01cm}
    \includegraphics [width=0.5\textwidth]{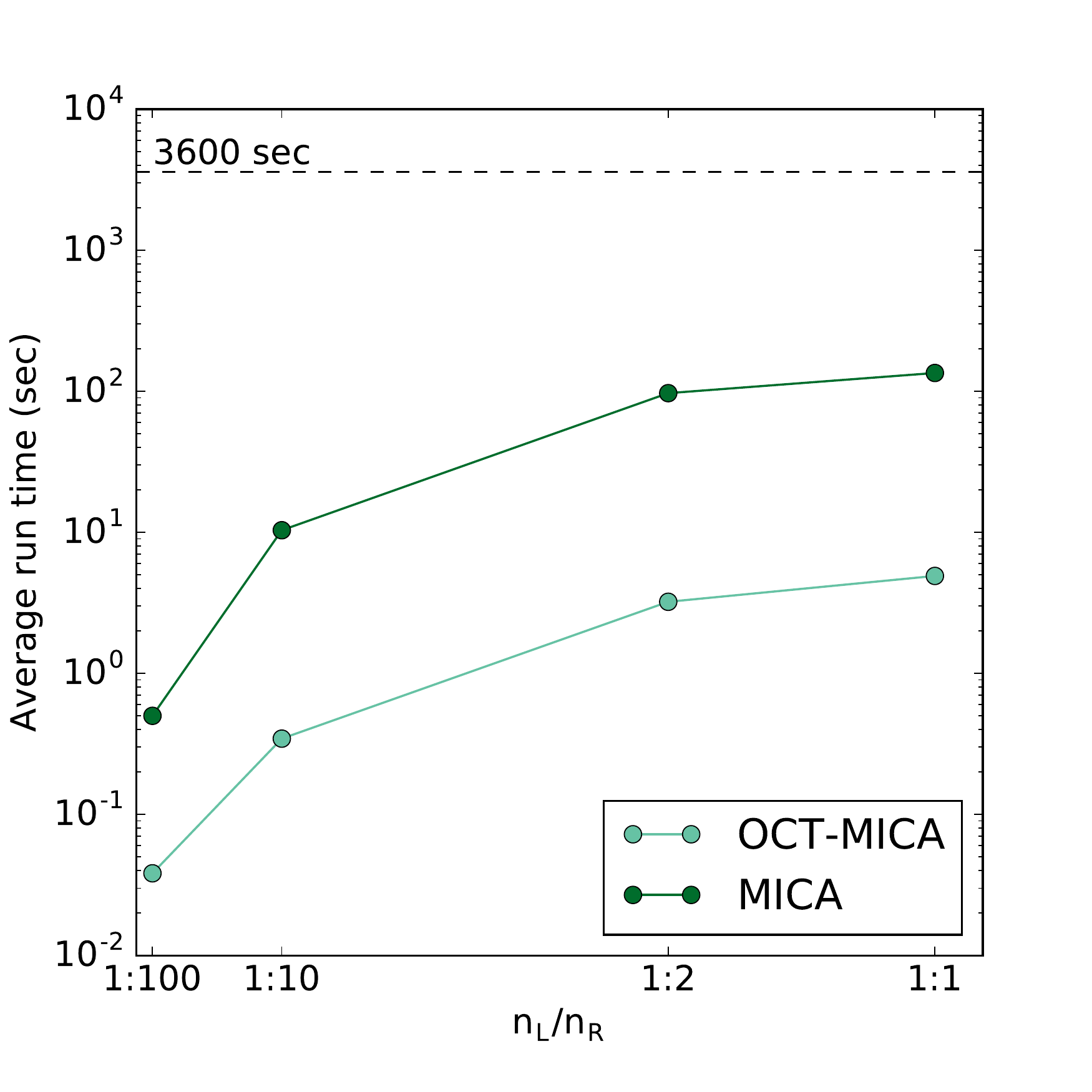}%
    \caption{\label{fig:app3}
	Runtimes of the MIB-enumerating (left) and MB-enumerating (right) algorithms on graphs where $n_B = 1000$ and $n_O = 19 \approx 3\log_3(n_B)$. The ratio $n_L/n_R$ was varied.
    }
\end{figure*}

\begin{figure*}[t!]
    \includegraphics [width=0.5\textwidth]{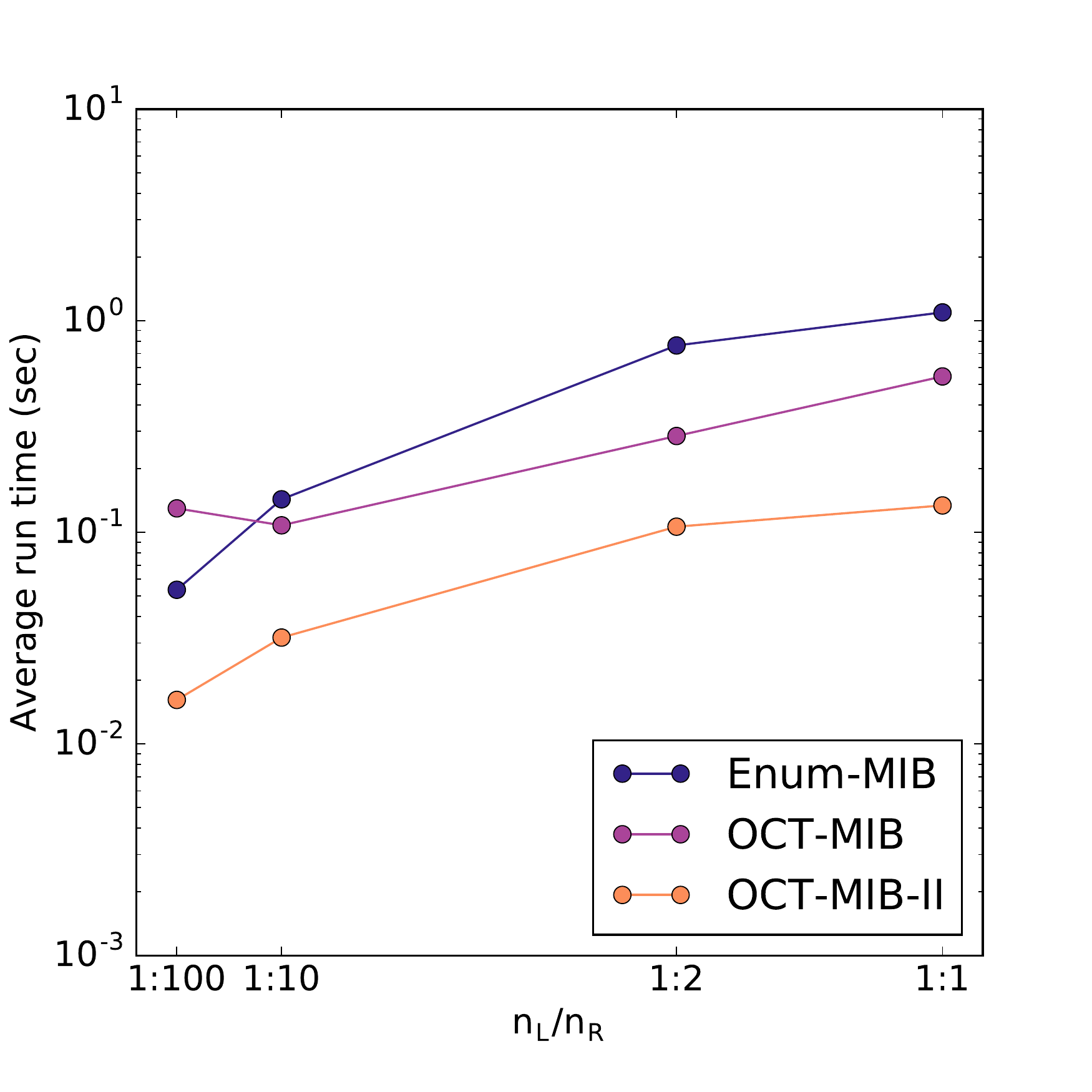}%
\hspace{0.01cm}
    \includegraphics [width=0.5\textwidth]{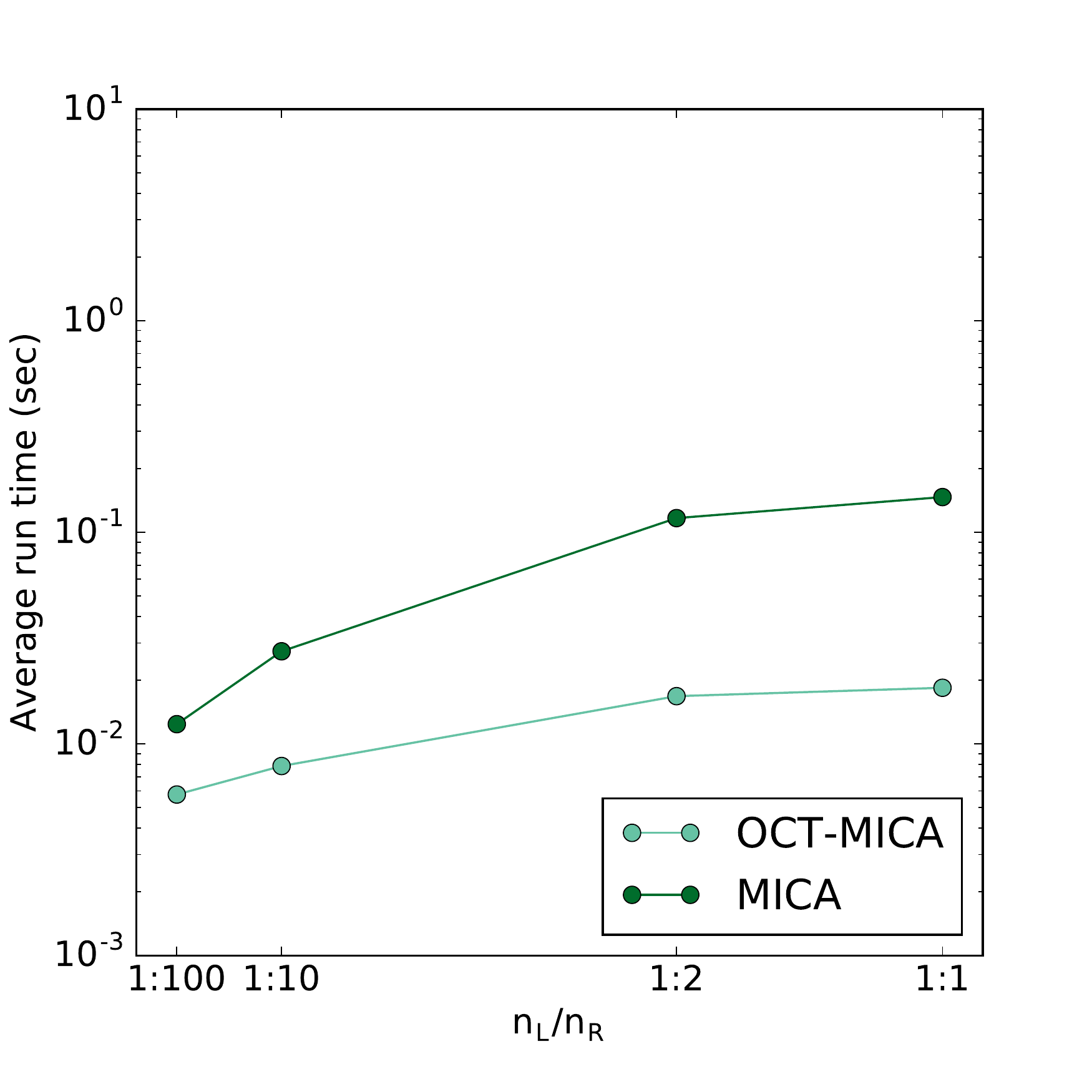}%
    \caption{\label{fig:app4}
	Runtimes of the MIB-enumerating (left) and MB-enumerating (right) algorithms on graphs where $n_B = 200$ and $n_O = 14 \approx 3\log_3(n_B)$. The ratio $n_L/n_R$ was varied.
    }
\end{figure*}

\begin{figure*}[t!]
    \includegraphics [width=0.5\textwidth]{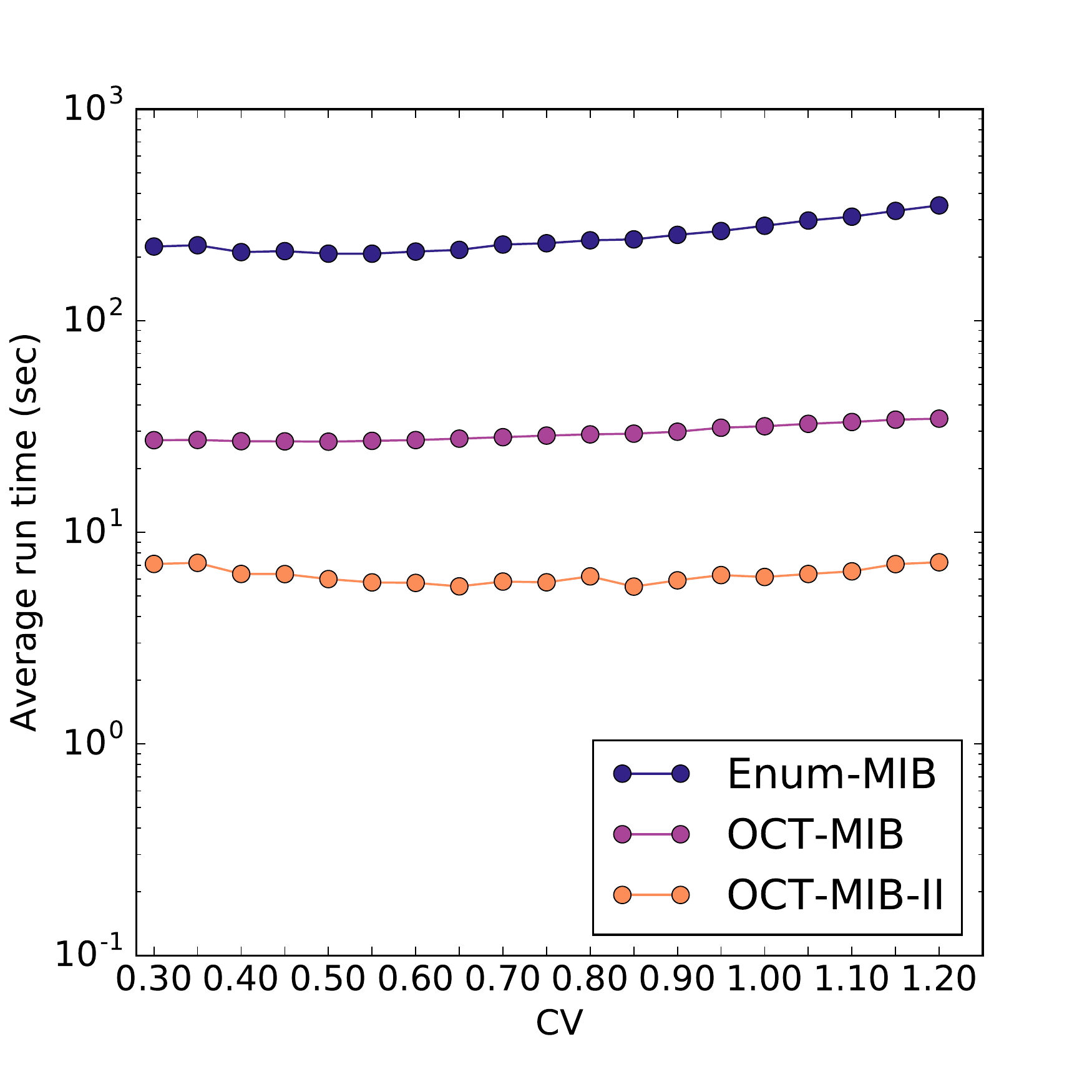}%
\hspace{0.01cm}
    \includegraphics [width=0.5\textwidth]{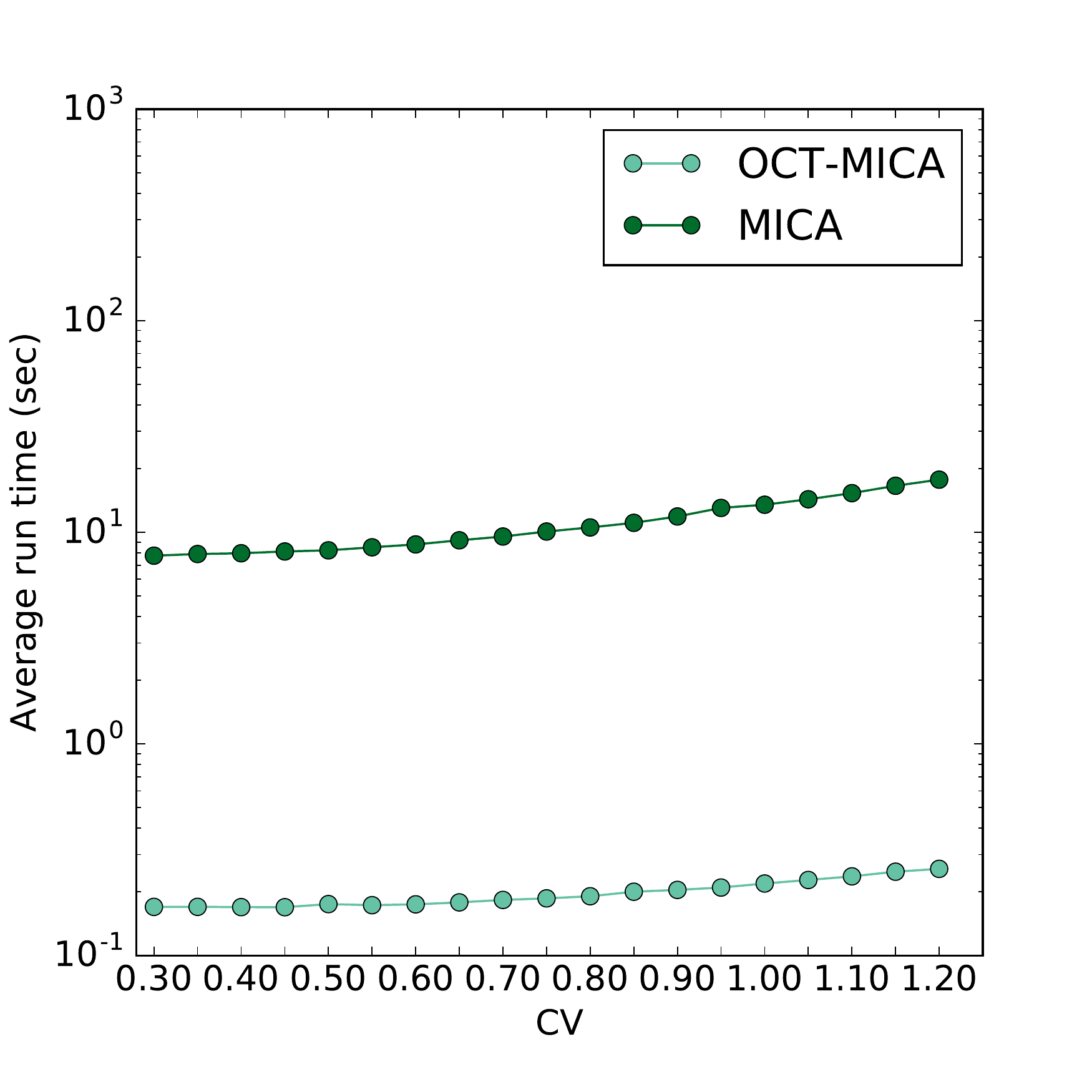}%
    \caption{\label{fig:app5}
	Runtimes of the MIB-enumerating (left) and MB-enumerating (right) algorithms on graphs where $n_B = 1000$ and $n_O = 10$. The coefficient of variation between $L$ and $R$ was varied.
    }
\end{figure*}

\begin{figure*}[t!]
    \includegraphics [width=0.5\textwidth]{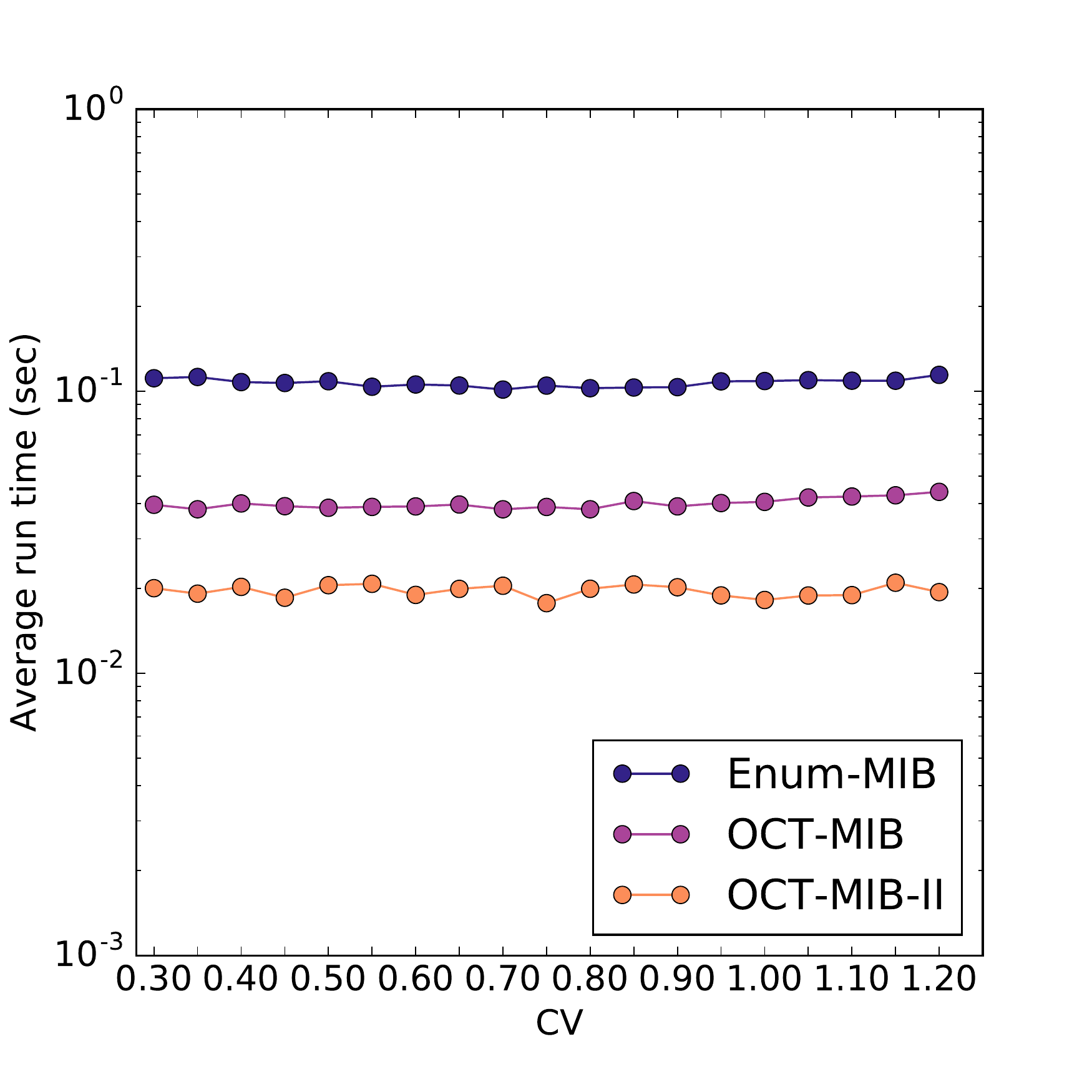}%
\hspace{0.01cm}
    \includegraphics [width=0.5\textwidth]{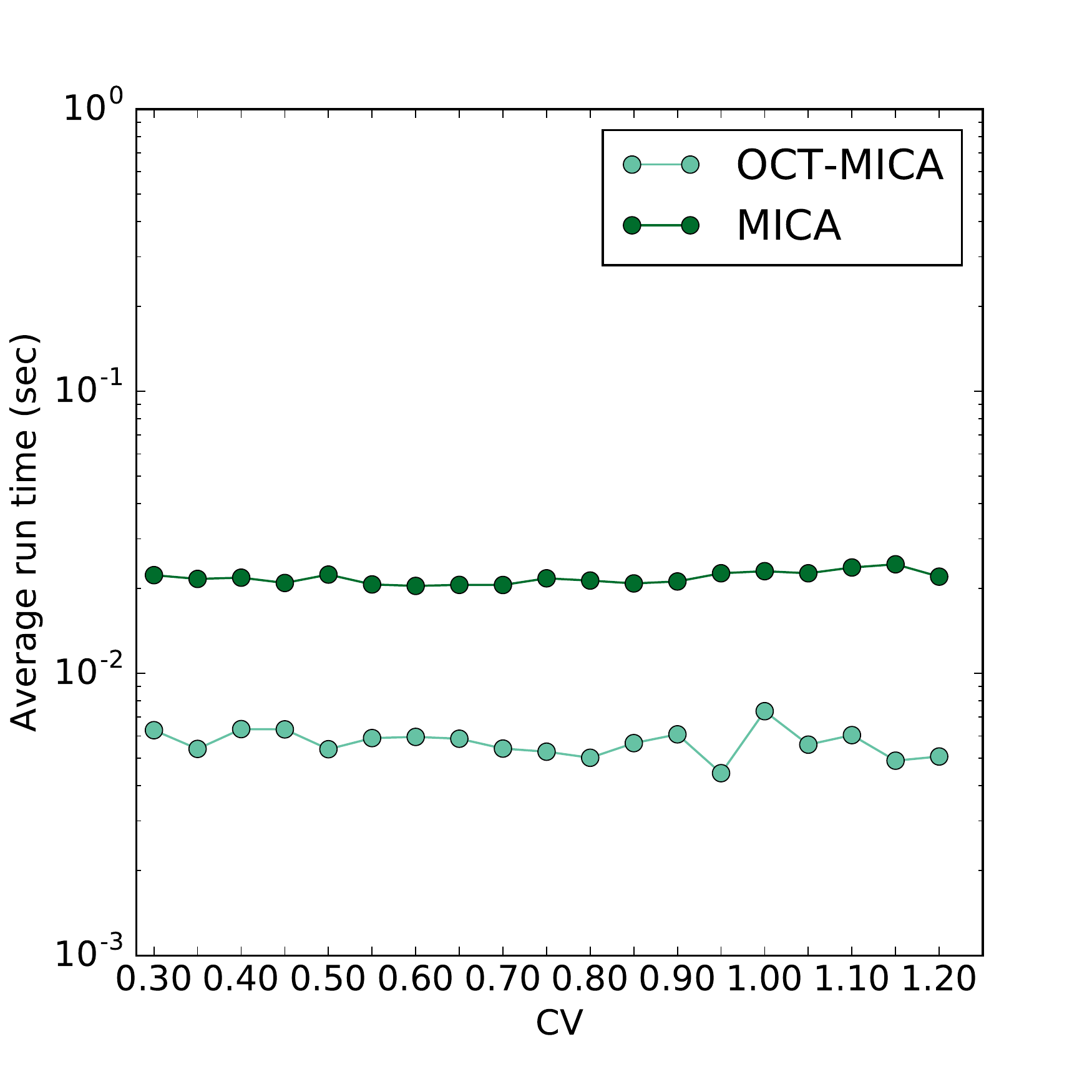}%
    \caption{\label{fig:app6}
	Runtimes of the MIB-enumerating (left) and MB-enumerating (right) algorithms on graphs where $n_B = 200$ and $n_O = 10$. The coefficient of variation between $L$ and $R$ was varied.
    }
\end{figure*}

\begin{figure*}[t!]
    \includegraphics [width=0.5\textwidth]{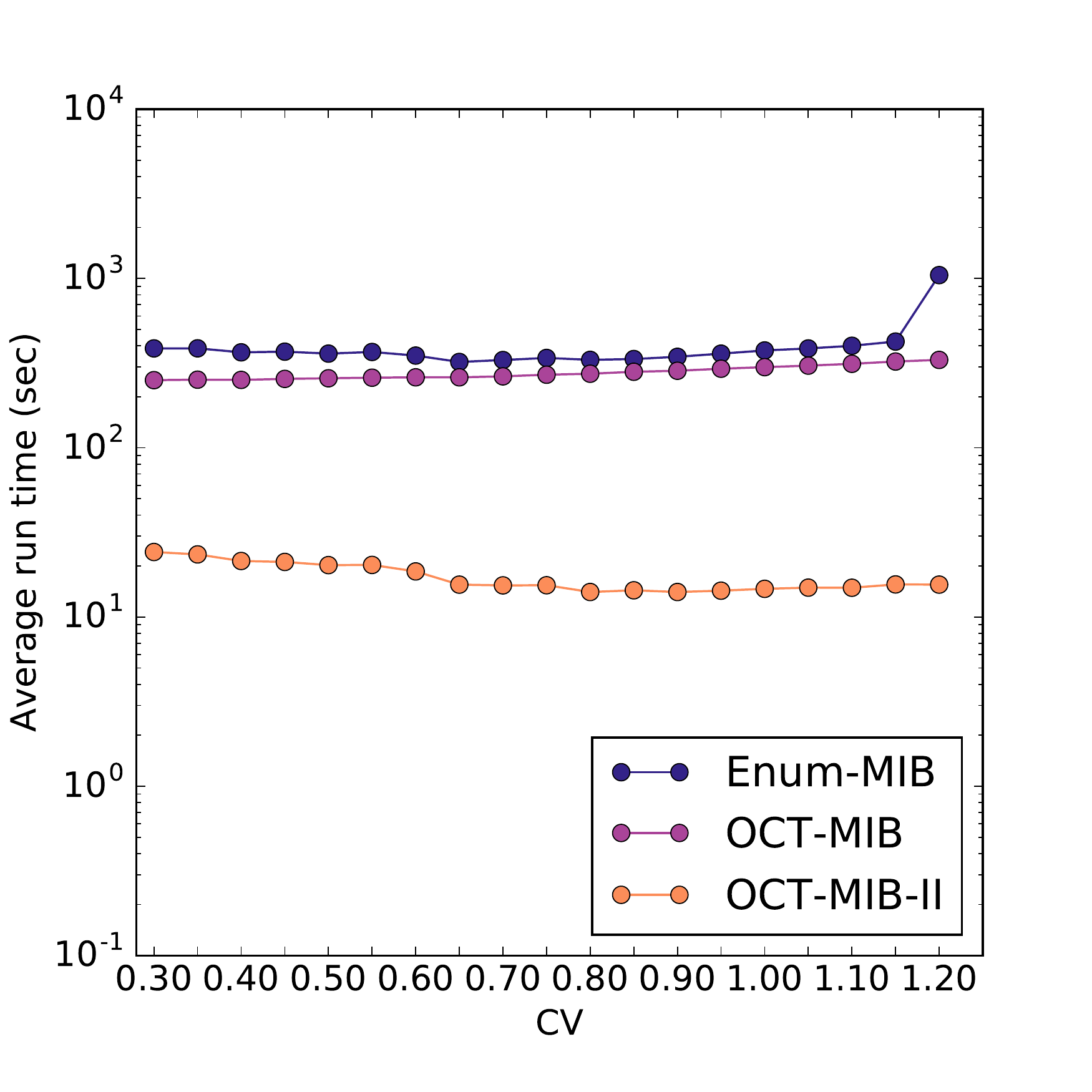}%
\hspace{0.01cm}
    \includegraphics [width=0.5\textwidth]{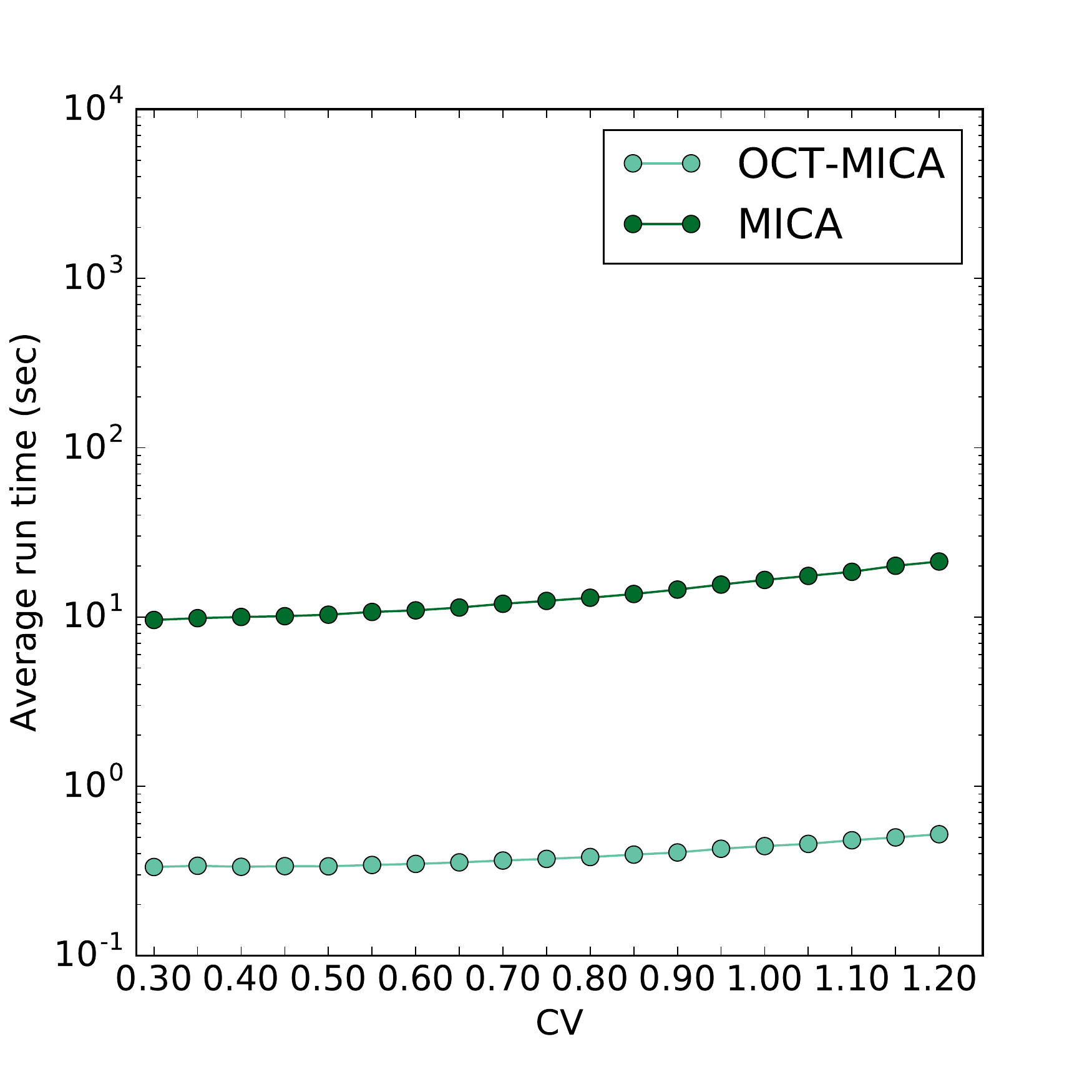}%
    \caption{\label{fig:app7}
	Runtimes of the MIB-enumerating (left) and MB-enumerating (right) algorithms on graphs where $n_B = 1000$ and $n_O = 19 \approx 3\log_3(n_B)$. The coefficient of variation between $L$ and $R$ was varied.
    }
\end{figure*}

\begin{figure*}[t!]
    \includegraphics [width=0.5\textwidth]{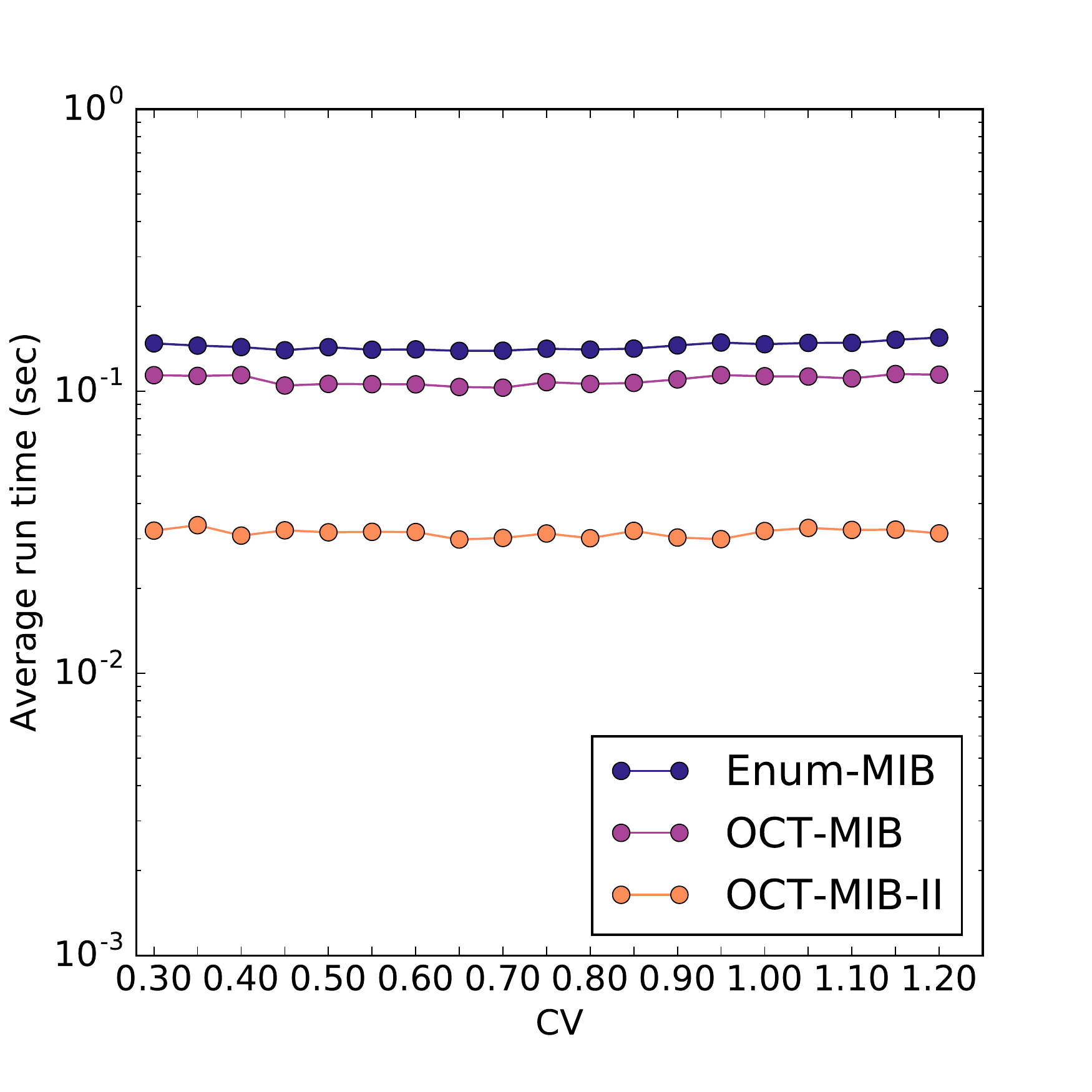}%
\hspace{0.01cm}
    \includegraphics [width=0.5\textwidth]{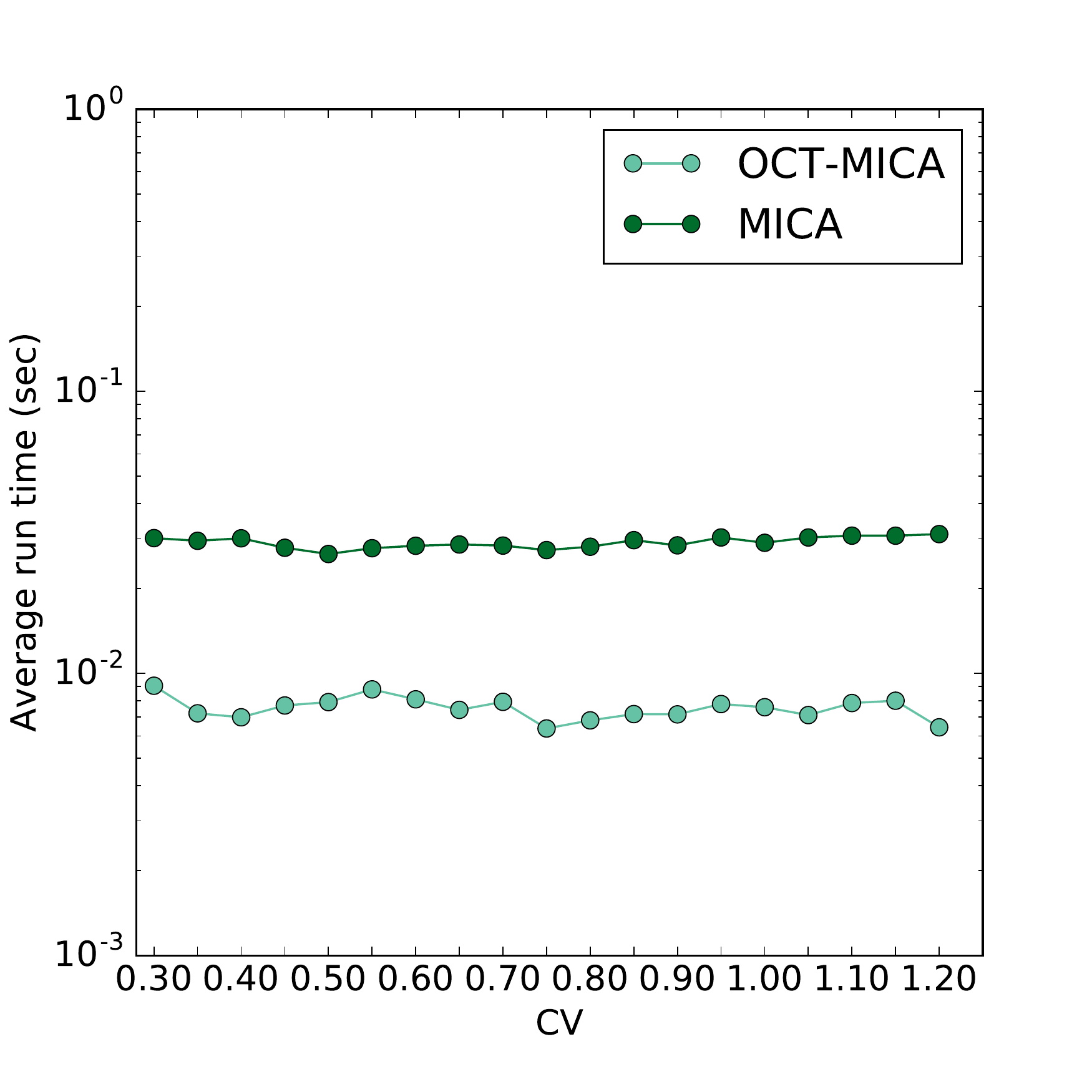}%
    \caption{\label{fig:app8}
	Runtimes of the MIB-enumerating (left) and MB-enumerating (right) algorithms on graphs where $n_B = 200$ and $n_O = 14 \approx 3\log_3(n_B)$. The coefficient of variation between $L$ and $R$ was varied.
    }
\end{figure*}

\begin{figure*}[t!]
    \includegraphics [width=0.5\textwidth]{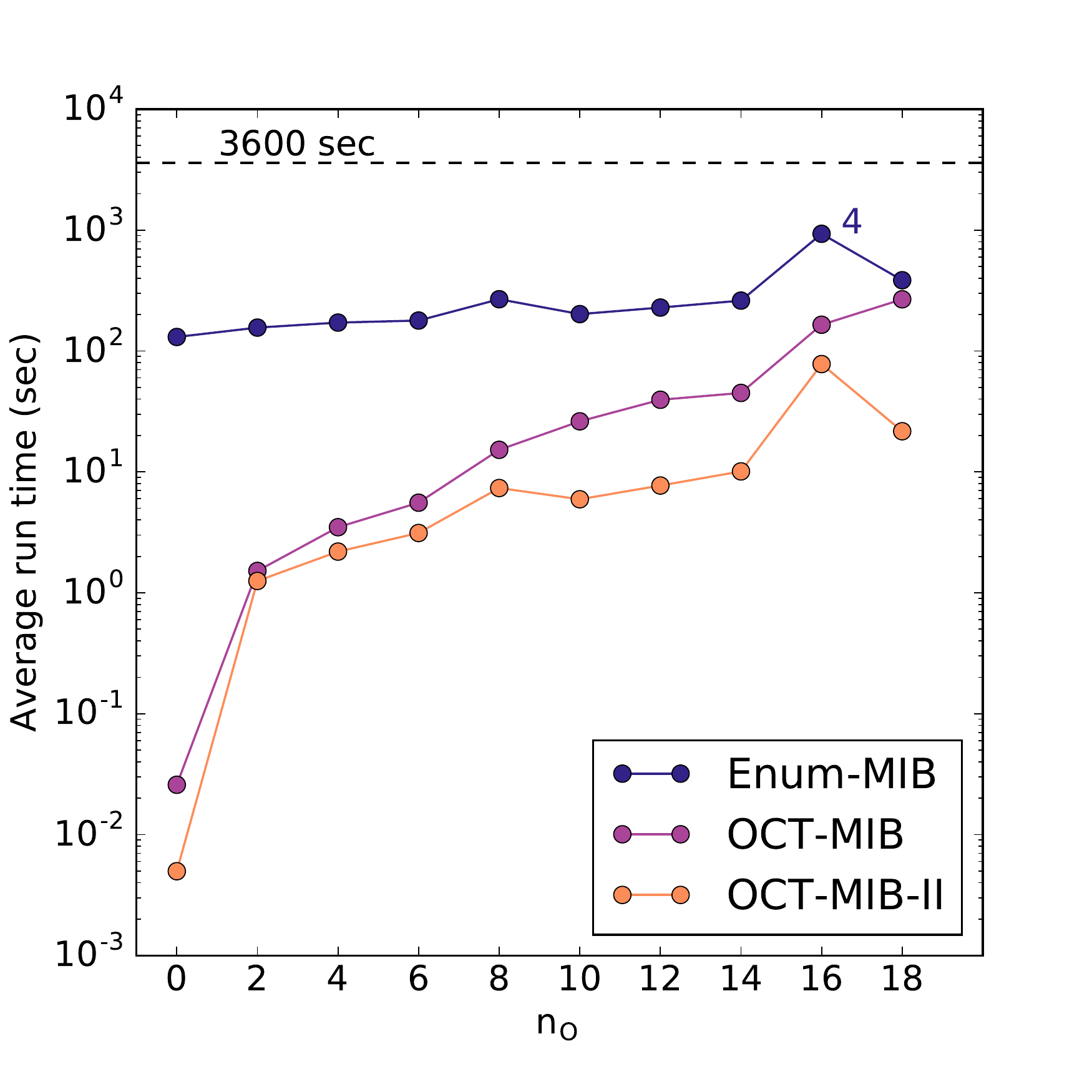}%
\hspace{0.01cm}
    \includegraphics [width=0.5\textwidth]{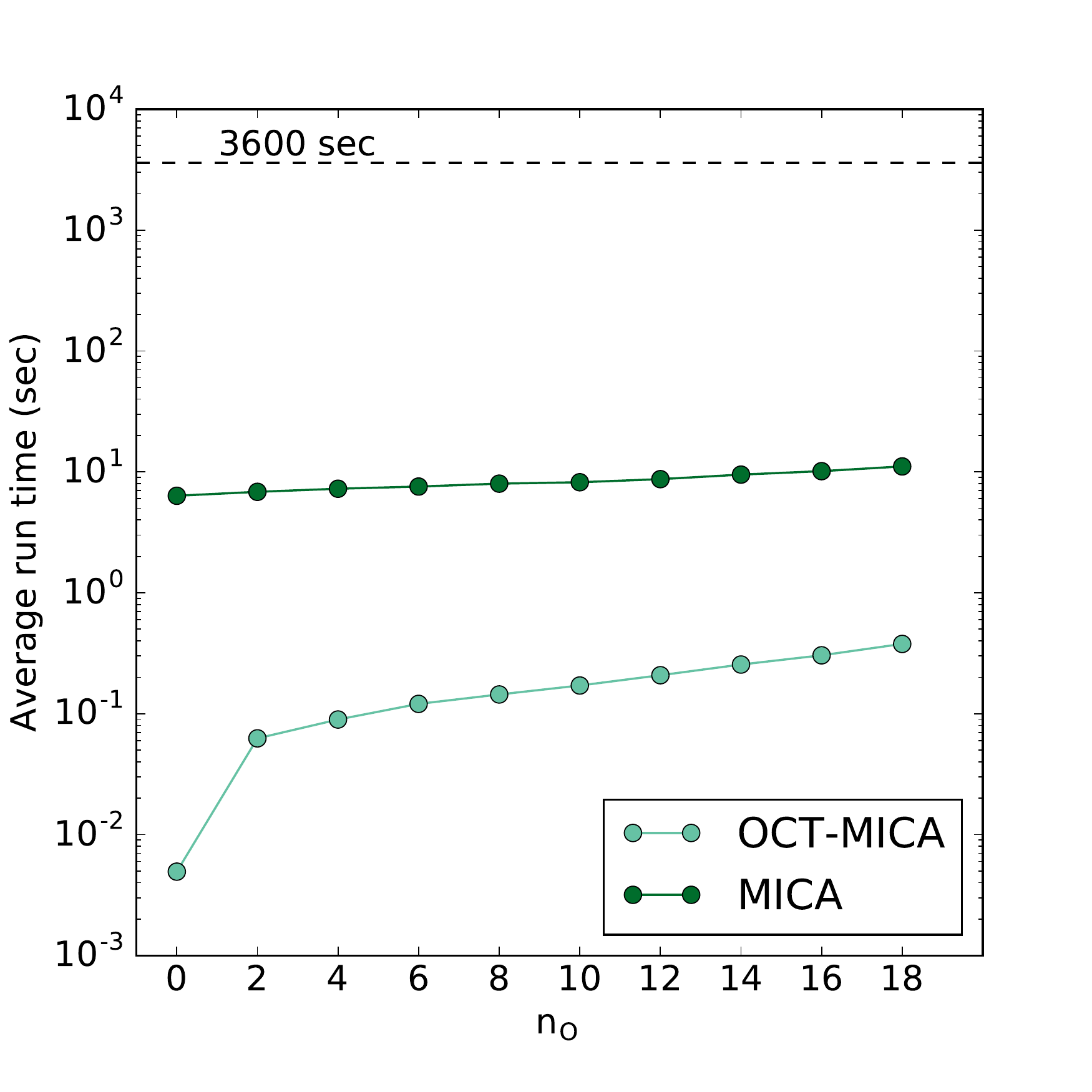}%
    \caption{\label{fig:app}
	Runtimes of the MIB-enumerating (left) and MB-enumerating (right) algorithms on graphs where $n_B = 1000$ and $n_O$ was varied.
    }
\end{figure*}

\begin{figure*}[t!]
    \includegraphics [width=0.5\textwidth]{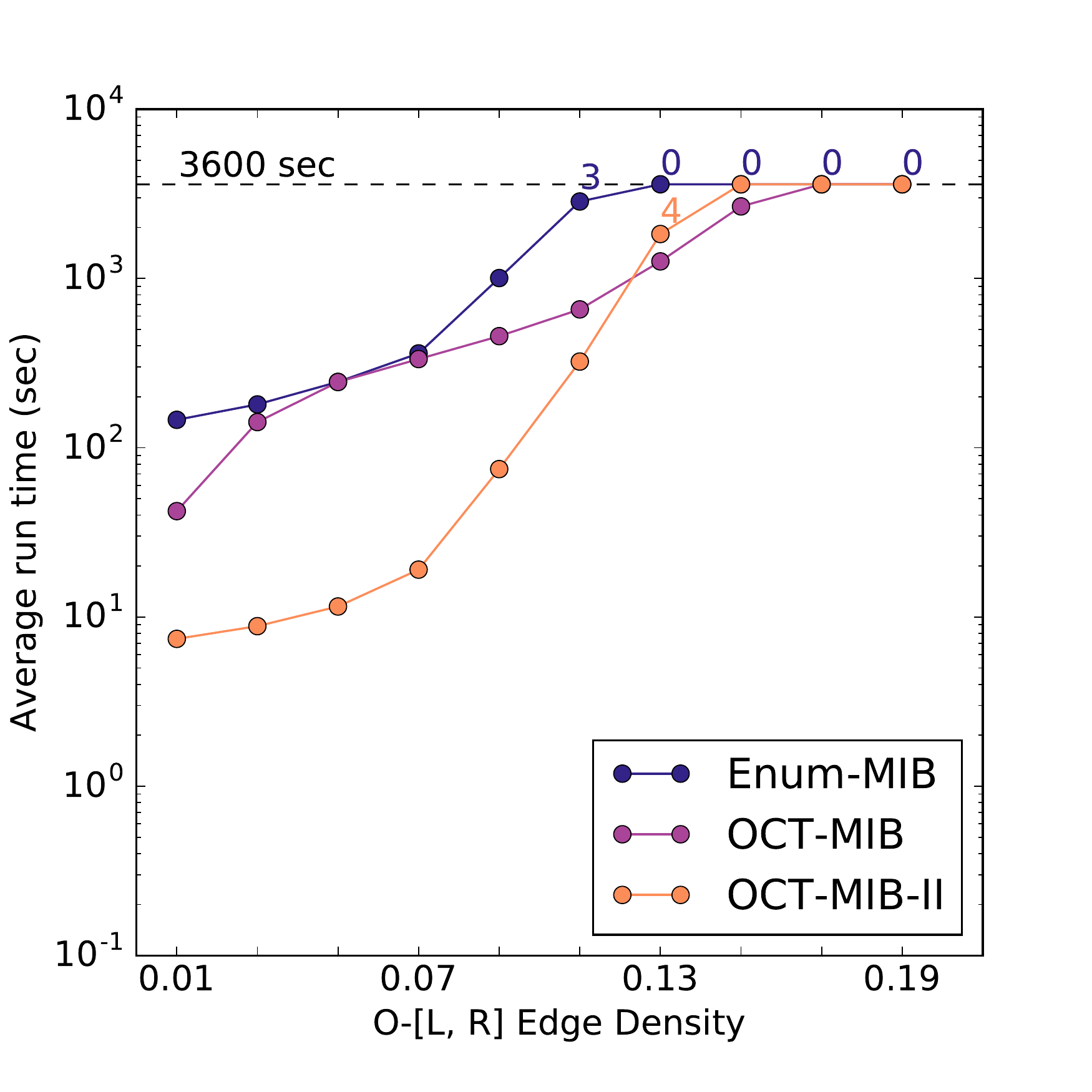}%
\hspace{0.01cm}
    \includegraphics [width=0.5\textwidth]{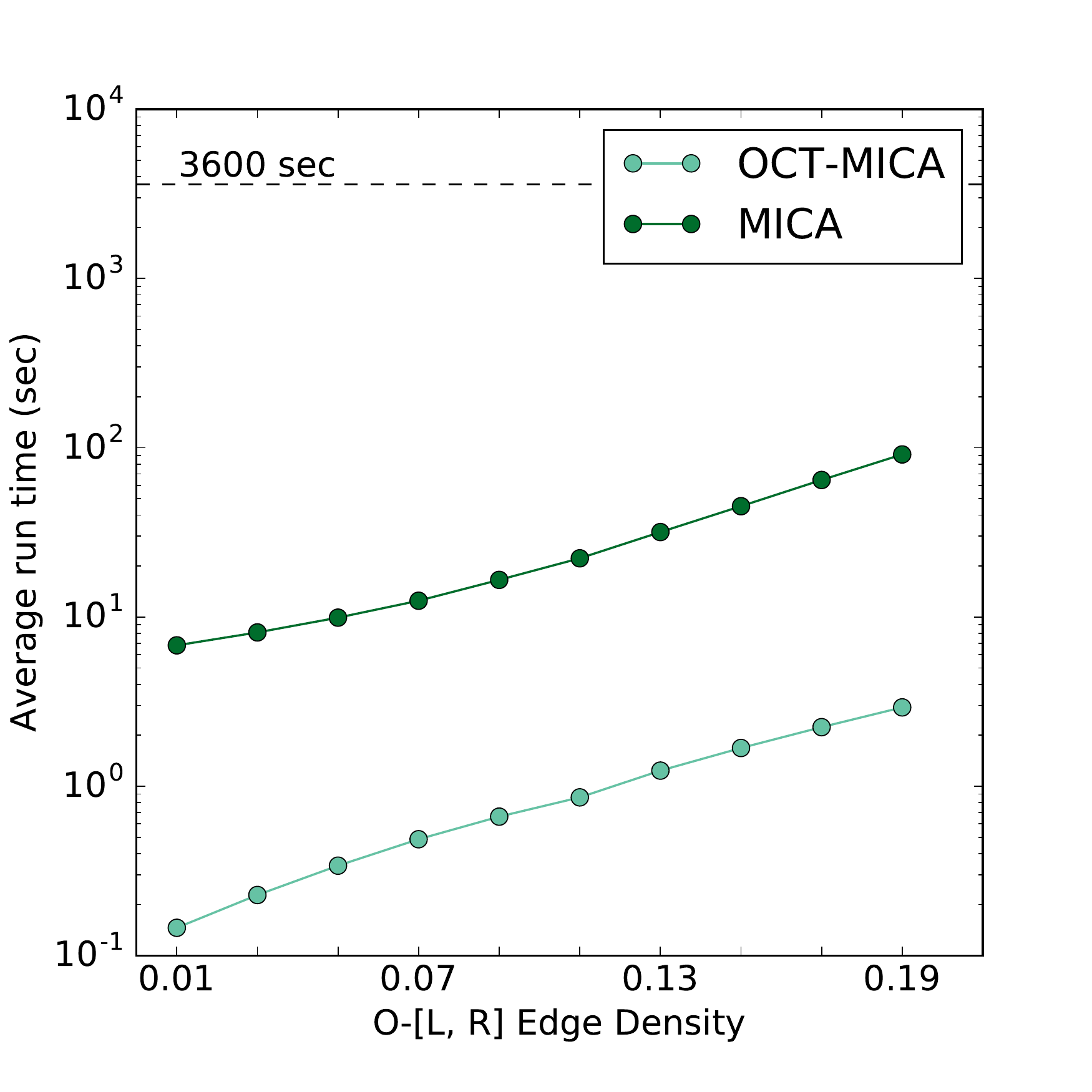}%
    \caption{\label{fig:app9}
	Runtimes of the MIB-enumerating (left) and MB-enumerating (right) algorithms on graphs where $n_B = 1000$ and $n_O = 19 \approx 3\log_3(n_B)$. The expected edge density between $O$ and $\{L,R\}$ was varied.
    }
\end{figure*}

\begin{figure*}[t!]
    \includegraphics [width=0.5\textwidth]{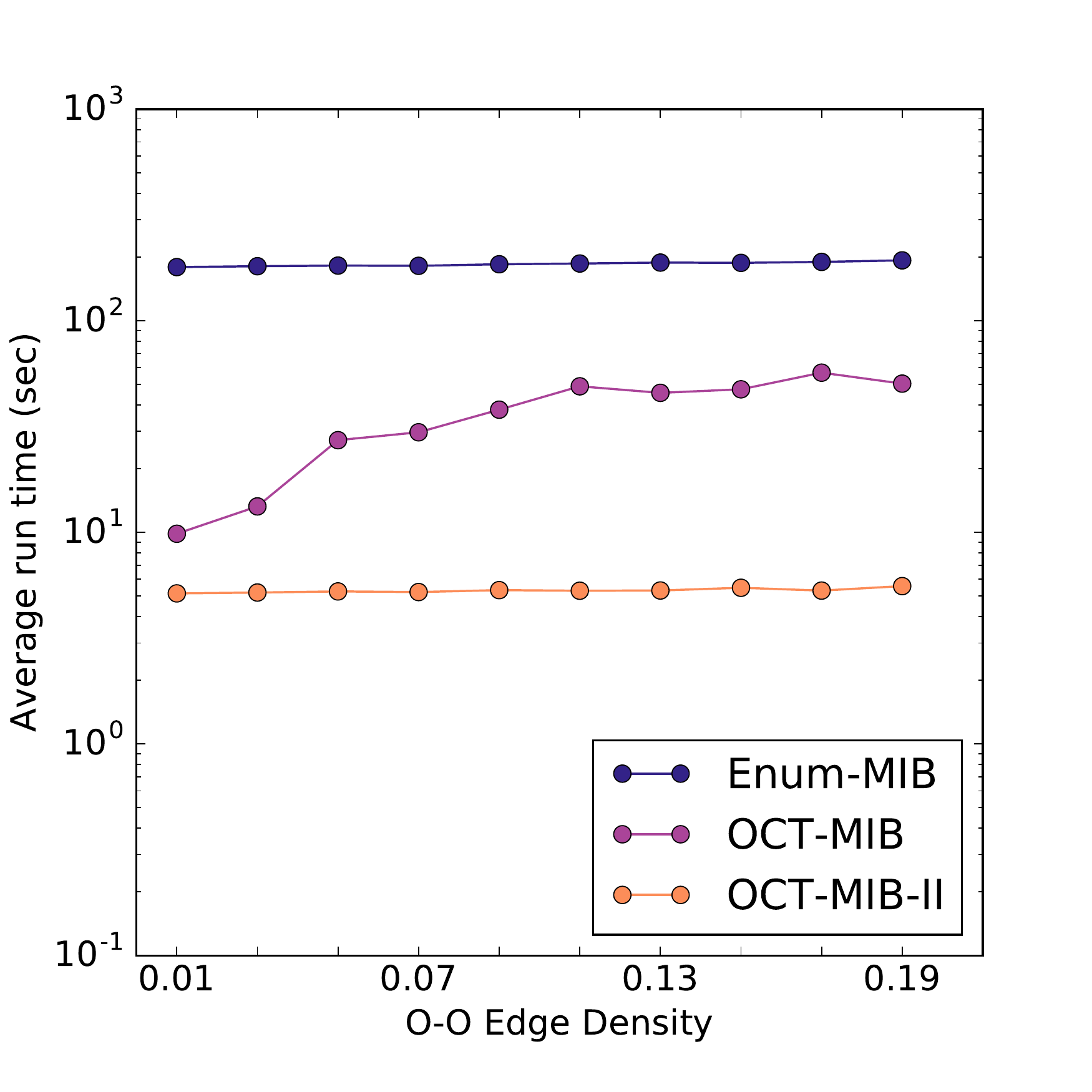}%
\hspace{0.01cm}
    \includegraphics [width=0.5\textwidth]{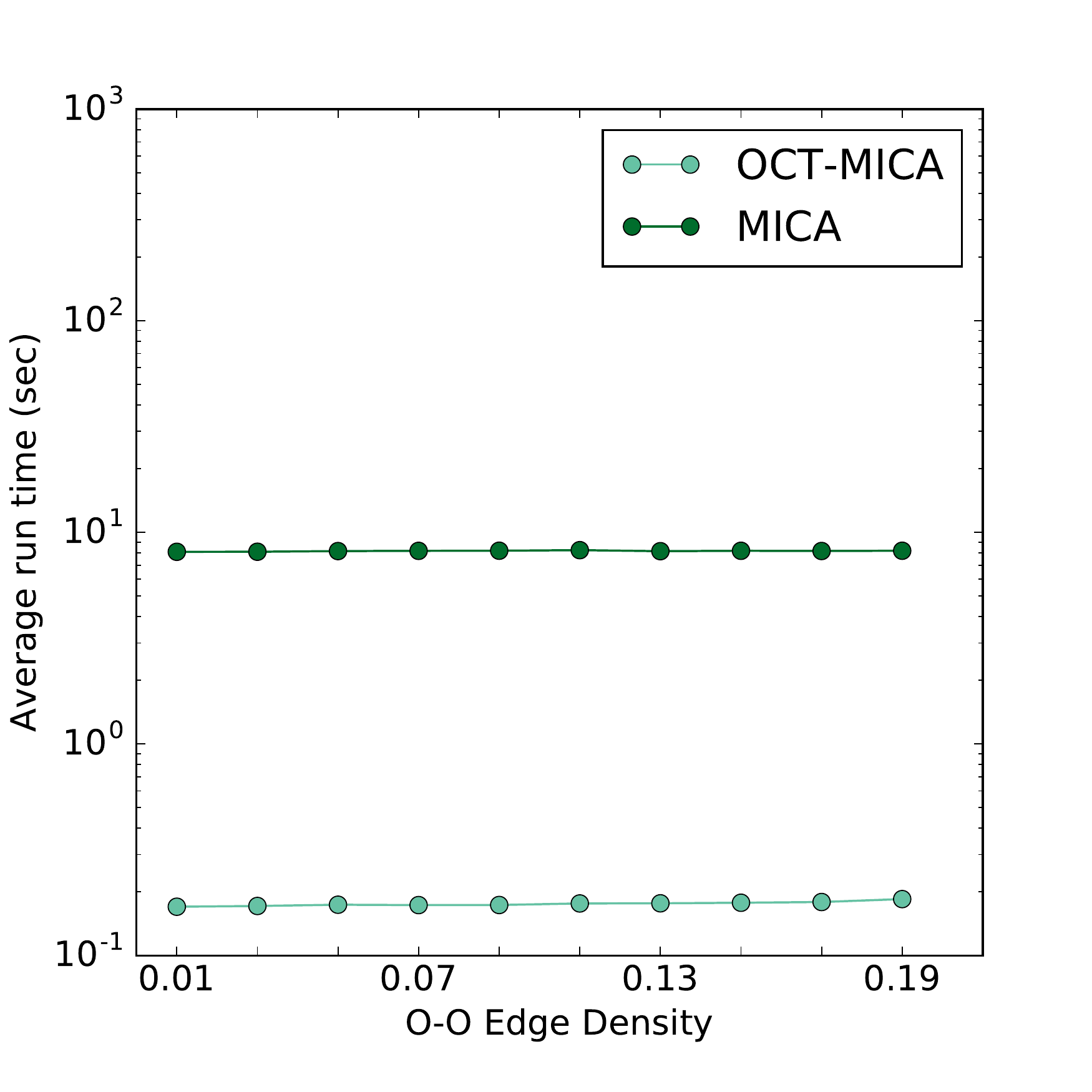}%
    \caption{\label{fig:app10}
	Runtimes of the MIB-enumerating (left) and MB-enumerating (right) algorithms on graphs where $n_B = 1000$ and $n_O = 10$. The expected edge density within $O$ was varied.
    }
\end{figure*}

\begin{figure*}[t!]
    \includegraphics [width=0.5\textwidth]{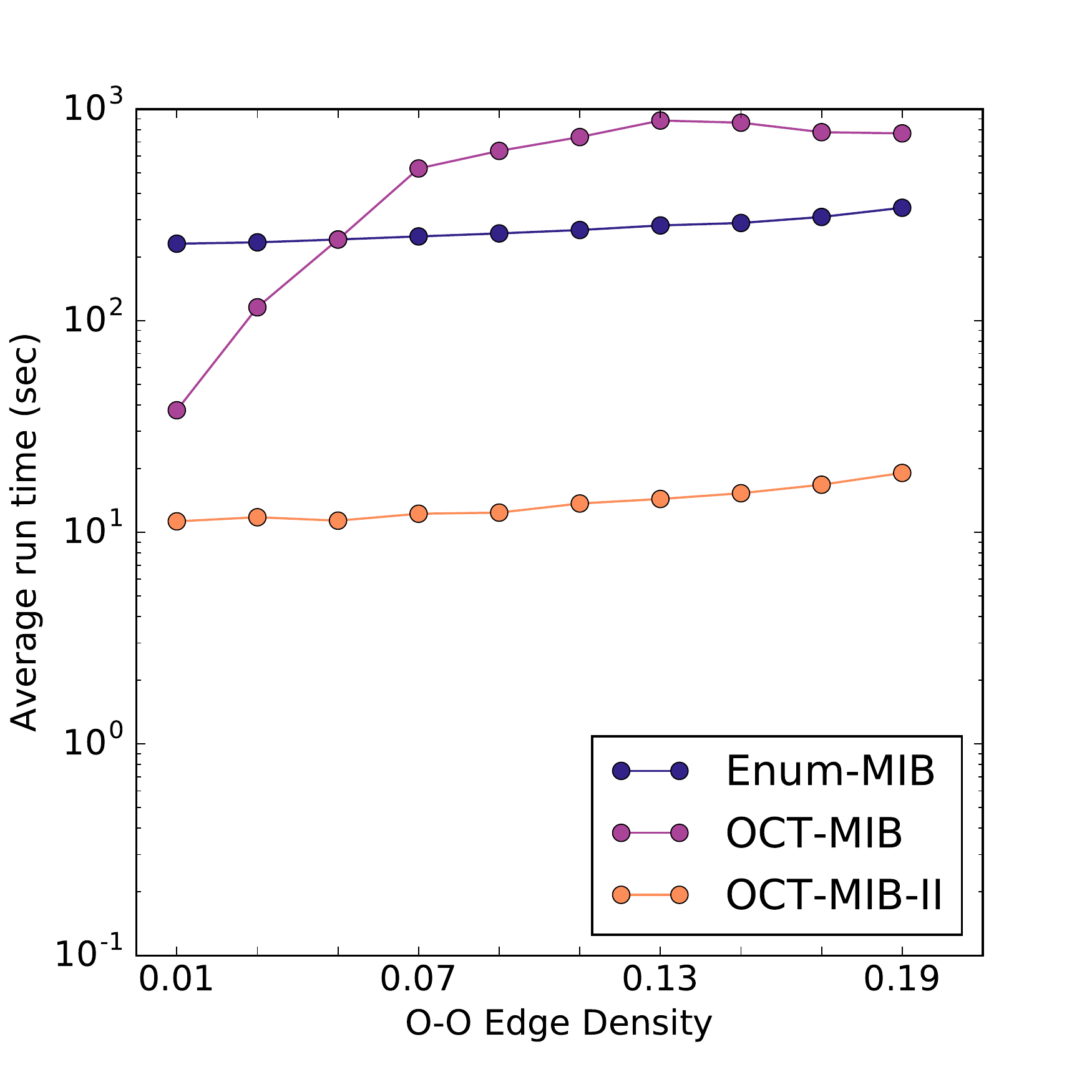}%
\hspace{0.01cm}
    \includegraphics [width=0.5\textwidth]{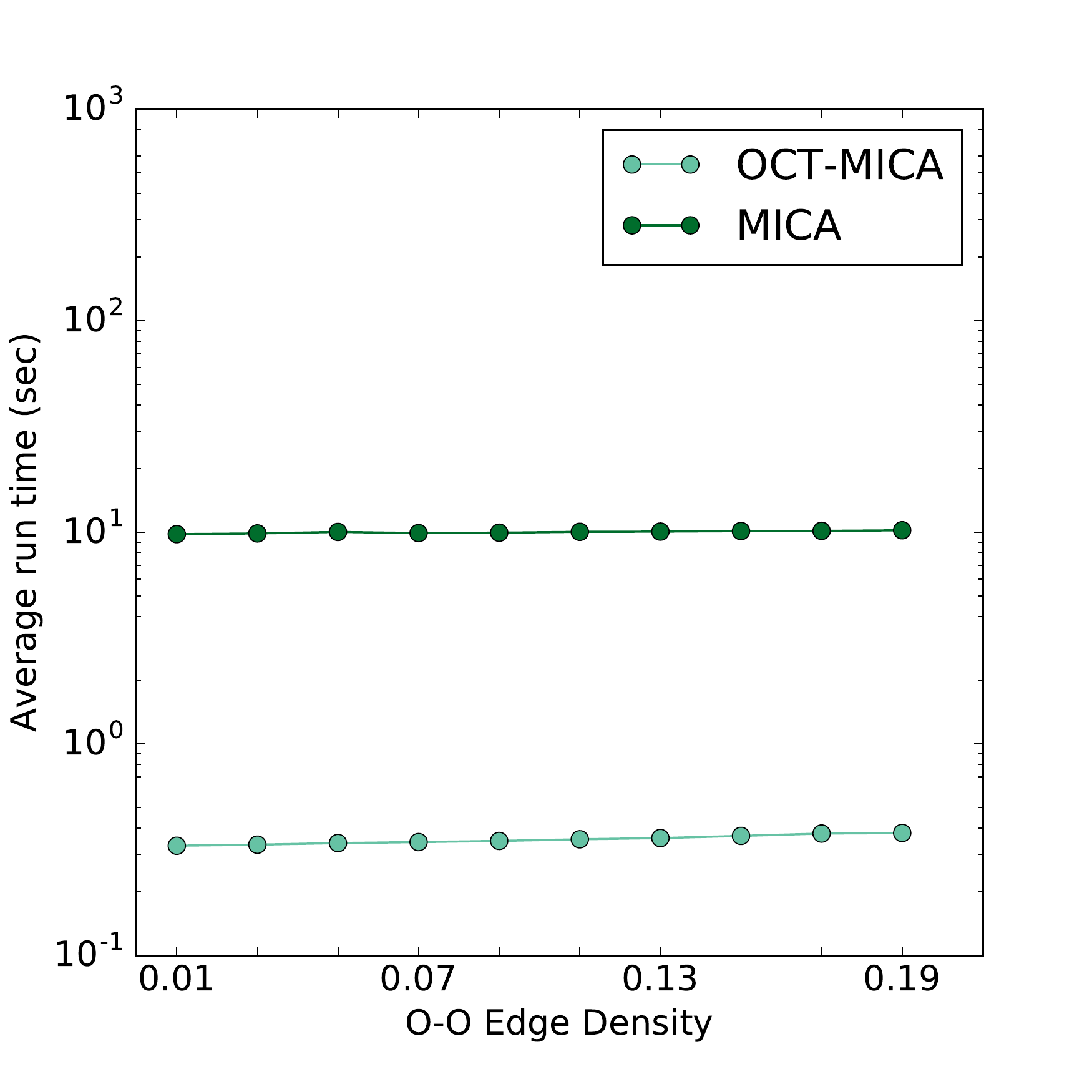}%
    \caption{\label{fig:app11}
	Runtimes of the MIB-enumerating (left) and MB-enumerating (right) algorithms on graphs where $n_B = 1000$ and $n_O = 19 \approx 3\log_3(n_B)$. The expected edge density within $O$ was varied.
    }
\end{figure*}

\begin{figure*}[t!]
    \includegraphics [width=0.5\textwidth]{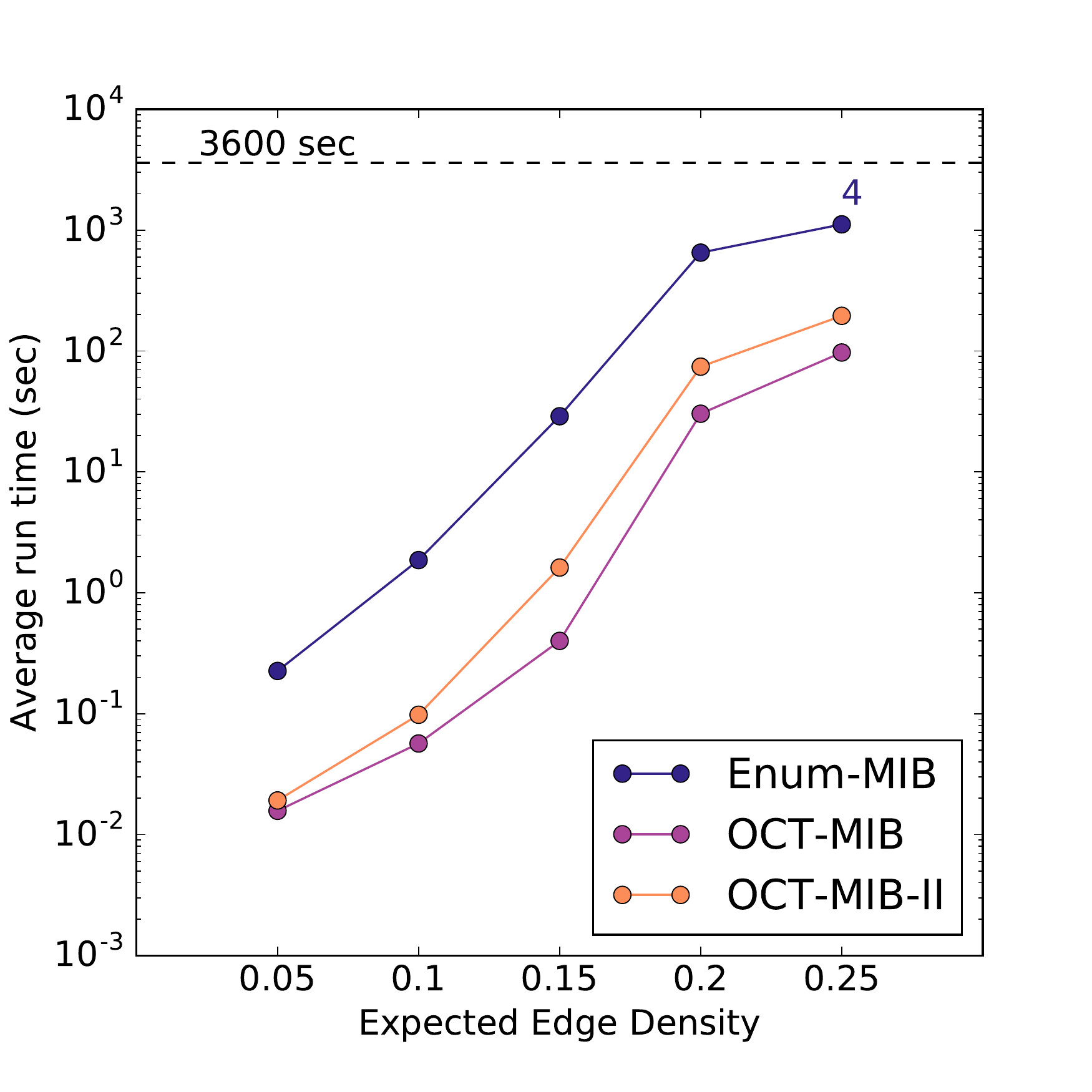}%
\hspace{0.01cm}
    \includegraphics [width=0.5\textwidth]{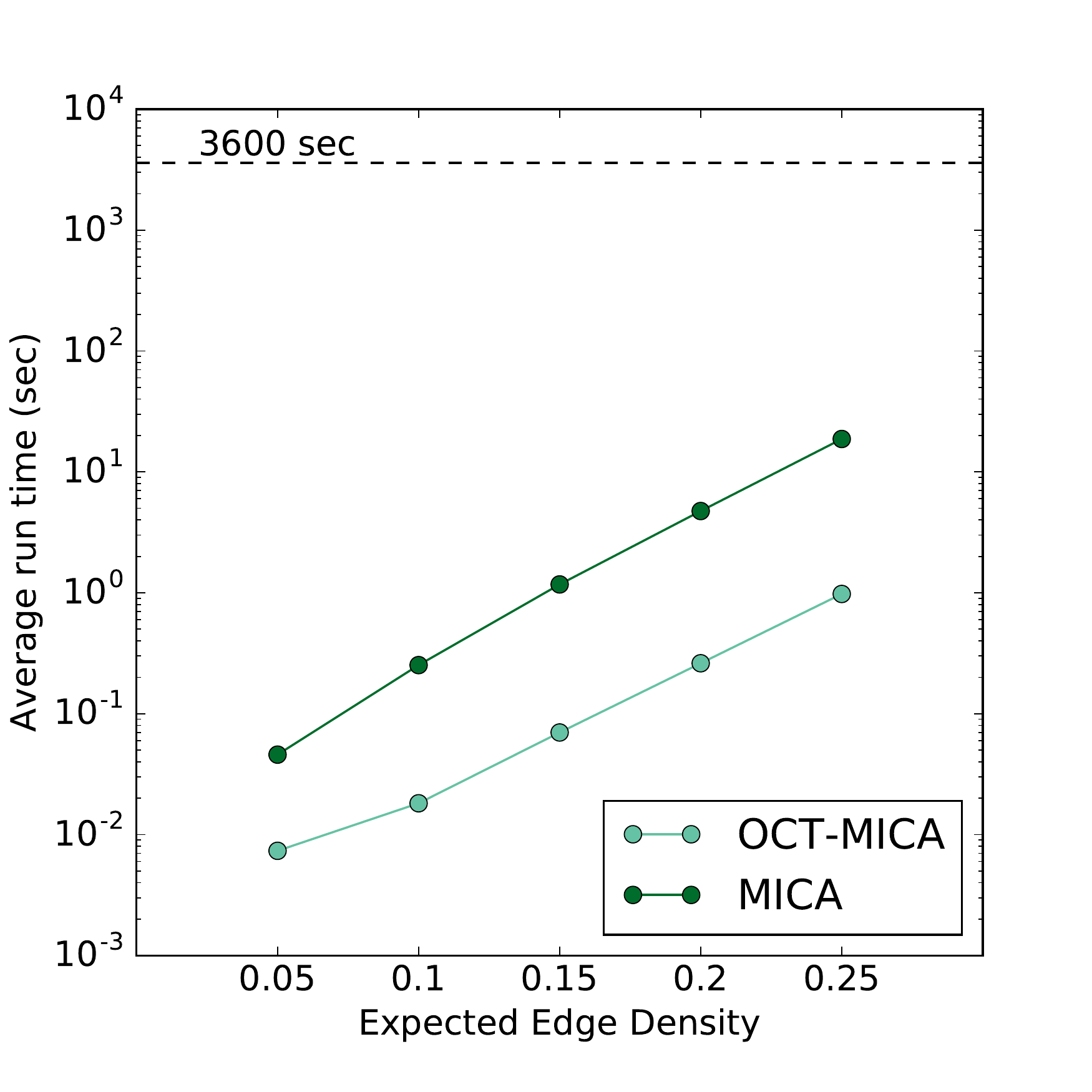}%
    \caption{\label{fig:app12}
	Runtimes of the MIB-enumerating (left) and MB-enumerating (right) algorithms on graphs where $n_B = 150$, $n_L = n_R$ and $n_O = 5$. The expected edge density in the graph was varied except for the expected edge density within $O$ which was fixed to $0.05$.
    }
\end{figure*}

\begin{figure*}[t!]
    \includegraphics [width=0.5\textwidth]{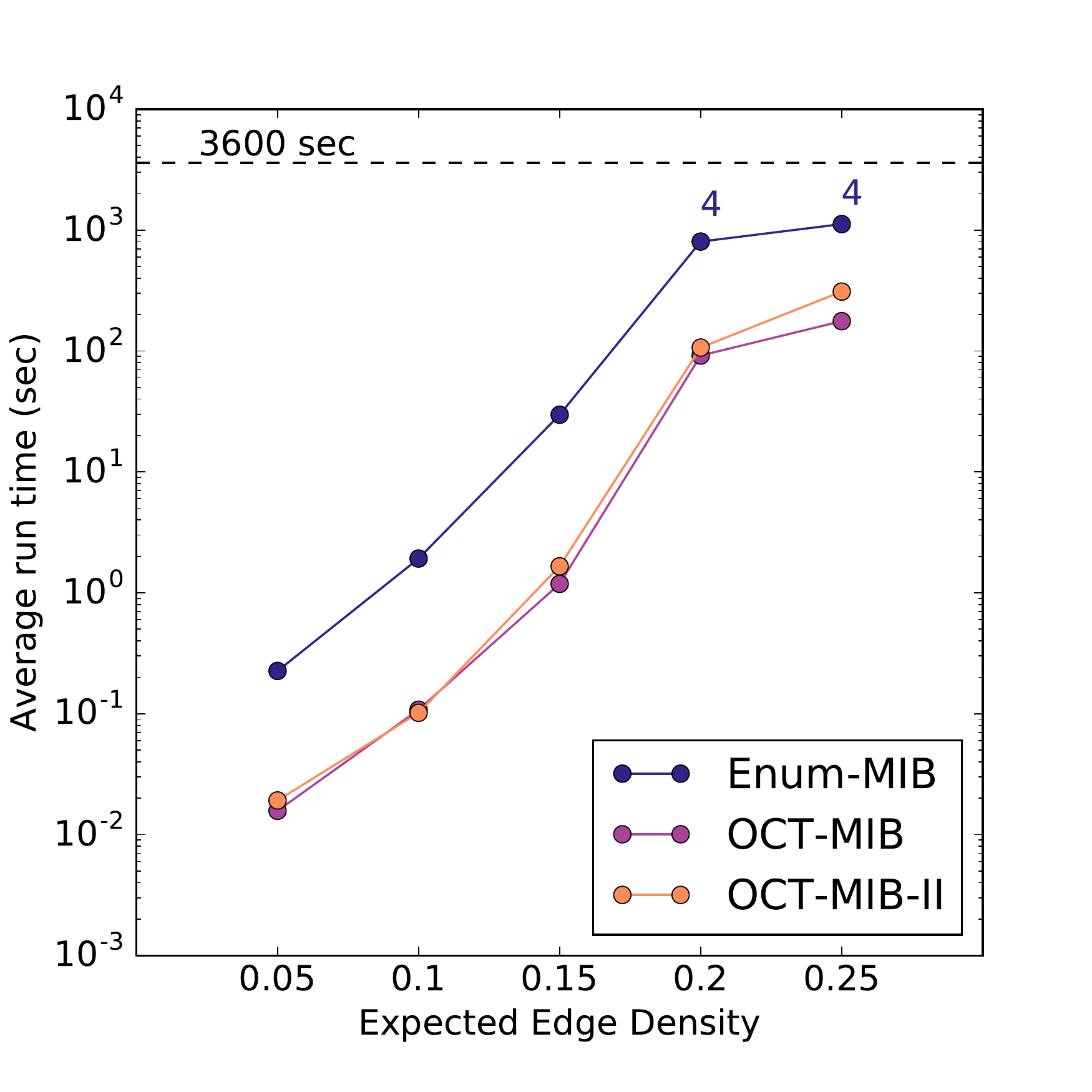}%
\hspace{0.01cm}
    \includegraphics [width=0.5\textwidth]{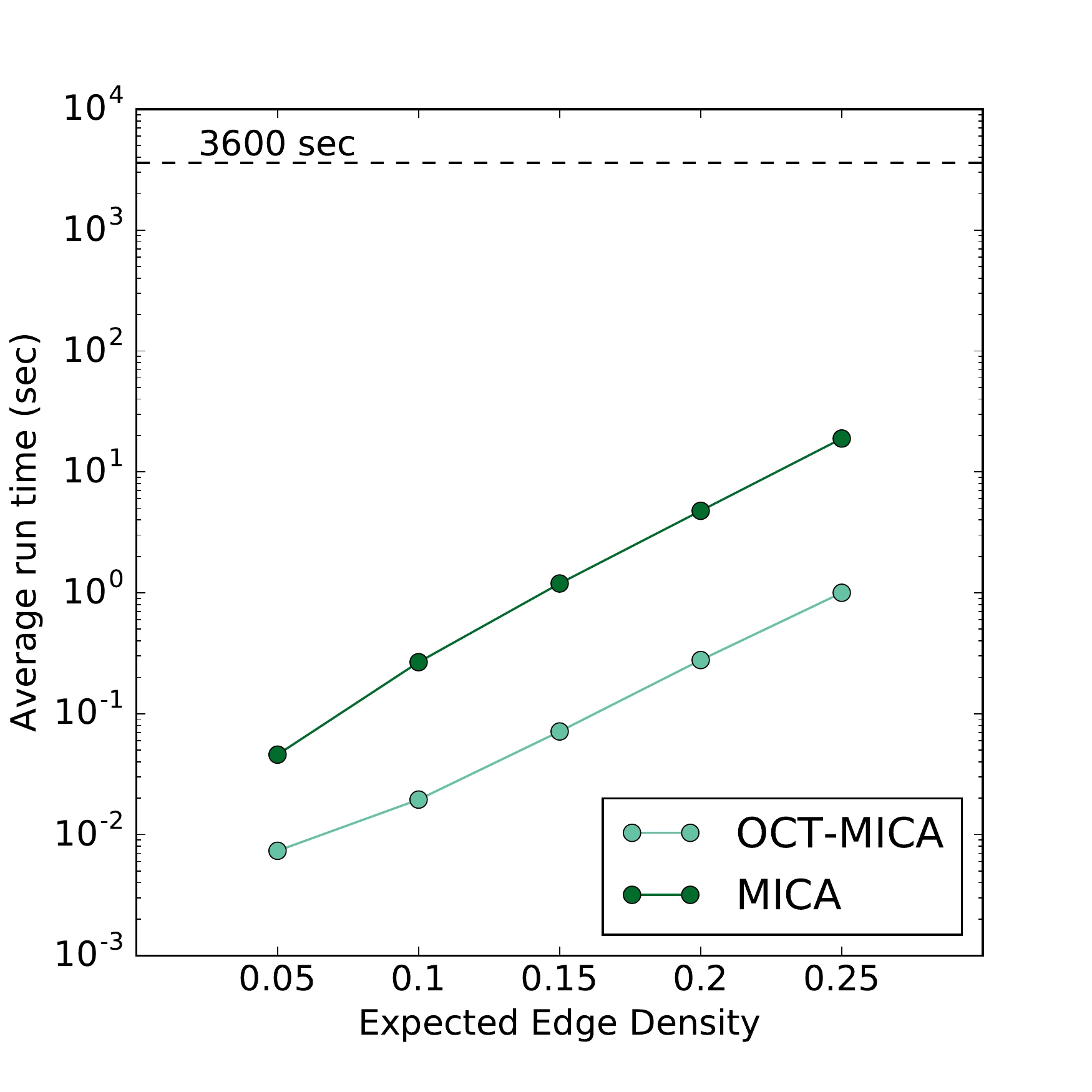}%
    \caption{\label{fig:app13}
	Runtimes of the MIB-enumerating (left) and MB-enumerating (right) algorithms on graphs where $n_B = 150$, $n_L = n_R$ and $n_O = 5$. The expected edge density in the graph was varied, including the expected edge density within $O$.
    }
\end{figure*}

\begin{figure*}[t!]
    \includegraphics [width=0.5\textwidth]{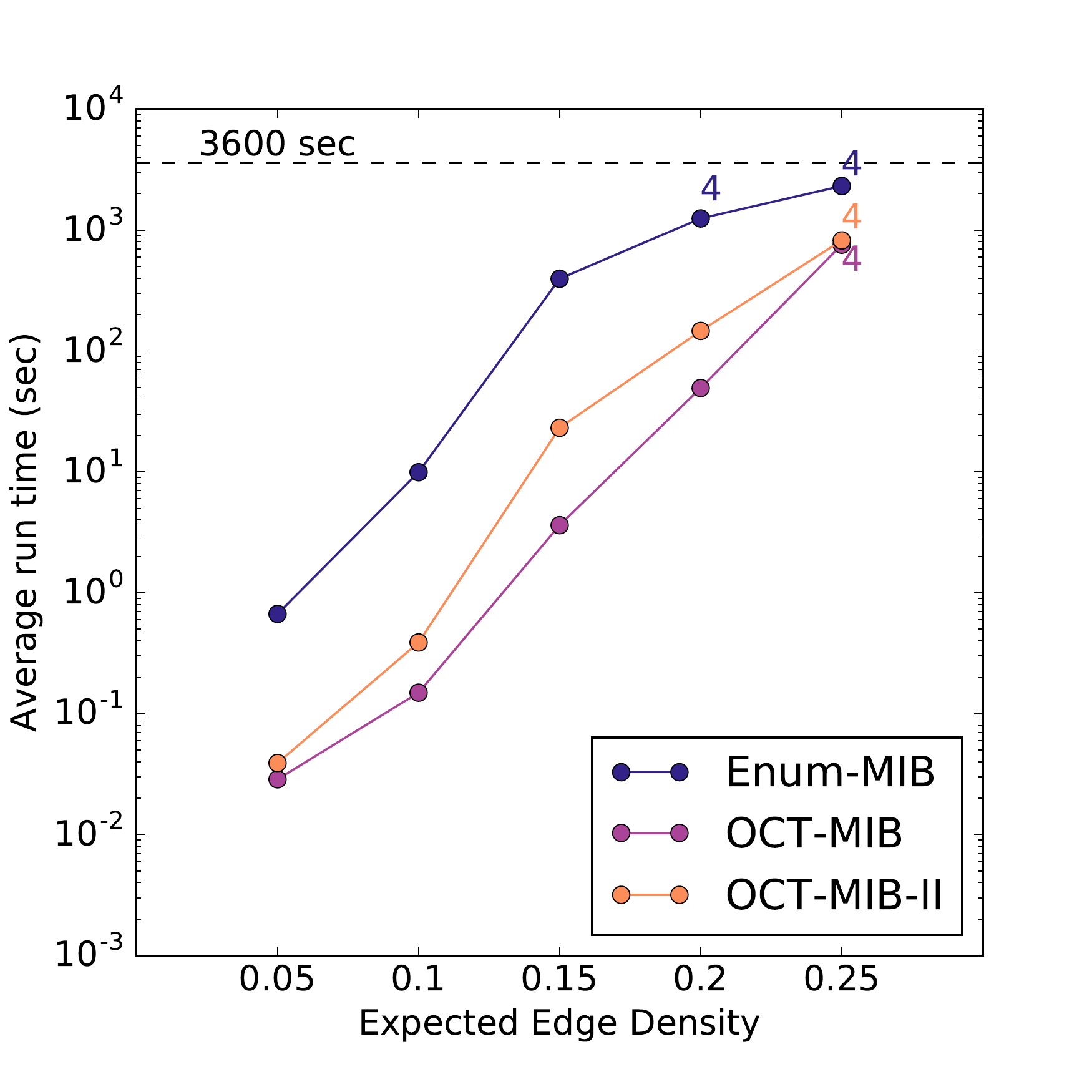}%
\hspace{0.01cm}
    \includegraphics [width=0.5\textwidth]{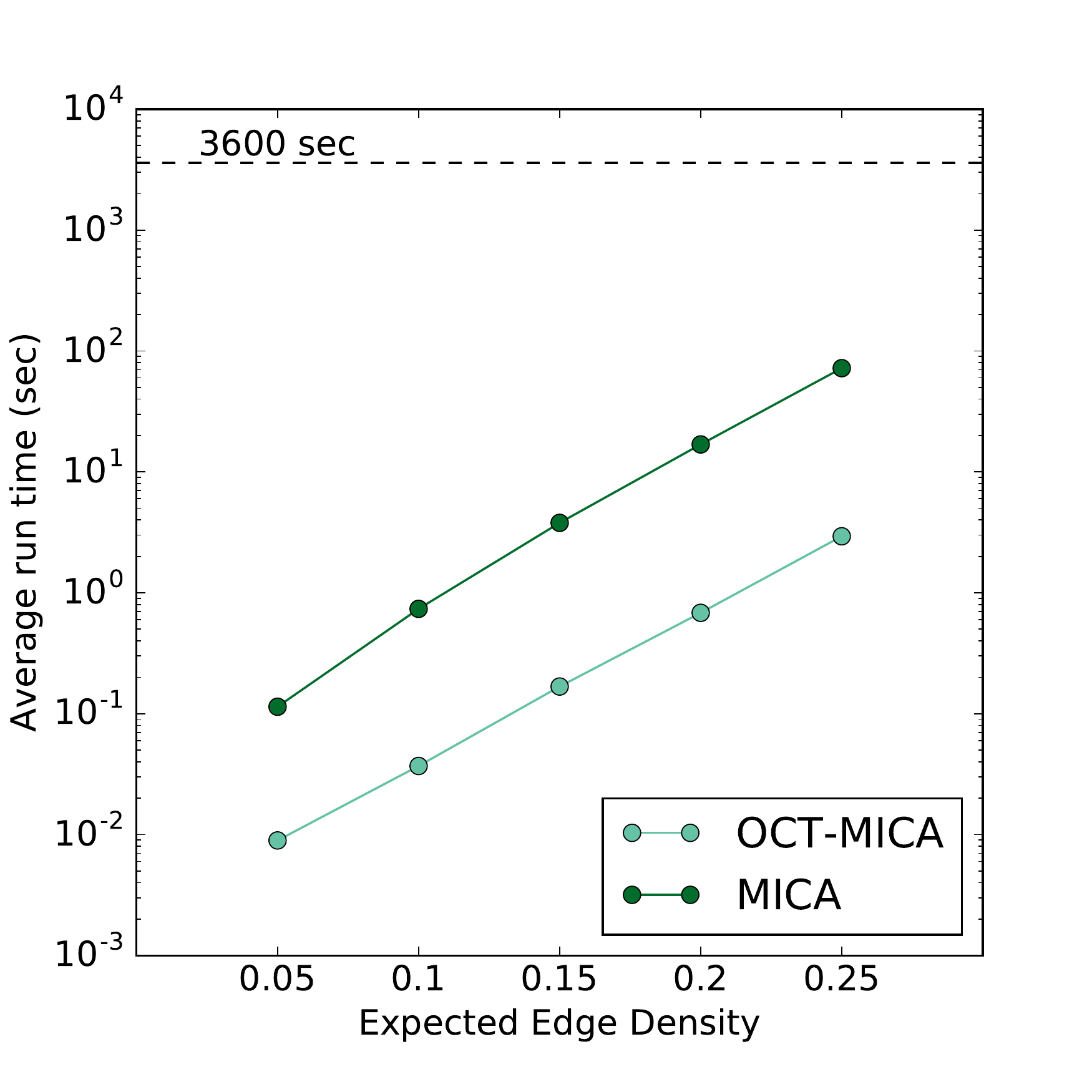}%
    \caption{\label{fig:app14}
	Runtimes of the MIB-enumerating (left) and MB-enumerating (right) algorithms on graphs where $n_B = 200$, $n_L = n_R$ and $n_O = 5$. The expected edge density in the graph was varied except for the expected edge density within $O$ which was fixed to $0.05$.
    }
\end{figure*}

\begin{figure*}[t!]
    \includegraphics [width=0.5\textwidth]{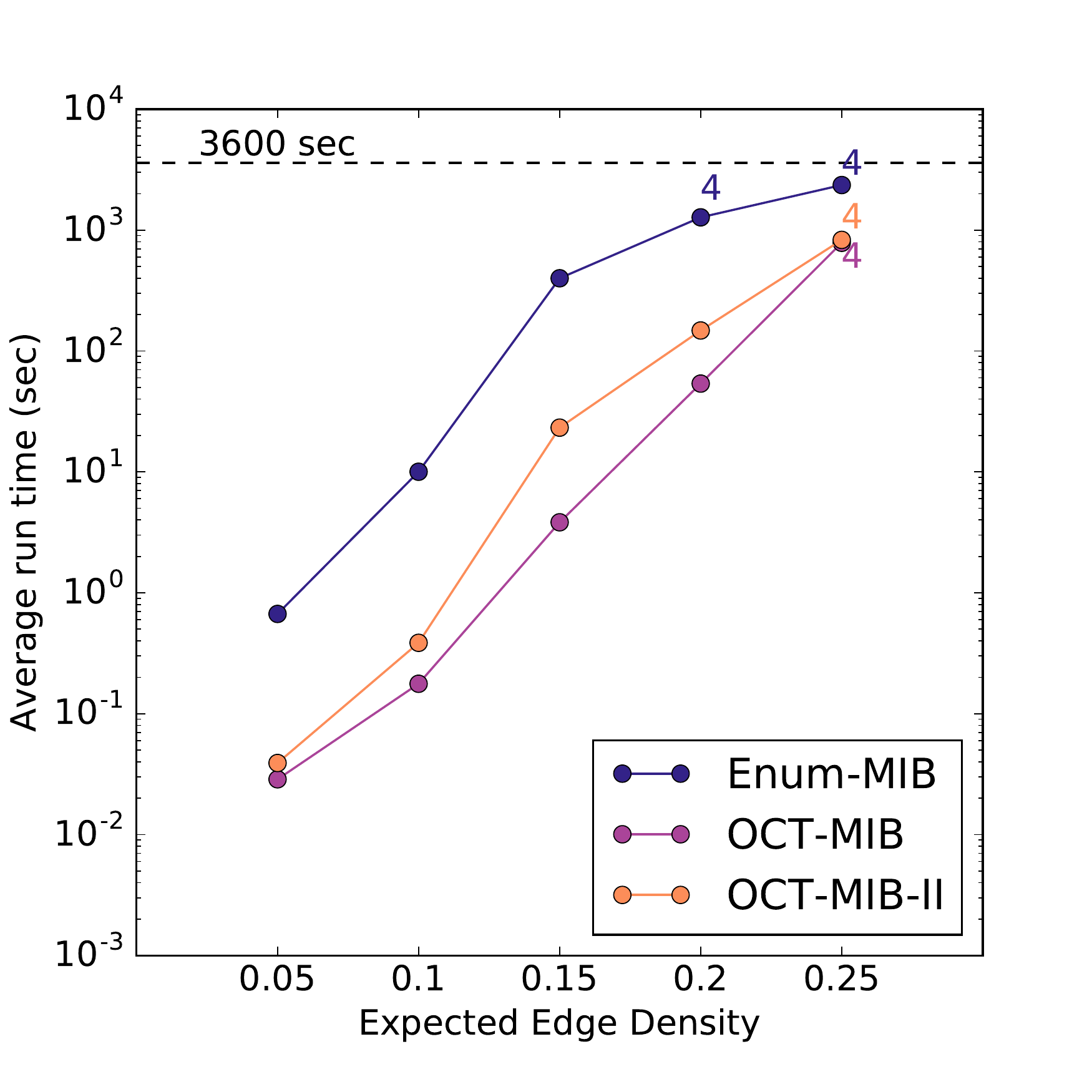}%
\hspace{0.01cm}
    \includegraphics [width=0.5\textwidth]{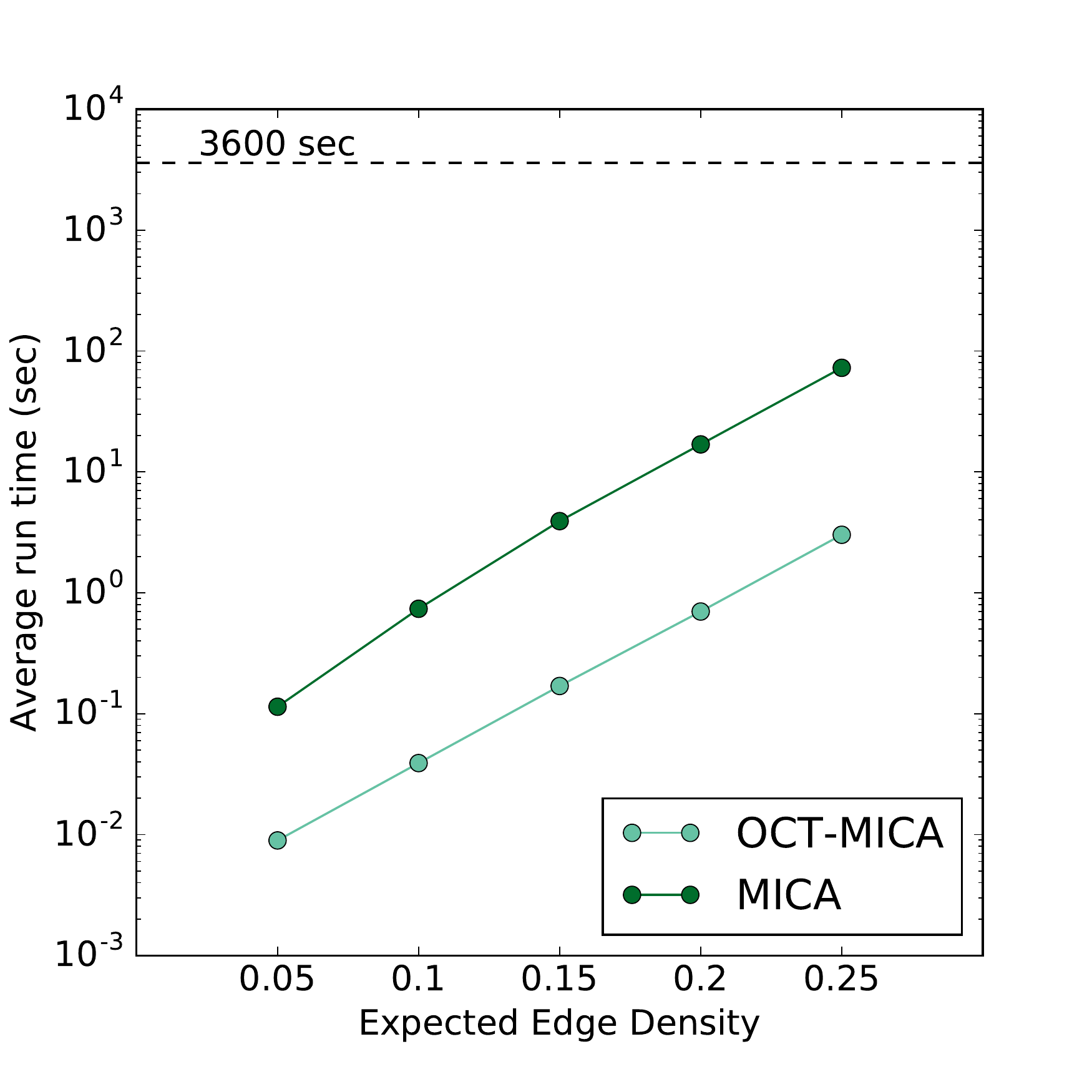}%
    \caption{\label{fig:app15}
	Runtimes of the MIB-enumerating (left) and MB-enumerating (right) algorithms on graphs where $n_B = 200$, $n_L = n_R$ and $n_O = 5$. The expected edge density in the graph was varied, including the expected edge density within $O$.
    }
\end{figure*}

\clearpage

\begin{figure*}[t!]
    \includegraphics [width=0.5\textwidth]{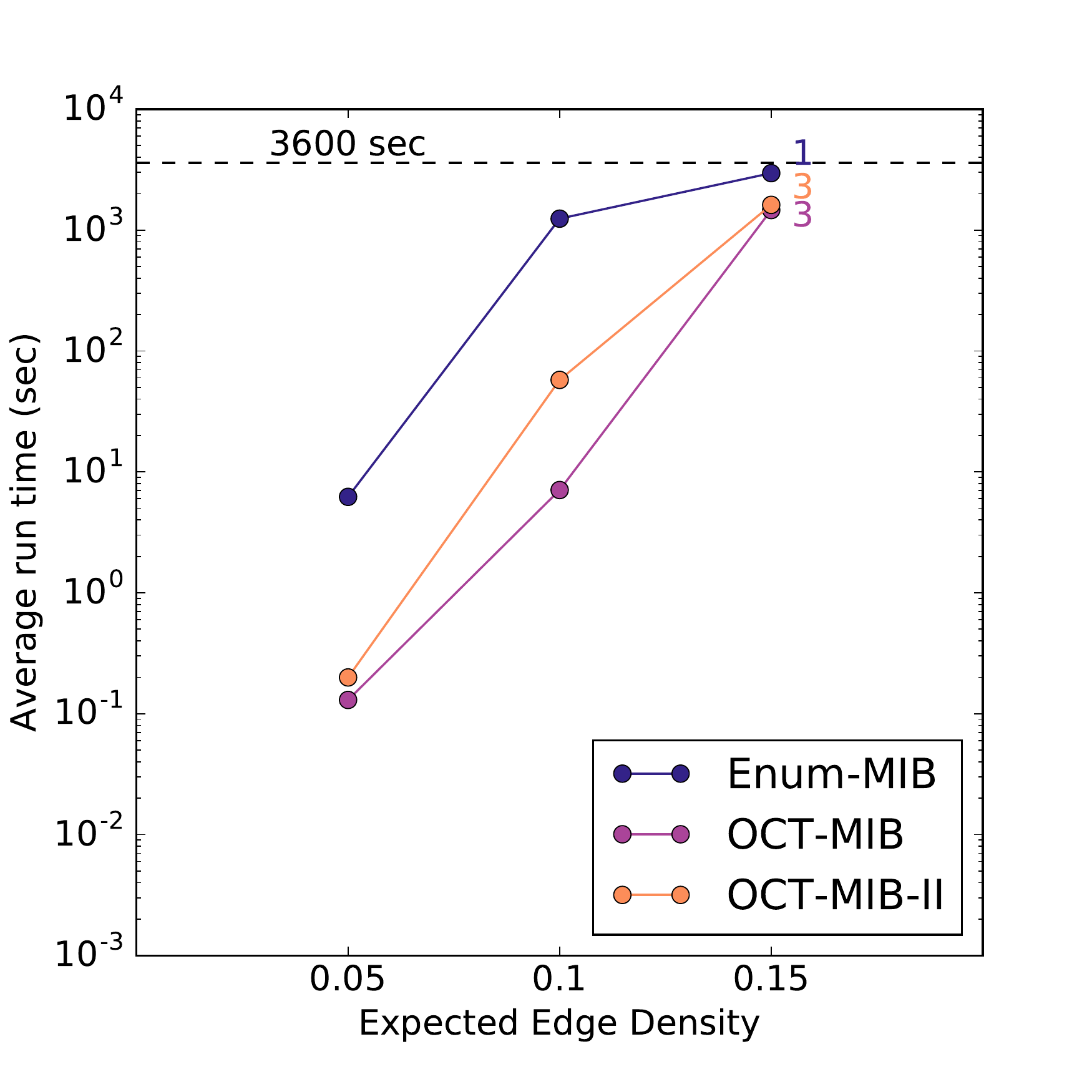}%
\hspace{0.01cm}
    \includegraphics [width=0.5\textwidth]{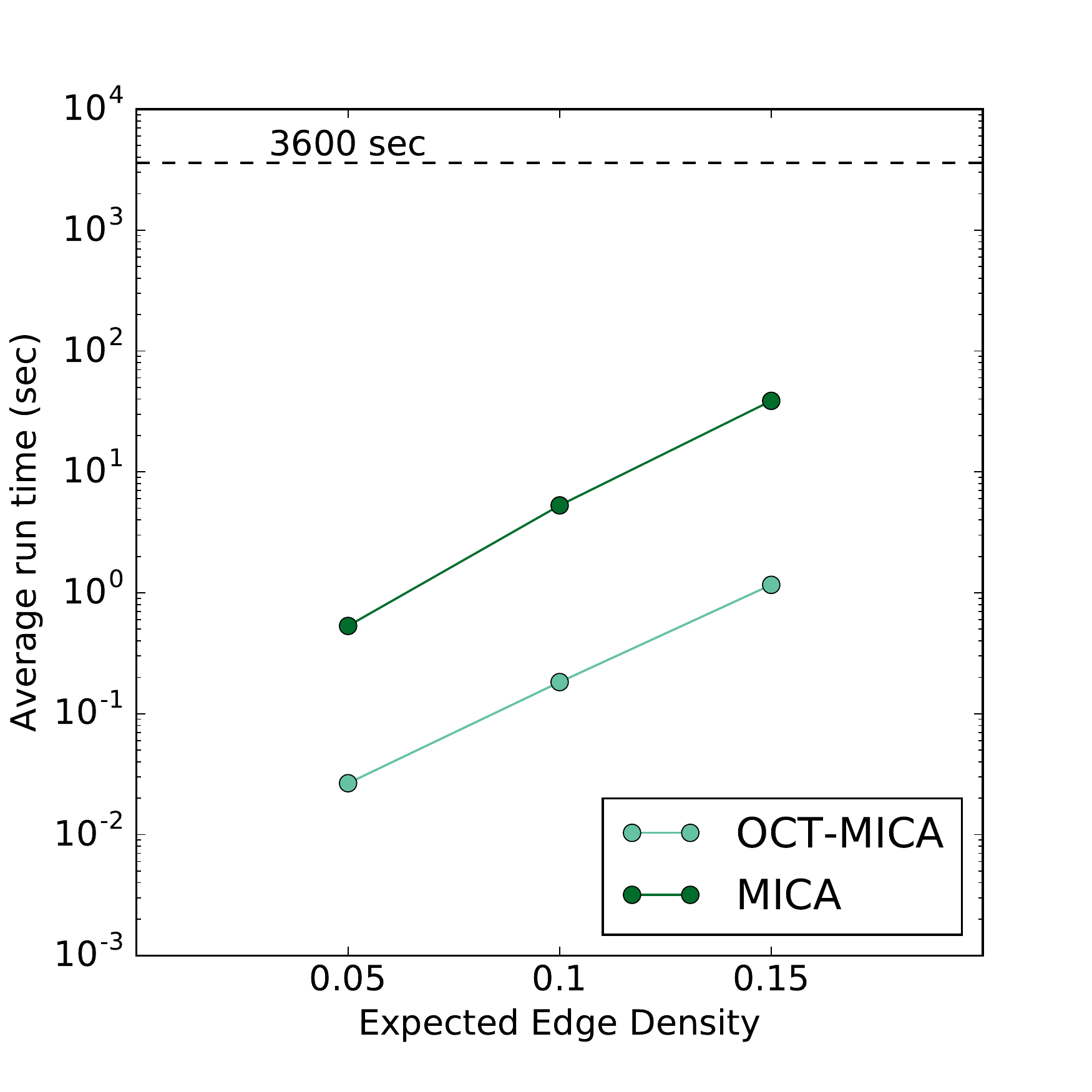}%
    \caption{\label{fig:app16}
	Runtimes of the MIB-enumerating (left) and MB-enumerating (right) algorithms on graphs where $n_B = 300$, $n_L = n_R$ and $n_O = 5$. The expected edge density in the graph was varied except for the expected edge density within $O$ which was fixed to $0.05$.
    }
\end{figure*}

\begin{figure*}[t!]
    \includegraphics [width=0.5\textwidth]{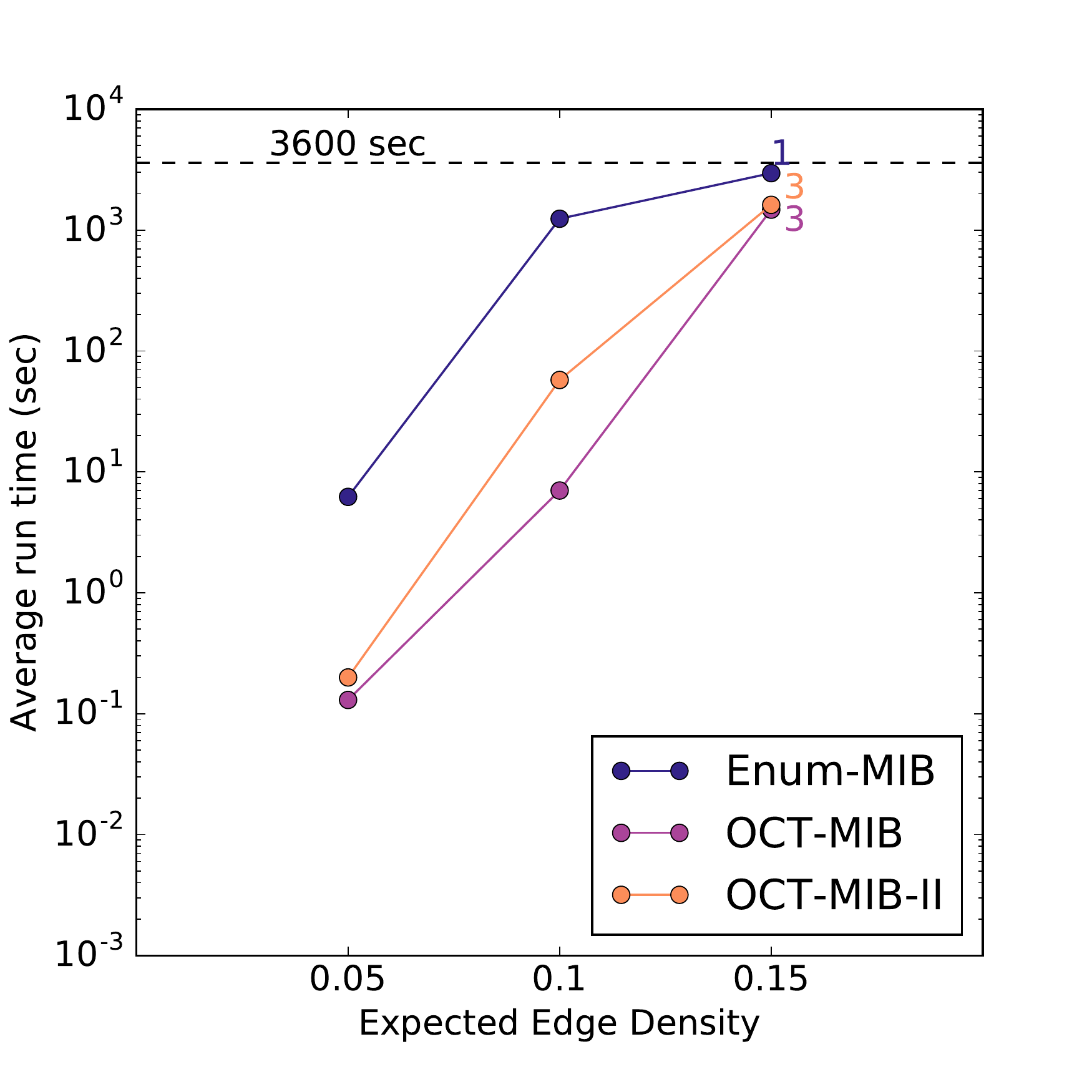}%
\hspace{0.01cm}
    \includegraphics [width=0.5\textwidth]{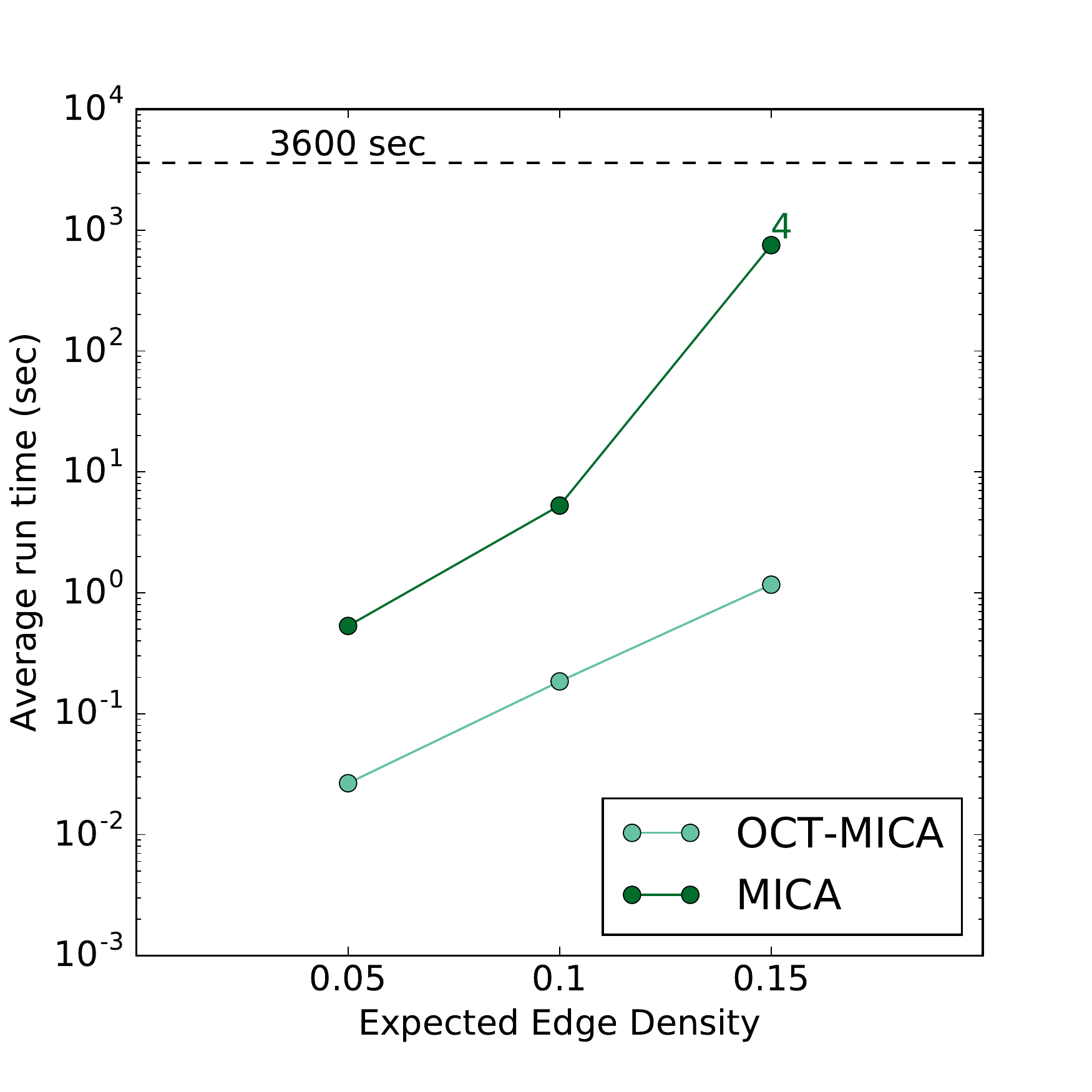}%
    \caption{\label{fig:app17}
	Runtimes of the MIB-enumerating (left) and MB-enumerating (right) algorithms on graphs where $n_B = 300$, $n_L = n_R$ and $n_O = 5$. The expected edge density in the graph was varied, including the expected edge density within $O$.
    }
\end{figure*}

\begin{table}

{\small
    \begin{tabularx}{\textwidth}{@{\extracolsep{\fill}} cccc|cccc|ccc}
        \toprule
        $G$ & $n_B$ & $m$ & $n_O$ & $|M_I|$ & \iOCTMIB & \OCTMIB & \nonlex & $|M_B|$ & \OCTMICA & \MICA \\
        \midrule
\textsf{aa-10} & 69 & 191 & 6 & 178 & 0.008 & 0.023 & 0.057 & 98 & 0.007 & 0.031 \\
\textsf{aa-11} & 102 & 307 & 11 & 424 & 0.055 & 0.115 & 0.259 & 206 & 0.018 & 0.120 \\
\textsf{aa-13} & 129 & 383 & 12 & 523 & 0.083 & 0.239 & 0.470 & 269 & 0.028 & 0.166 \\
\textsf{aa-14} & 125 & 525 & 19 & 1460 & 0.366 & 0.902 & 1.254 & 605 & 0.090 & 0.485 \\
\textsf{aa-15} & 66 & 179 & 7 & 206 & 0.010 & 0.019 & 0.053 & 113 & 0.011 & 0.030 \\
\textsf{aa-16} & 13 & 15 & 0 & 15 & 0.000 & 0.000 & 0.000 & 8 & 0.000 & 0.000 \\
\textsf{aa-17} & 151 & 633 & 25 & 2252 & 1.023 & 2.132 & 3.457 & 1250 & 0.242 & 1.137 \\
\textsf{aa-18} & 87 & 381 & 14 & 660 & 0.100 & 0.173 & 0.389 & 823 & 0.090 & 0.351 \\
\textsf{aa-19} & 191 & 645 & 19 & 1262 & 0.449 & 1.569 & 2.385 & 519 & 0.069 & 0.450 \\
\textsf{aa-20} & 224 & 766 & 19 & 1607 & 0.705 & 2.431 & 3.809 & 949 & 0.154 & 1.061 \\
\textsf{aa-21} & 28 & 90 & 9 & 116 & 0.006 & 0.013 & 0.008 & 213 & 0.019 & 0.030 \\
\textsf{aa-22} & 167 & 641 & 16 & 1520 & 0.423 & 1.387 & 2.629 & 560 & 0.074 & 0.638 \\
\textsf{aa-23} & 139 & 508 & 18 & 1766 & 0.435 & 0.788 & 1.651 & 1530 & 0.210 & 1.000 \\
\textsf{aa-24} & 258 & 1108 & 21 & 3890 & 2.108 & 9.140 & 14.167 & 1334 & 0.237 & 2.477 \\
\textsf{aa-25} & 14 & 15 & 1 & 10 & 0.000 & 0.001 & 0.001 & 10 & 0.000 & 0.000 \\
\textsf{aa-26} & 92 & 284 & 13 & 583 & 0.084 & 0.186 & 0.309 & 370 & 0.030 & 0.128 \\
\textsf{aa-27} & 118 & 331 & 11 & 458 & 0.054 & 0.270 & 0.343 & 229 & 0.015 & 0.114 \\
\textsf{aa-28} & 167 & 854 & 27 & 2606 & 1.464 & 2.201 & 4.162 & 2814 & 0.755 & 3.250 \\
\textsf{aa-29} & 276 & 1058 & 21 & 3122 & 1.909 & 8.418 & 10.707 & 1924 & 0.382 & 3.344 \\
\textsf{aa-30} & 39 & 71 & 4 & 56 & 0.002 & 0.007 & 0.006 & 36 & 0.002 & 0.007 \\
\textsf{aa-31} & 30 & 51 & 2 & 37 & 0.002 & 0.002 & 0.002 & 22 & 0.001 & 0.002 \\
\textsf{aa-32} & 143 & 750 & 30 & 4167 & 2.286 & 7.694 & 5.290 & 3154 & 0.684 & 2.635 \\
\textsf{aa-33} & 193 & 493 & 4 & 578 & 0.046 & 0.204 & 0.993 & 218 & 0.012 & 0.218 \\
\textsf{aa-34} & 133 & 451 & 13 & 705 & 0.132 & 0.316 & 0.756 & 275 & 0.031 & 0.226 \\
\textsf{aa-35} & 82 & 269 & 10 & 459 & 0.037 & 0.108 & 0.178 & 215 & 0.019 & 0.081 \\
\textsf{aa-36} & 111 & 316 & 7 & 248 & 0.015 & 0.076 & 0.155 & 143 & 0.011 & 0.078 \\
\textsf{aa-37} & 72 & 170 & 5 & 135 & 0.005 & 0.018 & 0.054 & 82 & 0.005 & 0.022 \\
\textsf{aa-38} & 171 & 862 & 26 & 4270 & 2.428 & 5.223 & 7.586 & 4964 & 1.136 & 5.179 \\
\textsf{aa-39} & 144 & 692 & 23 & 2153 & 0.872 & 1.574 & 3.034 & 1177 & 0.237 & 1.009 \\
\textsf{aa-40} & 136 & 620 & 22 & 2727 & 1.022 & 2.086 & 2.973 & 1911 & 0.301 & 1.324 \\
\textsf{aa-41} & 296 & 1620 & 40 & 11705 & 16.519 & 82.439 & 50.205 & 20375 & 9.059 & 47.789 \\
\textsf{aa-42} & 236 & 1110 & 30 & 6967 & 5.646 & 45.560 & 21.244 & 8952 & 2.428 & 13.479 \\
\textsf{aa-43} & 63 & 308 & 18 & 905 & 0.137 & 0.294 & 0.311 & 875 & 0.116 & 0.302 \\
\textsf{aa-44} & 59 & 163 & 10 & 211 & 0.014 & 0.024 & 0.051 & 158 & 0.008 & 0.037 \\
\textsf{aa-45} & 80 & 386 & 20 & 1768 & 0.336 & 0.775 & 0.859 & 1716 & 0.244 & 0.796 \\
\textsf{aa-46} & 161 & 529 & 13 & 719 & 0.157 & 0.438 & 0.922 & 374 & 0.036 & 0.257 \\
\textsf{aa-47} & 62 & 229 & 14 & 572 & 0.057 & 0.082 & 0.138 & 451 & 0.051 & 0.127 \\
\textsf{aa-48} & 89 & 343 & 17 & 896 & 0.144 & 0.338 & 0.497 & 519 & 0.060 & 0.230 \\
\textsf{aa-49} & 26 & 62 & 5 & 50 & 0.004 & 0.002 & 0.003 & 74 & 0.006 & 0.013 \\
\textsf{aa-50} & 113 & 468 & 18 & 1272 & 0.322 & 0.778 & 1.098 & 1074 & 0.132 & 0.612 \\
\textsf{aa-51} & 78 & 274 & 11 & 429 & 0.035 & 0.082 & 0.174 & 250 & 0.020 & 0.078 \\
\textsf{aa-52} & 65 & 231 & 14 & 690 & 0.073 & 0.135 & 0.200 & 431 & 0.040 & 0.122 \\
\textsf{aa-53} & 88 & 232 & 12 & 340 & 0.036 & 0.186 & 0.162 & 199 & 0.011 & 0.052 \\
\textsf{aa-54} & 89 & 233 & 12 & 286 & 0.027 & 0.063 & 0.113 & 177 & 0.015 & 0.039 \\
        \bottomrule
    \end{tabularx}
}
    \caption{\label{tab:WH-all-aa}
        The runtimes (rounded to nearest thousandth-of-a-second) of the biclique-enumeration algorithms on the Afro-American subset of the Wernicke-H{\"u}ffner computational biology data~\cite{WERNICKE}.
    }
\end{table}

\begin{table}

{\small
    \begin{tabularx}{\textwidth}{@{\extracolsep{\fill}} cccc|cccc|ccc}
        \toprule
        $G$ & $n_B$ & $m$ & $n_O$ & $|M_I|$ & \iOCTMIB & \OCTMIB & \nonlex & $|M_B|$ & \OCTMICA & \MICA \\
        \midrule
\textsf{j-10} & 55 & 117 & 3 & 52 & 0.002 & 0.009 & 0.010 & 39 & 0.001 & 0.010 \\
\textsf{j-11} & 51 & 212 & 5 & 63 & 0.003 & 0.014 & 0.011 & 36 & 0.003 & 0.012 \\
\textsf{j-13} & 78 & 210 & 6 & 224 & 0.015 & 0.028 & 0.074 & 90 & 0.009 & 0.032 \\
\textsf{j-14} & 60 & 107 & 4 & 44 & 0.004 & 0.007 & 0.003 & 38 & 0.003 & 0.003 \\
\textsf{j-15} & 44 & 55 & 1 & 13 & 0.001 & 0.000 & 0.004 & 10 & 0.001 & 0.000 \\
\textsf{j-16} & 9 & 10 & 0 & 10 & 0.000 & 0.000 & 0.000 & 3 & 0.000 & 0.000 \\
\textsf{j-17} & 79 & 322 & 10 & 317 & 0.025 & 0.051 & 0.127 & 126 & 0.014 & 0.056 \\
\textsf{j-18} & 71 & 296 & 9 & 154 & 0.011 & 0.038 & 0.053 & 91 & 0.012 & 0.028 \\
\textsf{j-19} & 84 & 172 & 3 & 105 & 0.002 & 0.010 & 0.019 & 46 & 0.002 & 0.013 \\
\textsf{j-20} & 241 & 640 & 1 & 274 & 0.013 & 0.065 & 0.484 & 228 & 0.009 & 0.188 \\
\textsf{j-21} & 33 & 102 & 9 & 107 & 0.006 & 0.012 & 0.008 & 197 & 0.017 & 0.024 \\
\textsf{j-22} & 75 & 391 & 9 & 221 & 0.020 & 0.051 & 0.080 & 113 & 0.009 & 0.048 \\
\textsf{j-23} & 76 & 369 & 19 & 682 & 0.095 & 0.404 & 0.217 & 459 & 0.057 & 0.132 \\
\textsf{j-24} & 142 & 387 & 4 & 150 & 0.013 & 0.027 & 0.089 & 104 & 0.007 & 0.025 \\
\textsf{j-25} & 14 & 14 & 0 & 14 & 0.000 & 0.000 & 0.000 & 3 & 0.000 & 0.000 \\
\textsf{j-26} & 63 & 156 & 6 & 156 & 0.007 & 0.019 & 0.035 & 67 & 0.003 & 0.013 \\
\textsf{j-28} & 90 & 567 & 13 & 492 & 0.073 & 0.130 & 0.244 & 416 & 0.044 & 0.193 \\
        \bottomrule
    \end{tabularx}
}
    \caption{\label{tab:WH-all-j}
        The runtimes (rounded to nearest thousandth-of-a-second) of the biclique-enumeration algorithms on the Japanese subset of the Wernicke-H{\"u}ffner computational biology data~\cite{WERNICKE}.
    }
\end{table}

\end{document}